\newcommand{\argmax}{\mathop{\rm arg~max}\limits}
\newcommand{\argmin}{\mathop{\rm arg~min}\limits}
\theoremstyle{plain}
\newtheorem{theorem}{Theorem}[section]
\newtheorem{lemma}[theorem]{Lemma}
\newtheorem{assumption}[theorem]{Assumption}
\newtheorem{example}[theorem]{Example}
\newtheorem*{rep@theorem}{\rep@title}
\newcommand{\newreptheorem}[2]{%
	\newenvironment{rep#1}[1]{%
		\def\rep@title{#2 \ref{##1}}%
		\begin{rep@theorem}}%
		{\end{rep@theorem}}}
\definecolor{airforceblue}{rgb}{0.36, 0.54, 0.66}
\definecolor{aliceblue}{rgb}{0.94, 0.97, 1.0}
\definecolor{bluegray}{rgb}{0.4, 0.6, 0.8}
\definecolor{bluefill}{rgb}{0.5, 0.7, 0.9}
\definecolor{cornflowerblue}{rgb}{0.39, 0.58, 0.93}
\definecolor{coolblack}{rgb}{0.0, 0.18, 0.39}
\definecolor{coolblack2}{rgb}{0.05, 0.23, 0.49}
\definecolor{beaublue}{rgb}{0.74, 0.83, 0.9}
\definecolor{dmred300}{HTML}{FF617B}
\icmltitlerunning{Adaptively Perturbed Mirror Descent for Learning in Games}
\begin{document}

\twocolumn[
\icmltitle{Adaptively Perturbed Mirror Descent for Learning in Games}

\icmlsetsymbol{equal}{*}

\begin{icmlauthorlist}
\icmlauthor{Kenshi Abe}{ca,uec}
\icmlauthor{Kaito Ariu}{ca}
\icmlauthor{Mitsuki Sakamoto}{ca}
\icmlauthor{Atsushi Iwasaki}{uec}
\end{icmlauthorlist}

\icmlaffiliation{ca}{CyberAgent, Tokyo, Japan}
\icmlaffiliation{uec}{University of Electro-Communications, Tokyo, Japan}

\icmlcorrespondingauthor{Kenshi Abe}{abekenshi1224@gmail.com}

\icmlkeywords{Machine Learning, ICML}

\vskip 0.3in
]

\printAffiliationsAndNotice{}  %

\begin{abstract}
This paper proposes a payoff perturbation technique for the Mirror Descent (MD) algorithm in games where the gradient of the payoff functions is monotone in the strategy profile space, potentially containing additive noise. The optimistic family of learning algorithms, exemplified by optimistic MD, successfully achieves {\it last-iterate} convergence in scenarios devoid of noise, leading the dynamics to a Nash equilibrium. A recent re-emerging trend underscores the promise of the perturbation approach, where payoff functions are perturbed based on the distance from an anchoring, or {\it slingshot}, strategy. In response, we propose {\it Adaptively Perturbed MD} (APMD), which adjusts the magnitude of the perturbation by repeatedly updating the slingshot strategy at a predefined interval. This innovation empowers us to find a Nash equilibrium of the underlying game with guaranteed rates. Empirical demonstrations affirm that our algorithm exhibits significantly accelerated convergence.
\end{abstract}

\section{Introduction}
\label{sec:introduction}
This study delves into a variant of Mirror Descent (MD)~\citep{nemirovskij1983problem,beck2003mirror} in the context of monotone games whose gradient of the payoff functions exhibits monotonicity concerning the strategy profile space. This encompasses diverse games, including Cournot competition \citep{bravo2018bandit}, $\lambda$-cocoercive games \citep{lin2020finite}, concave-convex games, and zero-sum polymatrix games \citep{cai2011minmax,cai2016zero}. Due to their extensive applicability, various learning algorithms have been developed and scrutinized to compute a Nash equilibrium efficiently.

Agents, which are prescribed to play according to MD or its variant,  choose strategies with higher expected payoffs more likely but do not move far away from current strategies via regularization. The dynamics is known to converge to an equilibrium in an average sense, which is referred to as \textit{average-iterate convergence}. In other words, the averaged strategy profile over iterations converges to an equilibrium. Nevertheless, research has shown that the actual trajectory of the updated strategy profiles fails to converge even in two-player zero-sum games and a specific class within monotone games~\citep{mertikopoulos2018cycles,bailey2018multiplicative}.
On the contrary, optimistic learning algorithms, incorporating recency bias, have shown success. The updated strategy profile itself converges to a Nash equilibrium~\citep{daskalakis2017training,daskalakis2018last,mertikopoulos2018optimistic,wei2020linear}, termed {\it last-iterate convergence}.

However, the optimistic approach faces challenges with feedback contaminated by some noise. Typically, each agent updates his or her strategy according to the perfect gradient feedback of the payoff function at each iteration, denoted as \textit{full feedback}. In a more realistic scenario, noise might distort this feedback. With \textit{noisy feedback}, optimistic learning algorithms perform suboptimally. For instance, \citet{abe2022last} empirically demonstrated that optimistic Multiplicative Weights Update (OMWU) fails to converge to an equilibrium, orbiting around it.

Alternatively, perturbation of payoffs has emerged again as a pivotal concept for achieving last-iterate convergence, even under noise~\citep{abe2022last}. Payoff perturbation is a classical technique, as seen in \citet{facchinei2003finite}, and introduces strongly convex penalties to the players' payoff functions to stabilize learning. This leads to convergence to approximate equilibria, not only in the full feedback setting but also in the noisy feedback setting. However, to ensure convergence toward a Nash equilibrium of the underlying game, the magnitude of perturbation requires careful adjustment, which is calculated as the product of a strongly convex penalty function and a perturbation strength parameter.
In fact, \citet{liu2022power} 
shrink the perturbation strength based on the current strategy profile's proximity to an underlying equilibrium.
Similarly, iterative Tikhonov regularization methods~\citep{koshal2010single,tatarenko2019learning} adjust the magnitude of perturbation by using a sequence of perturbation strengths that satisfy certain conditions, such as diminishing as the iteration increases. Although these algorithms admit last-iterate convergence, it becomes challenging to choose an appropriate learning rate for the shrinking perturbation strength, which often leads to a failure in achieving rapid convergence for these algorithms and their variants.

In response to this, we adaptively determine the amount of the penalty from the distance $G(\cdot,\cdot)$ between the current strategy $\pi$ and an anchoring, or {\it slingshot} strategy $\sigma$, while maintaining the perturbation strength parameter $\mu$ constant. Instead of carefully decaying the perturbation strength, the slingshot strategies $\sigma$ are re-initialized at a predefined interval $T_\sigma$ by the current strategies, and thus the magnitude of the perturbation $\mu G(\cdot,\cdot)$ is adjusted. To the best of our knowledge, \citet{perolat2021poincare} were the first to employ this idea and enabled \citet{abe2022last} to modify MWU to achieve last-iterate convergence. However, they have established the convergence only in an asymptotic manner. The significance of our work, in part, lies in extending these two studies and establishing non-asymptotic convergence results.

Our contributions are manifold. First, we identify our algorithm as \textit{Adaptively Perturbed MD}\footnote{An implementation of our method is available at \url{https://github.com/CyberAgentAILab/adaptively-perturbed-md}} (APMD). Second, we analyze the case where both the perturbation function and the proximal regularizer are assumed to be the squared $\ell^2$-distance and provide the last-iterate convergence rates to a Nash equilibrium, $\mathcal{O}(\ln T/\sqrt{T})$ and $\mathcal{O}(\ln T/T^{\frac{1}{10}})$ with full and noisy feedback, respectively. We also discuss the case where different distances from the squared $\ell^2$-distance are utilized. Finally, we empirically reveal that our proposed APMD significantly outperforms MWU and OMWU in two zero-sum polymatrix games, regardless of the feedback type.

\section{Preliminaries}
\paragraph{Monotone games.}
This paper focuses on a continuous game, which is denoted by $([N], (\mathcal{X}_i)_{i\in [N]}, (v_i)_{i\in [N]})$.
$[N]=\{1, 2, \cdots, N\}$ represents the set of $N$ players, $\mathcal{X}_i\subseteq \mathbb{R}^{d_i}$ represents the $d_i$-dimensional compact convex strategy space for player $i\in [N]$, and we write $\mathcal{X}=\prod_{i\in [N]}\mathcal{X}_i$.
Each player $i$ chooses a {\it strategy} $\pi_i$ from $\mathcal{X}_i$ and aims to maximize her differentiable payoff function $v_i:\mathcal{X}\to \mathbb{R}$.
We write $\pi_{-i}\in \prod_{j\neq i}\mathcal{X}_i$ as the strategies of all players except player $i$, and denote the {\it strategy profile} by $\pi=(\pi_i)_{i\in [N]}\in \mathcal{X}$.
This study particularly considers a {\it smooth monotone game}, where the gradient $(\nabla_{\pi_i} v_i)_{i \in [N]}$ of the payoff functions is monotone: $\forall \pi, \pi'\in \mathcal{X},$
\begin{align}
\label{eq:monotone}
    \sum_{i=1}^N\langle \nabla_{\pi_i} v_i(\pi_i,\pi_{-i}) - \nabla_{\pi_i} v_i(\pi_i',\pi_{-i}'), \pi_i - \pi'_i\rangle  \leq 0,
\end{align}
and $L$-Lipschitz:
\begin{align}
\label{eq:smooth}
    \!\!\!\!\sum_{i=1}^N\!\|\nabla_{\pi_i} v_i(\pi_i, \pi_{-i}) \!-\! \nabla_{\pi_i} v_i(\pi_i', \pi_{-i}')\|^2 \!\leq \! L^2\|\pi \!-\! \pi'\|^2\!\!,
\end{align}
where $\|\cdot\|$ is the $\ell^2$-norm.

Monotone games include many common and well-studied classes of games, such as concave-convex games, zero-sum polymatrix games, and Cournot competition.
\begin{example}[Concave-Convex Games] 
\normalfont
Let us consider a max-min game $(\{1, 2\}, (\mathcal{X}_1, \mathcal{X}_2), (v, -v))$, where $v:\mathcal{X}_1\times \mathcal{X}_2\to \mathbb{R}$.
Player $1$ aims to maximize $v$, while player $2$ aims to minimize $v$.
If $v$ is concave in $x_1\in \mathcal{X}_1$ and convex in $x_2\in \mathcal{X}_2$, the game is called a concave-convex game or minimax optimization problem, and it is easy to confirm that the game is monotone.
\end{example}

\begin{example}[Zero-Sum Polymatrix Games] 
\normalfont
In a zero-sum polymatrix game, each player's payoff function can be decomposed as $v_i(\pi) = \sum_{j\neq i}u_i(\pi_i, \pi_j)$, where $u_i:\mathcal{X}_i\times \mathcal{X}_j\to \mathbb{R}$ is represented by $u_i(\pi_i, \pi_j) = \pi_i^{\top}\mathrm{M}^{(i,j)}\pi_j$ with some matrix $\mathrm{M}^{(i,j)}\in \mathbb{R}^{d_i\times d_j}$, and satisfies $u_i(\pi_i, \pi_j) = -u_j(\pi_j, \pi_i)$.
In this game, each player $i$ can be interpreted as playing a two-player zero-sum game with each other player $j\neq i$.
From the linearity and zero-sum property of $u_i$, we can easily show that $\sum_{i=1}^N\langle \nabla_{\pi_i} v_i(\pi_i,\pi_{-i}) - \nabla_{\pi_i} v_i(\pi_i',\pi_{-i}'), \pi_i - \pi'_i\rangle = 0$.
Thus, the zero-sum polymatrix game is a special case of monotone games.
\end{example}

\paragraph{Nash equilibrium and gap function.}
A {\it Nash equilibrium} \citep{nash1951non} is a common solution concept of a game, which is a strategy profile where no player can improve her payoff by deviating from her specified strategy.
Formally, a Nash equilibrium $\pi^{\ast}\in \mathcal{X}$ satisfies the following condition:
\begin{align*}
    \forall i\in [N], \forall \pi_i\in \mathcal{X}_i, ~v_i(\pi_i^{\ast}, \pi_{-i}^{\ast}) \geq v_i(\pi_i, \pi_{-i}^{\ast}).
\end{align*}
We denote the set of Nash equilibria by $\Pi^{\ast}$.
Note that a Nash equilibrium always exists for any smooth monotone game \citep{debreu1952social}.
Furthermore, we measure the proximity to Nash equilibrium for a given strategy profile $\pi$ by its {\it gap function}, which is defined as:
\begin{align*}
\mathrm{GAP}(\pi):=\max_{\tilde{\pi}\in \mathcal{X}}\sum_{i=1}^N\langle \nabla_{\pi_i}v_i(\pi_i, \pi_{-i}), \tilde{\pi}_i - \pi_i\rangle.
\end{align*}
The gap function is a standard metric of proximity to Nash equilibrium for a given strategy profile $\pi$ \citep{cai2023doubly}.
From the definition, $\mathrm{GAP}(\pi) \geq 0$ for any $\pi\in \mathcal{X}$, and the equality holds if and only if $\pi$ is a Nash equilibrium.

\paragraph{Problem setting.}
In this study, we consider the online learning setting where the following process is repeated for $T$ iterations: 1) At each iteration $t\geq 0$, each player $i\in [N]$ chooses her strategy $\pi_i^t\in \mathcal{X}_i$ based on the previously observed feedback; 2) Each player $i$ receives the gradient feedback $\widehat{\nabla}_{\pi_i}v_i(\pi_i^t, \pi_{-i}^t)$ as feedback.
This study considers two feedback models: {\it full feedback} and {\it noisy feedback}.
In the full feedback setting, each player receives the perfect gradient vector as feedback, i.e., $\widehat{\nabla}_{\pi_i}v_i(\pi_i^t, \pi_{-i}^t) = \nabla_{\pi_i}v_i(\pi_i^t, \pi_{-i}^t)$.
In the noisy feedback setting, each player's feedback is given by $\widehat{\nabla}_{\pi_i}v_i(\pi_i^t, \pi_{-i}^t) = \nabla_{\pi_i}v_i(\pi_i^t, \pi_{-i}^t) + \xi_i^t$, where $\xi_i^t\in \mathbb{R}^{d_i}$ is a noise vector.
Specifically, we focus on the zero-mean and bounded-variance noise vectors.

\paragraph{Mirror Descent.}
Mirror Descent (MD) is a widely used algorithm for learning equilibria in games.
Let us define $\psi: \mathbb{R}^{d_i} \to \mathbb{R}\cup \{\infty\}$ as the strictly convex {\it regularization function} and $D_{\psi}(\pi_i, \pi_i')=\psi(\pi_i) - \psi(\pi_i') - \langle \nabla \psi(\pi_i'), \pi-\pi_i'\rangle$ as the associated {Bregman divergence}.
Then, MD updates each player $i$'s strategy $\pi_i^t$ at iteration $t$ as follows:
\begin{align*}
\pi_i^{t+1} &= \argmax_{x\in \mathcal{X}_i} \left\{\eta_t\left\langle \widehat{\nabla}_{\pi_i}v_i(\pi^t), x\right\rangle - D_{\psi}(x, \pi_i^t)\right\},
\end{align*}
where $\eta_t\in (0, \infty)$ is the learning rate at iteration $t$.

\begin{figure}[t!]
\begin{algorithm}[H]
    \caption{APMD for player $i$.}
    \label{alg:mdsp}
    \begin{algorithmic}[1]
    \REQUIRE{Learning rate sequence $\{\eta_t\}_{t \ge 0}$, divergence function for perturbation $G$, perturbation strength $\mu$, update interval $T_{\sigma}$, initial strategy $\pi_i^0$}
    \STATE $k\gets 0, ~\tau \gets 0$
    \STATE $\sigma_i^0 \gets \pi_i^0$
    \FOR{$t=0,1,2,\cdots, T$}
        \STATE Receive the gradient feedback $\widehat{\nabla}_{\pi_i}v_i(\pi^t)$
        \STATE Update the strategy by
        $$\!\!\!\!\!\pi_i^{t+1} \!=\! \argmax_{x\in \mathcal{X}_i} \!\bigg\{\!\eta_t\left\langle \widehat{\nabla}_{\pi_i}v_i(\pi^t) \!-\! \mu \nabla_{\pi_i}G(\pi_i^t, \sigma_i^k), x\!\right\rangle$$
        $$\!\!\!\!\!\!\!\!\!\!\!\! - D_{\psi}(x, \pi_i^t)\bigg\}$$
        \STATE $\tau \gets \tau + 1$
        \IF{$\tau=T_{\sigma}$}
            \STATE $k\gets k+1, ~\tau\gets 0$
            \STATE $\sigma_i^k\gets \pi_i^{t+1}$
        \ENDIF
    \ENDFOR
    \end{algorithmic}
\end{algorithm}
\end{figure}

\paragraph{Other notations.}
We denote a $d$-dimensional probability simplex by $\Delta^{d}=\{p\in [0,1]^d ~|~ \sum_{j=1}^dp_j = 1\}$.
We define $\mathrm{diam}(\mathcal{X}):= \sup_{\pi, \pi'\in \mathcal{X}}\|\pi - \pi'\|$ as the diameter of $\mathcal{X}$. 
The {\it Kullback-Leibler} (KL) divergence is defined by $\mathrm{KL}(\pi_i, \pi_i') = \sum_{j=1}^{d_i}\pi_{ij}\ln \frac{\pi_{ij}}{\pi_{ij}'}$.
Besides, with a slight abuse of notation, we represent the sum of Bregman divergences and the sum of KL divergences by $D_{\psi}(\pi, \pi')=\sum_{i=1}^N D_{\psi}(\pi_i, \pi_i')$, and $\mathrm{KL}(\pi, \pi')=\sum_{i=1}^N \mathrm{KL}(\pi_i, \pi_i')$, respectively.
We finally denote the domain of $\psi$ by $\mathrm{dom~}\psi := \{x: \psi(x) < \infty\}$, and corresponding interior by $\mathrm{int}(\mathrm{dom}~\psi)$.

\section{Adaptively Perturbed Mirror Descent}
In this section, we present Adaptively Perturbed Mirror Descent (APMD), which is an extension of the standard MD algorithms. Algorithm~\ref{alg:mdsp} describes the pseudo-code. APMD employs two pivotal techniques: {\it slingshot payoff perturbation} and {\it slingshot strategy update}. Each of them corresponds to line 5 and line 9 in Algorithm~\ref{alg:mdsp}, respectively.

\begin{figure}[t!]
    \centering
    \includegraphics[width=.4\textwidth]{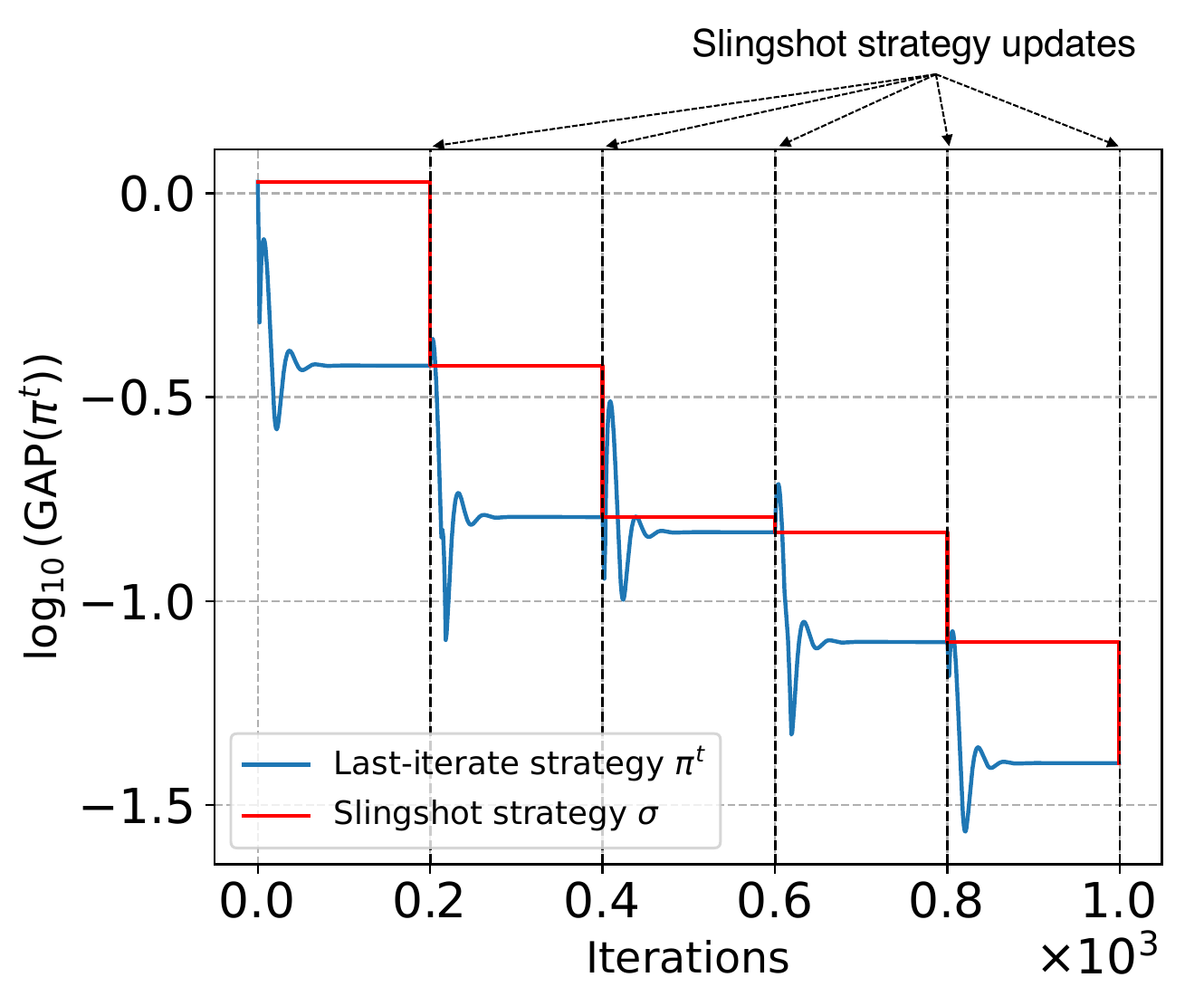}
    \caption{
        Illustration of the impact of the slingshot strategy updates on the gap function for $\pi^t$ updated by APMD.
    }
    \label{fig:exploitability_compare_pi_sigma}
\end{figure}

\subsection{Slingshot Payoff Perturbation}
Letting us define the differentiable divergence function $G(\cdot, \cdot): \mathbb{R}^{d_i}\times \mathbb{R}^{d_i} \to \mathbb{R}\cup \{\infty\}$ and the {\it slingshot} strategy $\sigma_i\in \mathcal{X}_i$, APMD perturbs each player's payoff by the divergence between the current strategy $\pi_i^t$ and the slingshot strategy $\sigma_i$, i.e., $G(\pi_i^t, \sigma_i)$. 
Specifically, APMD updates each player's strategy according to 
\begin{align}
    \label{eq:gm-md}
    \pi_i^{t+1} &= \argmax_{x\in \mathcal{X}_i} \bigg\{\eta_t\left\langle \widehat{\nabla}_{\pi_i}v_i(\pi^t) - \mu \nabla_{\pi_i}G(\pi_i^t, \sigma_i), x\right\rangle \nonumber\\
    &\phantom{\argmax_{x\in \mathcal{X}_i}==} - D_{\psi}(x, \pi_i^t)\bigg\}.
\end{align}
where $\mu \in (0, \infty)$ is the {\it perturbation strength} and $\nabla_{\pi_i}G$ denotes differentiation with respect to first argument.
We assume that $G(\cdot, \sigma_i)$ is strictly convex for every $\sigma_i\in \mathcal{X}_i$, and takes a minimum value of $0$ at $\sigma_i$.
Furthermore, we assume that $\psi$ is differentiable and $\rho$-strongly convex on $\mathcal{X}_i$ with $\rho\in (0, \infty)$.

The conventional MD updates its strategy based on the gradient feedback of the payoff function and the regularization term. The regularization term adjusts the next strategy so that it does not deviate significantly from the current strategy.
APMD perturbs the gradient payoff vector by the divergence between the current strategy $\pi_i^t$ and a predefined slingshot strategy $\sigma_i$. 
If the current strategy is far away from the slingshot strategy, the magnitude of the perturbation increases. 
Note that if both strategies are equivalent, no perturbation occurs, and APMD just seeks a strategy with a higher expected payoff. 
As Figure~\ref{fig:exploitability_compare_pi_sigma} illustrates, the divergence first fluctuates, and then the current strategy profile comes close to a stationary point where the gradient of the expected payoff vector is equal to the gradient of the magnitude of the perturbation so that the perturbation via slingshot stabilizes the learning dynamics.
Indeed, Mutant Follow the Regularized Leader (Mutant FTRL) instantiated in Example \ref{exm:mftrl} encompasses replicator-mutator dynamics, which is guaranteed to converge to an approximate equilibrium in two-player zero-sum games~\citep{abe2022mutationdriven}. We can argue that APMD inherits this nice feature. 

\subsection{Slingshot Strategy Update}
The perturbation via slingshot enables $\pi^t$ to converge quickly to a stationary point (Lemmas~\ref{lem:md_sp_full} and \ref{lem:md_sp_noisy}). Different slingshot strategy profiles induce different stationary points. Of course, when the slingshot strategy profile is set to be an equilibrium, the corresponding stationary point also becomes an equilibrium. However, it is virtually impossible to identify such an ideal slingshot strategy profile beforehand.
To this end, APMD adjusts a slingshot strategy profile $\sigma$ by replacing it with the (nearly) stationary point that is reached after predefined iterations $T_{\sigma}$.

Figure~\ref{fig:exploitability_compare_pi_sigma} illustrates how our slingshot strategy update brings the corresponding stationary points closer to an equilibrium.
x- and y-axis indicate the number of iterations and the logarithm of the gap function of the last-iterate strategy profile.
We here assume that the learning rate and the perturbation strength are $\eta=0.1$ and $\mu=1$, respectively.
The initial slingshot strategy $\sigma_i$ is given as a uniform distribution on the action space.
After the first interval of $T_{\sigma}=200$ iterations, APMD finds a stationary point. The slingshot strategy profile for the second interval is replaced with the stationary point. 
Figure~\ref{fig:exploitability_compare_pi_sigma} clearly shows the stationary point, i.e., the last-iterate strategy profile comes close to an equilibrium every time the slingshot strategy profile is updated. 
We also theoretically justify our slingshot strategy update in Theorem \ref{thm:exploitability_upper_bound}, i.e., when the slingshot strategy profile is close to an equilibrium $\pi^{\ast}\in \Pi^{\ast}$, the stationary point is close to $\pi^{\ast}$.

\section{Last-Iterate Convergence Rates}
\label{sec:last_iterate_convergence_rates}
In this section, we establish the last-iterate convergence rates of APMD.
More specifically, we examine a setting where both $D_{\psi}$ and $G$ is set to the squared $\ell^2$-distance, i.e., $D_{\psi}(\pi_i, \pi_i') = G(\pi_i, \pi_i') = \frac{1}{2}\|\pi_i - \pi_i'\|^2$.
This instance can be considered as an extended version of Gradient Descent, which incorporates our techniques of payoff perturbation and slingshot strategy update.
We also assume that the gradient vector of $v_i$ is bounded.
We emphasize that we have obtained the overall last-iterative convergence rates of APMD for the entire $T$ iterations in both full and noisy feedback settings.

\subsection{Full Feedback Setting}
First, we demonstrate the last-iterate convergence rate of APMD with {\it full feedback} where each player receives the perfect gradient vector $\widehat{\nabla}_{\pi_i}v_i(\pi_i^t, \pi_{-i}^t)=\nabla_{\pi_i}v_i(\pi_i^t, \pi_{-i}^t)$, at each iteration $t$.
Theorem~\ref{thm:lic_rate_full} provides the APMD's convergence rate of $\mathcal{O}(\ln T/\sqrt{T})$ in terms of the gap function. Note that the learning rate is constant, and its upper bound is specified by perturbation strength $\mu$ and smoothness parameter $L$.
\begin{theorem}
\label{thm:lic_rate_full}
If we use the constant learning rate $\eta_t = \eta \in (0, \frac{2\mu}{3\mu^2 + 8L^2})$, and set $D_{\psi}$ and $G$ as the squared $\ell^2$-distance $D_{\psi}(\pi_i, \pi_i') = G(\pi_i, \pi_i') = \|\pi_i - \pi_i'\|^2/2$, and set $T_{\sigma} = \Theta(\ln T)$, then the strategy profile $\pi^T$ updated by APMD satisfies:
\begin{align*}
    \mathrm{GAP}(\pi^T) &= \mathcal{O}\left(\frac{\ln T}{\sqrt{T}}\right).
\end{align*}
\end{theorem}
The obtained rate herein is competitive with optimistic gradient and extra-gradient methods~\citep{cai2022finite} whose rates are $\mathcal{O}(1/\sqrt{T})$. Although it is open whether our convergence rate matches its lower bound, it closely aligns with the lower bound for the class of algorithms that includes $1$-SCLI algorithms~\citep{golowich2020last}, which is different from our APMD. 
To the best of our knowledge, the fastest rate of $\mathcal{O}(1/T)$ is achieved by \textit{Accelerated Optimistic Gradient} (AOG) \citep{cai2023doubly}, which is an optimistic variant of the \textit{Halpern iteration} \citep{bams/1183529119}. 
At first sight, the update rule of AOG looks as if the perturbation strength was linearly decayed. However, it does not perturb the payoff functions and instead adjusts the regularization term by using the convex combination of the current strategy and the anchoring strategy as the proximal point in MD. Unlike our APMD, the anchoring strategy is never updated through the iterations.\footnote{Regarding the rate of $\mathcal{O}(1/T)$, a companion paper is in preparation.}

\subsection{Proof Sketch of Theorem~\ref{thm:lic_rate_full}}
This section sketches the proof for Theorem \ref{thm:lic_rate_full}.
We present the complete proofs for the theorem and associated lemmas in Appendix~\ref{sec:appx_exact_nash_conv}.

\paragraph{(1) Convergence rates to a stationary point with $k$-th slingshot strategy profile.}
We denote $\sigma^k$ as the slingshot strategy profile after $k$ updates.
Since the slingshot strategy profile is overwritten by the current strategy profile $\pi^t$ every $T_{\sigma}$ iterations, we can write $\sigma^k = \pi^{kT_{\sigma}}$.
We first prove that, as $T_{\sigma}$ increases, the next $k+1$-th slingshot strategy profile $\sigma^{k+1}$ approaches to the stationary point $\pi^{\mu, \sigma^k}$ under the slingshot strategy profile $\sigma^k$, which satisfies the following condition: $\forall i\in [N]$,
\begin{align}
\label{eq:perturbed_nash}
     \pi_i^{\mu,\sigma^k} &= \argmax_{\pi_i\in \mathcal{X}_i} \left\{ v_i(\pi_i, \pi_{-i}^{\mu,\sigma^k}) - \mu G(\pi_i, \sigma_i^k)\right\}.
\end{align}
Note that $\pi^{\mu,\sigma^k}$ always exists since the perturbed game is still monotone.
Using the strong convexity of $G(\pi_i, \sigma_i^k) = \frac{1}{2}\|\pi_i - \sigma_i^k\|^2$, we show that $\sigma^{k+1}\to \pi^{\mu, \sigma^k}$ as $T_{\sigma}\to \infty$:
\begin{lemma}
\label{lem:gd_sp_full}
Assume that $D_{\psi}$ and $G$ are set as the squared $\ell^2$-distance.
If we use the constant learning rate $\eta_t = \eta \in (0, \frac{2\mu}{3\mu^2 + 8L^2})$, the $k+1$-th slingshot strategy profile $\sigma^{k+1}$ of APMD under the full feedback setting satisfies that:
\begin{align*}
    \|\pi^{\mu,\sigma^k} -  \sigma^{k+1}\|^2 = \|\pi^{\mu,\sigma^k} - \sigma^k\|^2\cdot \exp\left(-\mathcal{O}(T_{\sigma})\right).
\end{align*}
\end{lemma}

\paragraph{(2) Upper bound on the gap function.}
Next, we derive the upper bound on the gap function for $\sigma^{k+1}$.
From the bounding technique of the gap function using the tangent residuals by \citet{cai2022finite} and the first-order optimality condition for $\pi^{\mu, \sigma^k}$, the gap function for $\sigma^{k+1}$ can be upper bounded by the distance between the slingshot strategy profile and the stationary point $\pi^{\mu, \sigma^k}$:
\begin{lemma}
\label{lem:gap_fn_slingshot_strategy}
If $G$ is set as the squared $\ell^2$-distance, the gap function for $\sigma^{k+1}$ of APMD satisfies for $k\geq 0$:
\begin{align*}
\mathrm{GAP}(\sigma^{k+1}) = \mathcal{O}\left( \|\pi^{\mu, \sigma^k} - \sigma^{k+1}\| + \|\pi^{\mu, \sigma^k} - \sigma^k\|\right).
\end{align*}
\end{lemma}
By Lemma \ref{lem:gd_sp_full}, if we set $D_{\psi}$ as the squared $\ell^2$-distance, the first term in this lemma can be bounded as:
\begin{align}
\label{eq:gm_gd_full}
\|\pi^{\mu,\sigma^k} - \sigma^{k+1}\|^2 = \|\pi^{\mu,\sigma^k} -  \sigma^k\|^2\exp\left(-\mathcal{O}(T_{\sigma})\right).
\end{align}
Therefore, it is enough to derive the convergence rate on the $\ell^2$-distance between $\pi^{\mu, \sigma^k}$ and $\sigma^k$.

\paragraph{(3) Last-iterate convergence results for the slingshot strategy profile.}
Let us denote $K:= \lfloor T / T_{\sigma}\rfloor$ as the total number of the slingshot strategy updates over the entire $T$ iterations.
Then, by adjusting $T_{\sigma} = \Omega(\ln T)$, we show that the $\ell^2$-distance between the $K-1$-th slingshot strategy profile $\sigma^{K-1}$ and the corresponding stationary point $\pi^{\mu, \sigma^{K-1}}$ decreases as $K$ increases:
\begin{lemma}
\label{lem:slingshot_diff_full}
In the same setup of Theorem \ref{thm:lic_rate_full}, the $K-1$-th slingshot strategy profile $\sigma^{K-1}$ of APMD satisfies:
\begin{align*}
\|\pi^{\mu,\sigma^{K-1}} - \sigma^{K-1}\| = \mathcal{O}(1/\sqrt{K}).
\end{align*}
\end{lemma}
Lemma \ref{lem:slingshot_diff_full} implies that as $K$ increases, the variation of $\sigma^K$ becomes negligible, signifying convergence in its behavior.

By combining \eqref{eq:gm_gd_full} and Lemmas \ref{lem:gap_fn_slingshot_strategy}-\ref{lem:slingshot_diff_full}, we can derive the last-iterate convergence rate of the slingshot strategy profile $\sigma^K$:
\begin{align*}
\mathrm{GAP}(\sigma^K) = \mathcal{O}(1/\sqrt{K}).
\end{align*}
Thus, since $\pi^T = \sigma^K$ and $K=\lfloor T / T_{\sigma}\rfloor= \Theta(T / \ln T)$, the statement of the theorem is concluded. \qed

\subsection{Noisy Feedback Setting}
Next, we consider the {\it noisy feedback} setting, where each player $i$ receives a gradient vector with additive noise: $\nabla_{\pi_i}v_i(\pi_i^t, \pi_{-i}^t) + \xi^t_i$.
Define the sigma-algebra generated by the history of the observations:
$\mathcal{F}_t = \sigma\left((\widehat{\nabla}_{\pi_i}v_i(\pi_i^0, \pi_{-i}^0))_{i \in [N]}, \ldots, ( \widehat{\nabla}_{\pi_i}v_i(\pi_i^{t-1}, \pi_{-i}^{t-1}))_{i \in [N]}\right)$, $\forall t\ge1$.
We assume that the noise vectors $(\xi_i^t)_{t\geq 1}$ are with zero-mean and bounded variances.
We also suppose that the noise vectors $(\xi_i^t)_{t\geq 1}$ are independent over $t$.
In this setting, the last-iterate convergence rate is achieved by APMD using a decreasing learning rate sequence $\eta_t$.
The convergence rate obtained by APMD is $\mathcal{O}(\ln T/T^{\frac{1}{10}})$:
\begin{theorem}
\label{thm:lic_rate_noisy}
Let $\theta =  \frac{3\mu^2 + 8 L^2}{2 \mu} $ and $\kappa = \frac{\mu}{2}$.
Assume that $D_{\psi}$ and $G$ are set as the squared $\ell^2$-distance $D_{\psi}(\pi_i, \pi_i') = G(\pi_i, \pi_i') = \frac{1}{2}\|\pi_i - \pi_i'\|^2$, and $T_{\sigma} = \Theta(T^{4/5})$.
If we choose the learning rate sequence of the form $\eta_t = 1/(\kappa (t - T_{\sigma} \cdot \lfloor t / T_{\sigma} \rfloor) + 2\theta)$, then the strategy profile $\pi^T$ updated by APMD satisfies:
\begin{align*}
    \mathbb{E}\left[\mathrm{GAP}(\pi^T)\right] &= \mathcal{O}\left(\frac{\ln T}{T^{\frac{1}{10}}}\right).
\end{align*}
\end{theorem}
It should be noted that Theorem \ref{thm:lic_rate_noisy} provides a non-asymptotic convergence guarantee with a rate.
This is a significant departure from the existing convergence results \citep{koshal2010single,koshal2013regularized,tatarenko2019learning}, which focus on the asymptotic convergence of iterative Tikhonov regularization methods in the noisy or bandit feedback settings.

\subsection{Proof Sketch of Theorem \ref{thm:lic_rate_noisy}}
As in Theorem \ref{thm:lic_rate_full}, we first derive the convergence rate of $\sigma^{k+1}$ for the noisy feedback setting:
\begin{lemma}
\label{lem:gd_sp_noisy}
Let $\theta =  \frac{3\mu^2 + 8L^2}{2 \mu} $ and $\kappa = \frac{\mu}{2}$.
Suppose that both $D_{\psi}$ and $G$ are defined as the squared $\ell^2$-distance.
Under the noisy feedback setting, if we use the learning rate sequence of the form $\eta_t = 1/(\kappa (t - T_{\sigma} \cdot \lfloor t / T_{\sigma}\rfloor) + 2\theta)$, the $k+1$-th slingshot strategy profile $\sigma^{k+1}$ of APMD for each $k\geq 0$ satisfies that:
\begin{align*}
     &\mathbb{E}\left[\|\pi^{\mu,\sigma^k} - \sigma^{k+1}\|^2\right] = \mathcal{O}\left(\frac{\ln T_{\sigma}}{T_{\sigma}}\right).
\end{align*}
\end{lemma}
The proof is given in Appendix~\ref{sec:appx_gd_sp_noisy} and is based on the standard argument of stochastic optimization, e.g., \citet{nedic2014stochastic}.
However, the proof is made possible by taking into account the monotonicity of the game and the relative (strong and smooth) convexity of the divergence function.

Next, in a similar manner for Lemma \ref{lem:slingshot_diff_full}, we show the upper bound on the expected distance between $\pi^{\mu, \sigma^{K-1}}$ and $\sigma^{K-1}$ under the noisy feedback setting:
\begin{lemma}
In the same setup of Theorem \ref{thm:lic_rate_noisy}, the $K-1$-th slingshot strategy profile $\sigma^{K-1}$ satisfies:
\label{lem:slingshot_diff_noisy}
\begin{align*}
\mathbb{E}\left[\|\pi^{\mu,\sigma^{K-1}} - \sigma^{K-1}\|\right] = \mathcal{O}\left(\frac{\ln K}{\sqrt{K}}\right).
\end{align*}
\end{lemma}

We note that Lemma~\ref{lem:gap_fn_slingshot_strategy} holds for any combination of $\sigma^k$ and $\sigma^{k+1}$, regardless of the presence of noise.
By combining Lemmas \ref{lem:gap_fn_slingshot_strategy}, \ref{lem:gd_sp_noisy}, and \ref{lem:slingshot_diff_noisy}, we can derive the following last-iterate convergence rate of $\sigma^K$ in terms of the gap function:
\begin{align*}
    \mathbb{E}\left[\mathrm{GAP}(\sigma^K)\right] = \mathcal{O}\left(\frac{\ln K}{\sqrt{K}}\right).
\end{align*}
By setting $K=\lfloor T/T_{\sigma} \rfloor = \Theta(T^{1/5})$, we conclude the statement of the theorem. \qed

\section{Beyond Squared $\ell^2$-Payoff Perturbation}
\label{sec:discussion}

Section \ref{sec:last_iterate_convergence_rates} assumes that both $G$ and $D_{\psi}$ are the squared $\ell^2$-distance.
This section considers more general choices of $G$ and $D_{\psi}$.

\subsection{Instantiation of Payoff-Perturbed Algorithms}
First, we would like to emphasize that choosing appropriate combinations of $G$ and $D_{\psi}$ enables APMD to reproduce some existing learning algorithms that incorporate payoff perturbation.
For example, the following learning algorithms can be viewed as instantiations of APMD.

\begin{example}[Boltzmann Q-Learning \citep{Tuyls2006}]
\label{exm:boltzman_q}
\normalfont
Assume that $\mathcal{X}_i=\Delta^{d_i}$, the regularize is entropy: $\psi(\pi_i)=\sum_{j=1}^{d_i}\pi_{ij}\ln \pi_{ij}$.
Let us set $G$ as the KL divergence and the slingshot strategy $\sigma_i$ as a uniform distribution, i.e., $G(\pi_i, \sigma_i) = \mathrm{KL}(\pi_i, \sigma_i)$ and $\sigma_i = (1/d_i)_{j \in [d_i]}$.
Then, the corresponding continuous-time APMD dynamics can be expressed as:
\begin{align*}
    \frac{d}{dt}\pi_{ij}^t &= \pi_{ij}^t\left( q_{ij}^{\pi^t} - \sum_{k=1}^{d_i}\pi_{ik}^tq_{ik}^{\pi^t}\right) \\
    &\phantom{=} - \mu \pi_{ij}^t\!\left(\ln \pi_{ij}^t - \sum_{k=1}^{d_i}\pi_{ik}^t\ln \pi_{ik}^t\right),
\end{align*}
which is equivalent to Boltzman Q-learning \citep{Tuyls2006, Bloembergen2015}.
\end{example}

\begin{example}[Reward transformed FTRL \citep{perolat2021poincare}]
\label{exm:reward_transformed}
\normalfont
Consider the continuous-time APMD dynamics where $N=2$, $\mathcal{X}_i=\Delta^{d_i}$, $\psi$ is Legendre \citep{rockafellar1997convex,lattimore2020bandit} with $\mathrm{dom}~\psi\subseteq \mathcal{X}_i$, and $G(\pi_i, \sigma_i) = \mathrm{KL}(\pi_i, \sigma_i)$.
Then, APMD dynamics can be described as follows:
\begin{align*}
    \pi_{ij}^t &= \argmax_{x\in \Delta^{d_i}} \left\{ \int_0^t\left(\sum_{k=1}^{d_i}x_{k}q_{ik}^{\pi^s} - \mu\sum_{k=1}^{d_i}x_{k} \ln \frac{\pi_{ik}^s}{\sigma_{ik}}\right.\right. \\
    &\left.\left. \phantom{======} + \mu\sum_{k=1}^{d_{-i}}\pi_{-ik}^s \ln \frac{\pi_{-ik}^s}{\sigma_{-ik}}\right)ds - \psi(x)\right\}.
\end{align*}
This algorithm is equivalent to FTRL with reward transformation \citep{perolat2021poincare}.
\end{example}

\begin{example}[Mutant FTRL \citep{abe2022mutationdriven}]
\label{exm:mftrl}
\normalfont
Let us define $\mathcal{X}_i=\Delta^{d_i}$, and assume that the regularizer $\psi$ is  Legendre with $\mathrm{dom}~\psi\subseteq \mathcal{X}_i$.
If we set $G$ as the reverse KL divergence, i.e., $G(\pi_i, \sigma_i) = \mathrm{KL}(\sigma_i, \pi_i) =  \sum_{j=1}^{d_i}\sigma_{ij}\ln \frac{\sigma_{ij}}{\pi_{ij}}$, we can rewrite \eqref{eq:gm-md} as:
\begin{align*}
    \pi_i^{t+1} \!\!=\! \argmax_{x\in \Delta^{d_i}}\!\Bigg\{\!\! \sum_{s=0}^t \eta \!\sum_{j=1}^{d_i} \!x_j\!\!\left(\!q_{ij}^{\pi^s} \!\!+\! \frac{\mu}{\pi_{ij}^s}\!\left(\sigma_{ij} \!-\! \pi_{ij}^s\right)\!\!\right) \!\!-\! \psi(x)\!\Bigg\}\!,
\end{align*}
where $q_{ij}^{\pi^s} = (\widehat{\nabla}_{\pi_i} v_i(\pi_i^s, \pi_{-i}^s))_j$.
This algorithm is equivalent to Mutant FTRL \citep{abe2022mutationdriven}.
\end{example}

\subsection{Convergence Results with General $G$ and $D_{\psi}$}
Next, we establish the convergence results for APMD with general combinations of $G$ and $D_{\psi}$.
For theoretical analysis, we assume a specific condition on $G$:
\begin{assumption}
\label{asm:rel_smooth}
$G(\cdot, \sigma_i)$ is differentiable over $\mathrm{int}(\mathrm{dom}~\psi)$.
Moreover, $G(\cdot, \sigma_i)$ is $\beta$-smooth and $\gamma$-strongly convex relative to $\psi$, i.e., for any $\pi_i, \pi_i'\in \mathrm{int}(\mathrm{dom}~\psi)$, $\gamma D_{\psi}(\pi_i', \pi_i) \leq G(\pi_i', \sigma_i) - G(\pi_i, \sigma_i) - \langle \nabla_{\pi_i}G(\pi_i, \sigma_i), \pi_i' - \pi_i\rangle\leq  \beta D_{\psi}(\pi_i', \pi_i)$ holds.
\end{assumption}

Note that these assumptions are always satisfied with $\beta=\gamma=1$ whenever $G$ is identical to $D_{\psi}$; thus, these are not strong assumptions. 
We also assume that $\pi^t$ is well-defined over iterations:
\begin{assumption}
\label{asm:well_defined}
$\pi^t$ updated by APMD satisfies $\pi^t\in \mathrm{int}(\mathrm{dom}~\psi)$ for any $t\in \{0, 1, \cdots, T\}$.
\end{assumption}

Using Assumptions \ref{asm:rel_smooth} and \ref{asm:well_defined}, we derive the convergence rate to $\pi^{\mu, \sigma^k}$ in \eqref{eq:perturbed_nash}:
\begin{lemma}
\label{lem:md_sp_full}
Suppose that Assumptions \ref{asm:rel_smooth} with $\beta,\gamma \in (0, \infty)$ and \ref{asm:well_defined} hold.
If we use the constant learning rate $\eta_t = \eta \in (0, \frac{2\mu \gamma \rho^2}{\mu^2 \gamma \rho^2 (\gamma + 2\beta) + 8L^2})$, the $k+1$-th slingshot strategy profile $\sigma^{k+1}$ of APMD under the full feedback setting satisfies that:
\begin{align*}
    D_{\psi}(\pi^{\mu,\sigma^k}, \sigma^{k+1}) = D_{\psi}(\pi^{\mu,\sigma^k}, \sigma^k)\exp\left(-\mathcal{O}(T_{\sigma})\right).
\end{align*}
\end{lemma}
\begin{lemma}
\label{lem:md_sp_noisy}
Let $\theta =  \frac{\mu^2 \gamma \rho^2 (\gamma + 2\beta) + 8 L^2}{2 \mu \gamma \rho^2} $ and $\kappa = \frac{\mu\gamma}{2}$.
Suppose that Assumptions \ref{asm:rel_smooth} with $\beta,\gamma \in (0, \infty)$ and \ref{asm:well_defined} hold, and the learning rate sequence of the form $\eta_t = 1/(\kappa (t - T_{\sigma} \cdot \lfloor t / T_{\sigma}\rfloor) + 2\theta)$ is used.
Then, the $k+1$-th slingshot strategy profile $\sigma^{k+1}$ of APMD under the noisy feedback setting satisfies that:
\begin{align*}
     &\mathbb{E}\left[D_{\psi}(\pi^{\mu,\sigma^k}, \sigma^{k+1})\right] = \mathcal{O}\left(\frac{\ln T_{\sigma}}{T_{\sigma}}\right).
\end{align*}
\end{lemma}

\begin{figure*}[t!]
    \centering
    \includegraphics[width=1.0\textwidth]{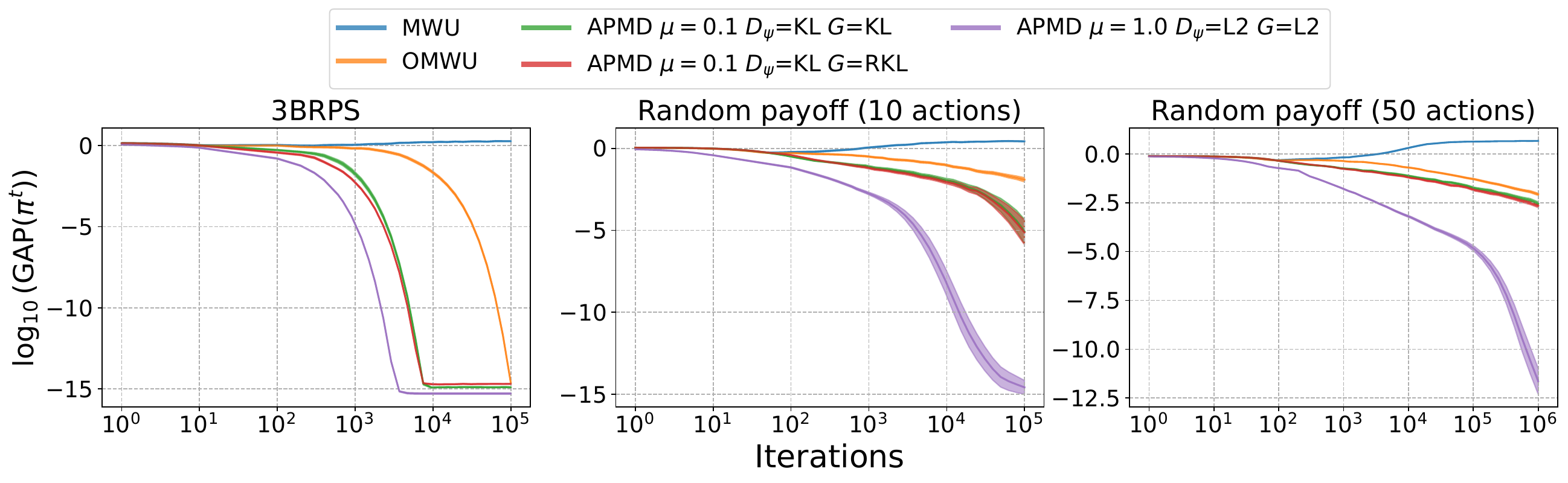}
    \caption{
    The gap function for $\pi^t$ for APMD, MWU, and OMWU with full feedback.
    The shaded area represents the standard errors. Note that the KL divergence, reverse KL divergence, and squared $\ell^2$-distance are abbreviated to KL, RKL, and L2, respectively.
    }
    \label{fig:exploitability_full}
\end{figure*}
\begin{figure*}[t!]
    \centering
    \includegraphics[width=1.0\textwidth]{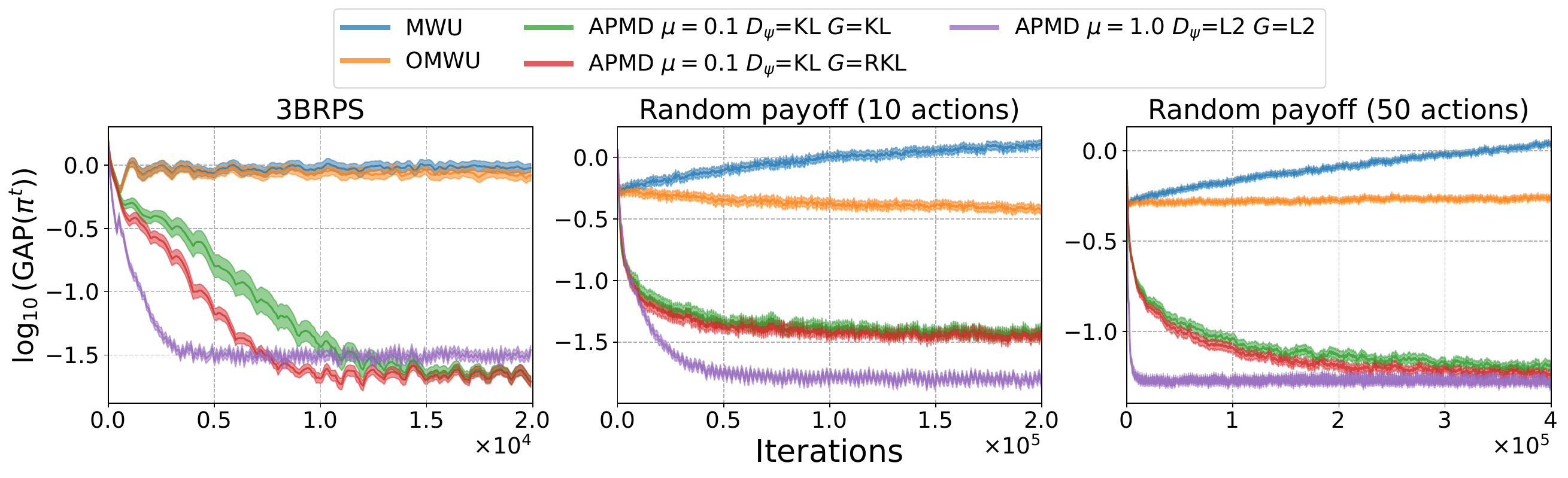}
    \caption{
    The gap function for $\pi^t$ for APMD, MWU, and OMWU with noisy feedback.
    }
    \label{fig:exploitability_noisy}
\end{figure*}

These lemmas imply that $\sigma^{k+1}\to \pi^{\mu, \sigma^k}$ as $T_{\sigma} \to \infty$.
Therefore, when $T_{\sigma}$ is sufficiently large, $k+1$-th slingshot strategy profile becomes almost equivalent to the stationary point $\pi^{\mu, \sigma^k}$.
From this, we anticipate that $\mathbb{E}[\mathrm{GAP}(\sigma^k)]\to 0$ as $k\to \infty$ even in the noisy feedback setting.
The subsequent theorems provide asymptotic last-iterate convergence results for this ideal scenario.
In particular, we show that the slingshot strategy profile $\sigma^{k+1}$ converges to equilibrium when using the following divergence functions as $G$: 1) Bregman divergence; 2) $\alpha$-divergence; 3) R\'{e}nyi-divergence; 4) reverse KL divergence.
\begin{theorem}
\label{thm:exact_conv_bregman}
Assume that $G$ is a Bregman divergence $D_{\psi'}$ for some strongly convex function $\psi'$, and $\sigma^{k+1}=\pi^{\mu, \sigma^k}$ for $k\geq 0$.
Then, there exists $\pi^{\ast}\in \Pi^{\ast}$ such that $\sigma^k\to \pi^{\ast}$ as $k\to \infty$.
\end{theorem}

\begin{theorem}
\label{thm:exact_conv_alpha}
Let us define $\mathcal{X}_i = \Delta^{d_i}$.
Assume that $\sigma^{k+1}=\pi^{\mu, \sigma^k}$ for $k\geq 0$, and $G$ is one of the following divergence: 1) $\alpha$-divergence with $\alpha\in (0, 1)$; 2) R\'{e}nyi-divergence with $\alpha\in (0, 1)$; 3) reverse KL divergence.
If the initial slingshot strategy profile $\sigma^0$ is in the interior of $\mathcal{X}$, the sequence $\{\sigma^k\}_{k\geq 1}$ converges to the set of Nash equilibria $\Pi^{\ast}$ of the underlying game.
\end{theorem}

We remark that these results cover the algorithms in Example \ref{exm:boltzman_q}, \ref{exm:reward_transformed}, and \ref{exm:mftrl}.
Furthermore, we can incorporate our payoff perturbation techniques into FTRL, detailed in Appendix \ref{sec:appx_extension_to_ftrl}.

\section{Experiments}
\label{sec:experiments}
This section empirically compares the representative instance of MD, namely Multiplicative Weight Update (MWU) and its Optimistic version (OMWU), with our framework.
Specifically, we consider the following three instances of APMD: (i) both are the squared $\ell^2$-distance; (ii) both $G$ and $D_{\psi}$ are the KL divergence, which is also an instance of Reward transformed FTRL in Example~\ref{exm:reward_transformed}. Note that if the slingshot strategy is fixed to a uniformly random strategy, this algorithm corresponds to Boltzmann Q-Learning in Example~\ref{exm:boltzman_q}; (iii) the divergence function $G$ is the reverse KL divergence, and the Bregman divergence $D_{\psi}$ is the KL divergence, which matches Mutant FTRL in Example~\ref{exm:mftrl}. 

We focus on two zero-sum polymatrix games: Three-Player Biased Rock-Paper-Scissors (3BRPS) and three-player random payoff games with $10$ and $50$ actions.
For the 3BRPS game, each player participates in two instances of the game in Table~\ref{tab:biased-rps} in Appendix \ref{sec:appx_experimental_detail} simultaneously with two other players.
For the random payoff games, each player $i$ participates in two instances of the game with two other players $j$ simultaneously.
The payoff matrix for each instance is denoted as $M^{(i,j)}$. Each entry of $M^{(i,j)}$ is drawn independently from a uniform distribution on the interval $[-1,1]$.

Figures~\ref{fig:exploitability_full} and \ref{fig:exploitability_noisy} illustrate the logarithm of the gap function averaged over $100$ instances with different random seeds.
We assume that the initial slingshot strategy profile $\pi^0$ is chosen uniformly at random in the interior of the strategy space $\mathcal{X}=\prod_{i=1}^3 \Delta^{d_i}$ in each instance for 3BRPS, 
while $\pi^0$ is chosen as $(1/d_i)_{j\in [d_i]}$ for $i\in [3]$ in every instances for the random payoff games. 

First, Figure~\ref{fig:exploitability_full} depicts the case of full feedback. Unless otherwise specified, we use a constant learning rate $\eta=0.1$ and a perturbation strength $\mu=0.1$ for APMD. Further details and additional experiments can be found in Appendix~\ref{sec:appx_experimental_detail}. Figure \ref{fig:exploitability_full} shows that APMD outperforms MWU and OMWU in all three games. %
Notably, APMD exhibits the fastest convergence in terms of the gap function when using the squared $\ell^2$-distance as both $G$ and $D_{\psi}$.
Next, Figure~\ref{fig:exploitability_noisy} depicts the case of noisy feedback. We assume that the noise vector $\xi_i^t$ is generated from the multivariate Gaussian distribution $\mathcal{N}(0,~ 0.1^2\mathbf{I})$ in an i.i.d. manner. To account for the noise, we use a lower learning rate $\eta=0.01$ than the full feedback case. In OMWU, we use the noisy gradient vector $\widehat{\nabla}_{\pi_i}v_i(\pi_i^{t-1}, \pi_{-i}^{t-1})$ at the previous step $t-1$ as the prediction vector for the current iteration $t$. We observe the same trends as with full feedback. While MWU and OMWU exhibit worse performance, APMD maintains its fast convergence, as predicted by the theoretical analysis.

\section{Related Literature}
Recent progress in achieving no-regret learning with full feedback has been driven by optimistic learning~\citep{rakhlin2013online,rakhlin2013optimization}. 
Optimistic versions of well-known algorithms like Follow the Regularized Leader~\citep{shalev2006convex} and Mirror Descent~\citep{zhou2017mirror,hsieh2021adaptive} have been proposed to admit last-iterate convergence in a wide range of game settings. These optimistic algorithms have been successfully applied to various classes of games, including bilinear games \citep{daskalakis2017training, daskalakis2018last, liang2019interaction, de2022convergence}, cocoercive games \citep{lin2020finite}, and saddle point problems \citep{daskalakis2018limit, mertikopoulos2018optimistic, golowich2020last, wei2020linear, lei2021last, yoon2021accelerated, Lee2021FastExtraGrad, cevher2023min}. The advancements have provided solutions to monotone games and have established convergence rates~\citep{golowich2020tight, cai2022finite, cai2022tight, gorbunov2022last, cai2023doubly}. 

The exploration of literature with noisy feedback poses significant challenges, in contrast to full feedback.
In situations where feedback is imprecise or limited, algorithms must estimate action values at each iteration.
There have been two trends in achieving last-iterate convergence: restricting the class of games and perturbing the payoff functions. 
On one hand, particularly noticeable works lie in potential games~\citep{Heliou2017Learning}, normal-form games with strict Nash equilibria~\citep{giannou2021convergence,giannou2021Survival}, and two-player zero-sum games~\citep{abe2022last}. Also, noisy feedback is handled with games whose payoff functions are assumed to be strictly (or strongly) monotone ~\citep{bravo2018bandit,kannan2019optimal,hsieh2019ontheconvergence,Anagnostides2022frequency}, while to be strictly variational stable~\citep{mertikopoulos2019learning,mertikopoulos2018optimistic,mertikopoulos2022learning,azizian2021LastIterate}.
Note that variationally stable games, often referred to in control theory, are a slightly broader class of monotone games.
These studies require the payoff functions to be strictly or strongly convex.
When these restrictions are not imposed, convergence is primarily guaranteed only in an asymptotic sense, and the rate is not quantified \citep{hsieh2020explore,hsieh2022no}.

On the other hand, payoff-perturbed algorithms have recently regained attention for their ability to demonstrate convergence in unrestricted games when noise is present. As described in Section \ref{sec:introduction}, payoff perturbation is a textbook technique \citep{facchinei2003finite} that has been extensively studied~\citep{koshal2010single,koshal2013regularized,yousefian2017smoothing,tatarenko2019learning,abe2022last,cen2021fast, cen2022faster, cai2023uncoupled, pmlr-v206-pattathil23a}. It is known that carefully adjusting the magnitude of perturbation ensures convergence to a Nash equilibrium. This magnitude is computed as the product of a strongly convex penalty and a perturbation strength parameter. \citet{liu2022power} shrink the perturbation strength based on a predefined hyper-parameter and the gap function of the current strategy. Likewise, \citet{koshal2010single} and \citet{tatarenko2019learning} have identified somewhat complex conditions that the sequence of the perturbation strength parameters and learning rates should satisfy. Roughly speaking, as we have implied in Lemma~\ref{lem:gd_sp_full}, a smaller strength would require a lower learning rate. This potentially decelerates the convergence rate and complicates the task of finding an appropriate learning rate. For practicality, we have opted to keep the perturbation strength constant, independent of the iteration in APMD. Moreover, it must be emphasized that the existing literature has primarily provided asymptotic convergence results, while we have successfully provided the non-asymptotic convergence rate.

Finally, the idea of the slingshot strategy update was initiated by \citet{perolat2021poincare} and later extended by \citet{abe2022last}. Our contribution partly owes to the significance of quantifying the convergence rate for the first time. We must also mention that \citet{sokota2022unified} have proposed a very similar, but essentially different update rule to ours. It just adds an additional regularization term based on an anchoring strategy, which they call a \textit{magnetic} one, and this means that it directly perturbs the (expected) payoff functions. In contrast, our APMD indirectly perturbs the payoff functions, i.e., perturbs the gradient vector.
Furthermore, we have established non-asymptotic convergence results toward a Nash equilibrium, while \citet{sokota2022unified} have only shown convergence toward a quantal response equilibrium \citep{mckelvey1995quantal,mckelvey1998quantal}, which is just equivalent to an approximate equilibrium. 
Similar results to them have been obtained with the Boltzmann Q-learning dynamics \citep{Tuyls2006} and penalty-regularized dynamics \citep{coucheney2015penalty} in continuous-time settings \citep{leslie2005individual,hussain2023asymptotic}.

\section{Conclusion}
This paper proposes a novel variant of MD that achieves last-iterate convergence even when the noise is present, by adaptively adjusting the magnitude of the perturbation. 
This research could lead to several intriguing future studies, such as finding the best perturbation strength for the optimal convergence rate and achieving convergence with more limited feedback, for example, using bandit feedback \citep{bravo2018bandit,drusvyatskiy2022improved}.

\section*{Acknowledgments}
Kaito Ariu is supported by JSPS KAKENHI Grant Number 23K19986.
Atsushi Iwasaki is supported by JSPS KAKENHI Grant Numbers 21H04890 and 23K17547.

\section*{Impact Statement}
This paper focuses on discussing the problem of computing equilibria in games. 
There are potential social implications associated with our work, but none that we believe need to be particularly emphasized here.

\bibliography{references}
\bibliographystyle{icml2024}

\newpage
\appendix
\onecolumn
\tableofcontents

\section{Notations}
\label{sec:notations}
In this section, we summarize the notations we use in Table~\ref{tb:notation}.
\begin{table}[h!]
    \centering
    \caption{Notations}
    \label{tb:notation}
    \begin{tabular}{cc} \hline
        Symbol & Description \\ \hline
        $N$ & Number of players \\
        $\mathcal{X}_i$ & Strategy space for player $i$ \\ 
        $\mathcal{X}$ & Joint strategy space: $\mathcal{X}=\prod_{i=1}^N \mathcal{X}_i$ \\
        $v_i$ & Payoff function for player $i$ \\
        $\pi_i$ & Strategy for player $i$ \\
        $\pi$ & Strategy profile: $\pi=(\pi_i)_{i\in [N]}$ \\
        $\xi_i^t$ & Noise vector for player $i$ at iteration $t$ \\
        $\pi^{\ast}$ & Nash equilibrium \\
        $\Pi^{\ast}$ & Set of Nash equilibria \\
        $\mathrm{GAP}(\pi)$ & Gap function of $\pi$: $\mathrm{GAP}(\pi)=\max_{\tilde{\pi}\in \mathcal{X}}\sum_{i=1}^N\langle \nabla_{\pi_i}v_i(\pi), \tilde{\pi}_i - \pi_i\rangle$ \\
        $\Delta^d$ & $d$-dimensional probability simplex: $\Delta^d=\{p\in [0,1]^d ~|~ \sum_{j=1}^d p_j=1\}$ \\
        $\mathrm{diam}(\mathcal{X})$ & Diameter of $\mathcal{X}$: $\mathrm{diam}(\mathcal{X})=\sup_{\pi, \pi'\in \mathcal{X}}\|\pi - \pi'\|$ \\
        $\mathrm{KL}(\cdot, \cdot)$ & Kullback-Leibler divergence \\
        $D_{\psi}(\cdot, \cdot)$ & Bregman divergence associated with $\psi$ \\
        $\nabla_{\pi_i}v_i$ & Gradient of $v_i$ with respect to $\pi_i$ \\
        $\eta_t$ & Learning rate at iteration $t$ \\
        $\mu$ & Perturbation strength \\
        $\sigma$ & Slingshot strategy profile \\
        $G(\cdot,\cdot)$ & Divergence function for payoff perturbation \\
        $\nabla_{\pi_i}G$ & Gradient of $G$ with respect to first argument \\
        $T_{\sigma}$ & Update interval for the slingshot strategy \\
        $K$ & Total number of the slingshot strategy updates \\
        $\pi^{\mu,\sigma}$ & Stationary point satisfies \eqref{eq:perturbed_nash} for given $\mu$ and $\sigma$ \\
        $\pi^t$ & Strategy profile at iteration $t$ \\
        $\sigma^k$ & Slingshot strategy profile after $k$ updates \\
        $L$ & Smoothness parameter of $(v_i)_{i\in [N]}$ \\
        $\rho$ & Strongly convex parameter of $\psi$ \\
        $\beta$ & Smoothness parameter of $G(\cdot, \sigma_i)$ relative to $\psi$ \\
        $\gamma$ & Strongly convex parameter of $G(\cdot, \sigma_i)$ relative to $\psi$ \\ \hline
    \end{tabular}
\end{table}
\vfill

\section{Formal Theorems and Lemmas}

\subsection{Full Feedback Setting}
\begin{theorem}[Formal version of Theorem \ref{thm:lic_rate_full}]
\label{thm:formal_lic_rate_full}
Assume that $\sqrt{\sum_{i=1}^N \|\nabla_{\pi_i} v_i(\pi)\|^2} \leq \zeta$ for any $\pi \in \mathcal{X}$.
If we use the constant learning rate $\eta_t = \eta \in (0, \frac{2\mu \rho^2}{3\mu^2\rho^2 + 8L^2})$, and set $D_{\psi}$ and $G$ as squared $\ell^2$-distance $D_{\psi}(\pi_i, \pi_i') = G(\pi_i, \pi_i') = \frac{1}{2}\|\pi_i - \pi_i'\|^2$, and set $T_{\sigma} = c\cdot \max(\frac{6}{\ln 2 - \ln (2 - \eta \mu) }\ln T + \frac{2 \ln 64}{\ln 2 - \ln (2 - \eta \mu)}, 1)$ for some constant $c\geq 1$, then the strategy profile $\pi^T$ updated by APMD satisfies:
\begin{align*}
    &\mathrm{GAP}(\pi^T) \\
    &\leq \frac{2\sqrt{2}c\left((\mu + L) \cdot \mathrm{diam}(\mathcal{X}) + \zeta\right)\cdot  \left(\frac{6}{\ln 2 - \ln (2 - \eta \mu) }\ln T + \frac{2 \ln 64}{\ln 2 - \ln (2 - \eta \mu)} + 1\right)}{\sqrt{T}}\sqrt{\mathrm{diam}(\mathcal{X})\left(8 \cdot \mathrm{diam}(\mathcal{X}) + \frac{\zeta}{\mu}\right)}.
\end{align*}
\end{theorem}

\begin{lemma}[Formal version of Lemma \ref{lem:gd_sp_full}]
\label{lem:formal_gd_sp_full}
Assume that $D_{\psi}$ and $G$ are set as the squared $\ell^2$-distance.
If we use the constant learning rate $\eta_t = \eta \in (0, \frac{2\mu \rho^2}{3\mu^2\rho^2 + 8L^2})$, the $k+1$-th slingshot strategy profile $\sigma^{k+1}$ of APMD under the full feedback setting satisfies that:
\begin{align*}
    \|\pi^{\mu,\sigma^k} -  \sigma^{k+1}\|^2 \leq \|\pi^{\mu,\sigma^k} - \sigma^k\|^2\left(1 - \frac{\eta \mu}{2}\right)^{T_{\sigma}}.
\end{align*}
\end{lemma}

\begin{lemma}[Formal version of Lemma \ref{lem:gap_fn_slingshot_strategy}]
\label{lem:formal_gap_fn_slingshot_strategy}
Assume that $\sqrt{\sum_{i=1}^N \|\nabla_{\pi_i} v_i(\pi)\|^2} \leq \zeta$ for any $\pi \in \mathcal{X}$.
If $G$ is set as the squared $\ell^2$-distance, the gap function for $\sigma^{k+1}$ of APMD satisfies for $k\geq 0$:
\begin{align*}
    \mathrm{GAP}(\sigma^{k+1}) \leq \mu  \cdot \mathrm{diam}(\mathcal{X})\cdot \|\pi^{\mu, \sigma^k} - \sigma^k\| + \left(L\cdot \mathrm{diam}(\mathcal{X}) + \zeta\right) \cdot \|\pi^{\mu, \sigma^k} - \sigma^{k+1}\|.
\end{align*}
\end{lemma}

\begin{lemma}[Formal version of Lemma \ref{lem:slingshot_diff_full}]
\label{lem:formal_slingshot_diff_full}
Assume that $T_{\sigma} \geq \max(\frac{6}{\ln 2 - \ln (2 - \eta \mu) }\ln T + \frac{2 \ln 64}{\ln 2 - \ln (2 - \eta \mu)}, 1)$.
In the same setup of Theorem \ref{thm:lic_rate_full}, the $K-1$-th slingshot strategy profile $\sigma^{K-1}$ of APMD satisfies:
\begin{align*}
    \|\pi^{\mu, \sigma^{K-1}} - \sigma^{K-1}\| \leq \frac{2\sqrt{2}}{\sqrt{K}}\sqrt{\mathrm{diam}(\mathcal{X})\left(8 \cdot \mathrm{diam}(\mathcal{X}) + \frac{\zeta}{\mu}\right)}.
\end{align*}
\end{lemma}

\subsection{Noisy Feedback Setting}
For the noisy feedback setting, we assume that $\xi^t_i \in \mathbb{R}^{d_i}$ is a zero-mean independent random vector with bounded variance.
\begin{assumption}\label{asm:noise}
For all $t\ge1$ and $i \in [N]$, the noise vector $\xi^t_i$ satisfies the following properties: (a) Zero-mean: $ \mathbb{E}[\xi^t_i | \mathcal{F}_t] = (0, \cdots, 0)^{\top}$;  (b)  Bounded variance: $\mathbb{E}[\|\xi^t_i \|^2 | \mathcal{F}_t] \le C^2$ with some constant $C>0$.
\end{assumption}
This is a standard assumption in learning in games with noisy feedback \citep{mertikopoulos2019learning,hsieh2019ontheconvergence} and stochastic optimization \citep{nemirovski2009robust,nedic2014stochastic}. 
Under Assumption \ref{asm:noise}, we can obtain the following convergence results for ADMP under the noisy feedback setting.

\begin{theorem}[Formal version of Theorem \ref{thm:lic_rate_noisy}]
\label{thm:formal_lic_rate_noisy}
Let $\theta =  \frac{3\mu^2 \rho^2 + 8 L^2}{2 \mu \rho^2} $ and $\kappa = \frac{\mu}{2}$.
Suppose that Assumption \ref{asm:noise} holds and $\sqrt{\sum_{i=1}^N \|\nabla_{\pi_i} v_i(\pi)\|^2} \leq \zeta$ for any $\pi \in \mathcal{X}$.
We also assume that $D_{\psi}$ and $G$ are set as squared $\ell^2$-distance $D_{\psi}(\pi_i, \pi_i') = G(\pi_i, \pi_i') = \frac{1}{2}\|\pi_i - \pi_i'\|^2$, and $T_{\sigma} = c\cdot \max(T^{4/5} + 2, 3)$ for some constant $c\geq 1$.
If we choose the learning rate sequence of the form $\eta_t = 1/(\kappa (t - T_{\sigma} \cdot \lfloor t / T_{\sigma} \rfloor) + 2\theta)$, then the strategy profile $\pi^T$ updated by APMD satisfies:
\begin{align*}
&\mathbb{E}\left[\mathrm{GAP}(\pi^T)\right] \leq \frac{\sqrt{6c} \mu \cdot \mathrm{diam}(\mathcal{X})^2}{T^{1/10}} \\
&+ \frac{L \cdot \mathrm{diam}(\mathcal{X}) + \zeta + \mu \cdot \mathrm{diam}(\mathcal{X})\sqrt{18c\left(\mathrm{diam}(\mathcal{X}) + \frac{\zeta}{\mu} + 1\right)}}{T^{1/10}}\sqrt{\frac{\rho(2 \theta - \kappa)\mathrm{diam}(\mathcal{X})^2 + N C^2\left( \frac{1}{\kappa }\ln \left(\frac{\kappa}{2 \theta} T + 1\right) +  \frac{1}{2 \theta}\right)}{\rho \kappa}}.
\end{align*}
\end{theorem}

\begin{lemma}[Formal version of Lemma \ref{lem:gd_sp_noisy}]
\label{lem:formal_gd_sp_noisy}
Let $\theta =  \frac{3\mu^2 \rho^2 + 8L^2}{2 \mu \rho^2} $ and $\kappa = \frac{\mu}{2}$.
Suppose that Assumption \ref{asm:noise} holds, and both $D_{\psi}$ and $G$ are defined as the squared $\ell^2$-distance.
Under the noisy feedback setting, if we use the learning rate sequence of the form $\eta_t = 1/(\kappa (t - T_{\sigma} \cdot \lfloor t / T_{\sigma}\rfloor) + 2\theta)$, the $k+1$-th slingshot strategy profile $\sigma^{k+1}$ of APMD for each $k\geq 0$ satisfies that:
\begin{align*}
\mathbb{E}\left[\|\pi^{\mu,\sigma^k} - \sigma^{k+1}\|^2\right] \le \frac{2 \theta - \kappa}{\kappa (T_{\sigma} - 1) + 2 \theta} \mathbb{E}\left[\|\pi^{\mu,\sigma^k}-\sigma^k\|^2\right] + \frac{N C^2 }{\rho (\kappa (T_{\sigma} - 1) + 2 \theta)} \left( \frac{1}{\kappa }\ln \left(\frac{\kappa}{2 \theta} (T_{\sigma} - 1) + 1\right) +  \frac{1}{2 \theta}\right).
\end{align*}
\end{lemma}

\begin{lemma}[Formal version of Lemma \ref{lem:slingshot_diff_noisy}]
Assume that $T_{\sigma}\geq \max(T^{4/5} + 2, 3)$.
In the same setup of Theorem \ref{thm:lic_rate_noisy}, the $K-1$-th slingshot strategy profile $\sigma^{K-1}$ satisfies:
\label{lem:formal_slingshot_diff_noisy}
\begin{align*}
&\mathbb{E}\left[\|\pi^{\mu,\sigma^{K-1}} - \sigma^{K-1}\|\right] \\
&\leq \sqrt{\frac{\|\pi^{\ast} - \sigma^0\|^2  + \frac{3}{\rho\kappa}\left(\mathrm{diam}(\mathcal{X}) + \frac{\zeta}{\mu} + 1\right)\left(\rho(2 \theta - \kappa)\mathrm{diam}(\mathcal{X})^2 + N C^2\left( \frac{1}{\kappa }\ln \left(\frac{\kappa}{2 \theta} T + 1\right) +  \frac{1}{2 \theta}\right)\right)}{K}}.
\end{align*}
\end{lemma}

\subsection{Convergence Results with General $G$ and $D_{\psi}$}
\begin{lemma}[Formal version of Lemma \ref{lem:md_sp_full}]
\label{lem:formal_md_sp_full}
Suppose that Assumptions \ref{asm:rel_smooth} with $\beta,\gamma \in (0, \infty)$ and \ref{asm:well_defined} hold.
If we use the constant learning rate $\eta_t = \eta \in (0, \frac{2\mu \gamma \rho^2}{\mu^2 \gamma \rho^2 (\gamma + 2\beta) + 8L^2})$, the $k+1$-th slingshot strategy profile $\sigma^{k+1}$ of APMD under the full feedback setting satisfies that:
\begin{align*}
    D_{\psi}(\pi^{\mu,\sigma^k}, \sigma^{k+1}) \leq D_{\psi}(\pi^{\mu,\sigma^k}, \sigma^k)\left(1 - \frac{\eta\mu \gamma}{2}\right)^{T_{\sigma}}.
\end{align*}
\end{lemma}

\begin{lemma}[Formal version of Lemma \ref{lem:md_sp_noisy}]
\label{lem:formal_md_sp_noisy}
Let $\theta =  \frac{\mu^2 \gamma \rho^2 (\gamma + 2\beta) + 8 L^2}{2 \mu \gamma \rho^2} $ and $\kappa = \frac{\mu\gamma}{2}$.
Suppose that Assumptions \ref{asm:rel_smooth}, \ref{asm:well_defined}, and \ref{asm:noise} hold, and the learning rate sequence of the form $\eta_t = 1/(\kappa (t - T_{\sigma} \cdot \lfloor t / T_{\sigma}\rfloor) + 2\theta)$ is used.
Then, the $k+1$-th slingshot strategy profile $\sigma^{k+1}$ of APMD under the noisy feedback setting satisfies that:
\begin{align*}
\mathbb{E}[D_{\psi}(\pi^{\mu,\sigma^k}, \sigma^{k+1})] \le \frac{2 \theta - \kappa}{\kappa (T_{\sigma} - 1) + 2 \theta} \mathbb{E}[D_\psi(\pi^{\mu,\sigma^k}, \sigma^k)] + \frac{N C^2 }{\rho (\kappa (T_{\sigma} - 1) + 2 \theta)} \left( \frac{1}{\kappa }\ln \left(\frac{\kappa}{2 \theta} (T_{\sigma} - 1) + 1\right) +  \frac{1}{2 \theta}\right).
\end{align*}
\end{lemma}

\section{Extension to Follow the Regularized Leader}
\label{sec:appx_extension_to_ftrl}
In Sections \ref{sec:last_iterate_convergence_rates} and \ref{sec:discussion}, we introduced and analyzed APMD, which extends the standard MD approach.
Similarly, it is possible to extend the FTRL approach as well.
In this section, we present Adaptively Perturbed Follow the Regularized Leader (APFTRL), which incorporates the perturbation term $\mu G(\cdot,\sigma_i)$ into the conventional FTRL algorithm:
\begin{align*}
\pi_i^{t+1} = \argmax_{x\in \mathcal{X}_i} \left\{ \sum_{s=0}^t \eta_s\left\langle \widehat{\nabla}_{\pi_i}v_i(\pi^s) - \mu \nabla_{\pi_i}G(\pi_i^s, \sigma_i), x\right\rangle - \psi(x)\right\}.
\end{align*}
In this section, we make the assumption that $\mathcal{X}_i$ is an affine subset, which means there exists a matrix $A\in \mathbb{R}^{k_i\times d_i}$ and a vector $b\in \mathbb{R}^{k_i}$ such that $A\pi_i = b$ for all $\pi_i\in \mathcal{X}_i$.
Additionally, we also assume that $\psi$ is a Legendre function, as described in \citep{rockafellar1997convex,lattimore2020bandit}.
Another assumption we make is that $\pi^t$ is consistently well-defined over iterations:
\begin{assumption}
\label{asm:well_defined_ftrl}
$\pi^t$, updated by APFTRL, satisfies the condition $\pi^t\in \mathrm{int}(\mathrm{dom}~\psi)$ for every $t\in \{0, 1, \cdots, T\}$.
\end{assumption}
Then, APFTRL enjoys the last-iterate convergence of $\pi^T$ in full and noisy feedback settings by proving the following lemmas:
\begin{lemma}
\label{lem:formal_ftrl_sp_full}
Suppose that Assumptions \ref{asm:rel_smooth} with $\beta,\gamma \in (0, \infty)$ and \ref{asm:well_defined_ftrl} hold.
If we use the constant learning rate $\eta_t = \eta \in (0, \frac{2\mu \gamma \rho^2}{\mu^2 \gamma \rho^2 (\gamma + 2\beta) + 8L^2})$, the $k+1$-th slingshot strategy profile $\sigma^{k+1}$ of APFTRL under the full feedback setting satisfies that:
\begin{align*}
    D_{\psi}(\pi^{\mu,\sigma^k}, \sigma^{k+1}) \leq D_{\psi}(\pi^{\mu,\sigma^k}, \sigma^k)\left(1 - \frac{\eta\mu \gamma}{2}\right)^{T_{\sigma}}.
\end{align*}
\end{lemma}

\begin{lemma}
\label{lem:formal_ftrl_sp_noisy}
Let $\theta =  \frac{\mu^2 \gamma \rho^2 (\gamma + 2\beta) + 8 L^2}{2 \mu \gamma \rho^2} $ and $\kappa = \frac{\mu\gamma}{2}$.
Suppose that Assumptions \ref{asm:rel_smooth}, \ref{asm:noise}, and \ref{asm:well_defined_ftrl} hold, and the learning rate sequence of the form $\eta_t = 1/(\kappa (t - T_{\sigma} \cdot \lfloor t / T_{\sigma}\rfloor) + 2\theta)$ is used.
Then, the $k+1$-th slingshot strategy profile $\sigma^{k+1}$ of APFTRL under the noisy feedback setting satisfies that:
\begin{align*}
\mathbb{E}[D_{\psi}(\pi^{\mu,\sigma^k}, \sigma^{k+1})] \le \frac{2 \theta - \kappa}{\kappa (T_{\sigma} - 1) + 2 \theta} \mathbb{E}[D_\psi(\pi^{\mu,\sigma^k}, \sigma^k)] + \frac{N C^2 }{\rho (\kappa (T_{\sigma} - 1) + 2 \theta)} \left( \frac{1}{\kappa }\ln \left(\frac{\kappa}{2 \theta} (T_{\sigma} - 1) + 1\right) +  \frac{1}{2 \theta}\right).
\end{align*}
\end{lemma}

The proofs of these theorems can be found in Appendix~\ref{sec:appx_frtl_sp_full}, \ref{sec:appx_frtl_sp_noisy}.

\section{Additional Theorems}
\label{sec:approximate_nash_conv_full}
Based on this theorem, we can show that the gap function for $\pi^t$ converges to the value of $\mathcal{O}(\mu)$.
\begin{theorem}
\label{thm:exploitability_upper_bound}
In the same setup of Lemma \ref{lem:md_sp_full}, the gap function for APMD is bounded as:
\begin{align*}
\mathrm{GAP}(\pi^t) &\leq \mu \cdot \mathrm{diam}(\mathcal{X})\sqrt{\sum_{i=1}^N\| \nabla_{\pi_i}G(\pi_i^{\mu,\sigma}, \sigma_i)\|^2} + \mathcal{O}\left(\left(1 - \frac{\eta \mu\gamma}{2}\right)^{\frac{t}{2}}\right).
\end{align*}
\end{theorem}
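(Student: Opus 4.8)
The plan is to decompose the exploitability at $\pi^t$ by comparing it against the perturbed equilibrium $\pi^{\mu,\sigma}$, exploiting the fact that $\pi^{\mu,\sigma}$ solves the $\mu G$-perturbed game. First I would unwind the definition $\mathrm{exploit}(\pi^t) = \sum_{i=1}^N\left(\max_{\tilde\pi_i\in\mathcal{X}_i} v_i(\tilde\pi_i,\pi_{-i}^t) - v_i(\pi^t)\right)$ and, for each $i$, let $\tilde\pi_i^t$ be the best response to $\pi_{-i}^t$. The key structural fact from \eqref{eq:perturbed_nash} is the first-order optimality condition at $\pi^{\mu,\sigma}$: for every $i$ and every $x\in\mathcal{X}_i$, $\langle \nabla_{\pi_i} v_i(\pi^{\mu,\sigma}) - \mu\nabla_{\pi_i} G(\pi_i^{\mu,\sigma},\sigma_i), x - \pi_i^{\mu,\sigma}\rangle \le 0$. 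I would use concavity of $v_i$ in its own argument to bound each best-response gap by a linear (gradient) term, then split that gradient term into a piece evaluated at $\pi^{\mu,\sigma}$ plus an error controlled by the Lipschitz bound \eqref{eq:smooth} together with the distance $\|\pi^t - \pi^{\mu,\sigma}\|$.

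The central step is to see how the two summands in the claimed bound arise. The $\mu$-term comes from the perturbation: using the optimality condition above, the aggregate linear form at $\pi^{\mu,\sigma}$ against any feasible deviation reduces to the perturbation gradient $\mu\nabla_{\pi_i}G(\pi_i^{\mu,\sigma},\sigma_i)$, and Cauchy–Schwarz over the players together with $\|x - \pi_i^{\mu,\sigma}\|\le \mathrm{diam}(\mathcal{X})$ yields the factor $\mu\cdot\mathrm{diam}(\mathcal{X})\sqrt{\sum_i\|\nabla_{\pi_i}G(\pi_i^{\mu,\sigma},\sigma_i)\|^2}$. The decaying $\mathcal{O}((1-\tfrac{\eta\mu\gamma}{2})^{t/2})$ term comes from the residual: every place where I replaced gradients or strategies at $\pi^t$ by their values at $\pi^{\mu,\sigma}$ leaves an error proportional to $\|\pi^t - \pi^{\mu,\sigma}\|$ (via $L$-Lipschitzness and monotonicity \eqref{eq:monotone}), and Theorem \ref{thm:gm_ftrl_full} gives $D_{\psi}(\pi^{\mu,\sigma},\pi^t)\le D_{\psi}(\pi^{\mu,\sigma},\pi^0)(1-\tfrac{\eta\mu\gamma}{2})^t$. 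Since $\psi$ is $\rho$-strongly convex, $\tfrac{\rho}{2}\|\pi^t-\pi^{\mu,\sigma}\|^2 \le D_{\psi}(\pi^{\mu,\sigma},\pi^t)$, so $\|\pi^t-\pi^{\mu,\sigma}\|$ decays like the square root of the Bregman bound, i.e. at rate $(1-\tfrac{\eta\mu\gamma}{2})^{t/2}$, which is exactly the exponent in the statement.

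The main obstacle I anticipate is bookkeeping the monotonicity argument cleanly when passing from the best-response gap to a form in which the equilibrium optimality condition can be applied. Concretely, writing $g_i := \nabla_{\pi_i}v_i(\pi^t)$, I would bound $v_i(\tilde\pi_i^t,\pi_{-i}^t) - v_i(\pi^t) \le \langle g_i, \tilde\pi_i^t - \pi_i^t\rangle$ by concavity, and then I must reinsert the perturbation term and compare against $\pi^{\mu,\sigma}$; the care lies in handling the cross terms $\langle \nabla_{\pi_i}v_i(\pi^t) - \nabla_{\pi_i}v_i(\pi^{\mu,\sigma}), \cdot\rangle$ and the matching perturbation differences $\langle \nabla_{\pi_i}G(\pi_i^t,\sigma_i) - \nabla_{\pi_i}G(\pi_i^{\mu,\sigma},\sigma_i), \cdot\rangle$ so that summing over $i$ and invoking \eqref{eq:monotone} does not flip a sign. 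I expect all such cross terms to collapse into a single constant multiple of $\|\pi^t - \pi^{\mu,\sigma}\|$ (using $L$-Lipschitzness for $v$, Assumption \ref{asm:rel_smooth} or smoothness of $G$ for the perturbation gradient, and boundedness of the deviations by $\mathrm{diam}(\mathcal{X})$), after which the strong-convexity conversion of Theorem \ref{thm:gm_ftrl_full} finishes the argument and all the residual pieces fold into the $\mathcal{O}(\cdot)$ term.
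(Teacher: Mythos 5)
Your proposal is correct and follows essentially the same route as the paper: the same three-term decomposition around $\pi^{\mu,\sigma}$, the first-order optimality condition of \eqref{eq:perturbed_nash} to replace the payoff gradient by the perturbation gradient $\mu\nabla_{\pi_i}G(\pi_i^{\mu,\sigma},\sigma_i)$ (the paper formalizes this step via the normal-cone lemma of \citet{cai2022finite}, but your direct Cauchy--Schwarz argument is the same computation), Cauchy--Schwarz plus $L$-Lipschitzness for the residual terms, and Theorem \ref{thm:gm_ftrl_full} combined with $\rho$-strong convexity of $\psi$ to get the $(1-\eta\mu\gamma/2)^{t/2}$ rate. One minor remark: monotonicity \eqref{eq:monotone} is never actually needed in this proof---the cross terms are controlled purely by Cauchy--Schwarz and \eqref{eq:smooth}, so the sign-flipping concern you raise does not arise.
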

We note that lower $\mu$ reduces the gap function for $\pi^{\mu,\sigma}$ (as in the first term of Theorem \ref{thm:exploitability_upper_bound}), whereas higher $\mu$ makes $\pi^t$ converge faster (as in the second term of Theorem \ref{thm:exploitability_upper_bound}).
That is, $\mu$ controls a trade-off between the speed of convergence and the gap function.

\section{Proofs for Section \ref{sec:last_iterate_convergence_rates}}
\label{sec:appx_exact_nash_conv}

\subsection{Proof of Theorem \ref{thm:formal_lic_rate_full} (Formal Version of Theoremf \ref{thm:lic_rate_full})}
\begin{proof}[Proof of Theorem \ref{thm:formal_lic_rate_full}]
First, from Lemma \ref{lem:formal_gap_fn_slingshot_strategy}, we have for any $k\geq 0$:
\begin{align*}
    \mathrm{GAP}(\sigma^{k+1}) \leq \mu  \cdot \mathrm{diam}(\mathcal{X})\cdot \|\pi^{\mu, \sigma^k} - \sigma^k\| + \left(L\cdot \mathrm{diam}(\mathcal{X}) + \zeta\right) \cdot \|\pi^{\mu, \sigma^k} - \sigma^{k+1}\|.
\end{align*}

Using Lemma \ref{lem:formal_gd_sp_full}, we can upper bound the term of $\|\pi^{\mu, \sigma^k} - \sigma^{k+1}\|$ as follows:
\begin{align*}
    \|\pi^{\mu,\sigma^k} -  \sigma^{k+1}\|^2 \leq \|\pi^{\mu,\sigma^k} - \sigma^k\|^2\left(1 - \frac{\eta \mu}{2}\right)^{T_{\sigma}}.
\end{align*}

Combining these inequalities, we have for any $k\geq 0$:
\begin{align*}
    \mathrm{GAP}(\sigma^{k+1}) &\leq \mu  \cdot \mathrm{diam}(\mathcal{X})\cdot \|\pi^{\mu, \sigma^k} - \sigma^k\| + \left(1 - \frac{\eta \mu}{2}\right)^{\frac{T_{\sigma}}{2}} \cdot \left(L\cdot \mathrm{diam}(\mathcal{X}) + \zeta\right) \cdot \|\pi^{\mu,\sigma^k} - \sigma^k\| \\
    &\leq \left((\mu + L)\cdot \mathrm{diam}(\mathcal{X}) + \zeta\right) \cdot \|\pi^{\mu,\sigma^k} - \sigma^k\|,
\end{align*}
where the second inequality follows from $T_{\sigma}\geq 1$.
Let us denote $K:= \lfloor T/T_{\sigma}\rfloor$ as the total number of the slingshot strategy updates over the entire $T$ iterations.
By letting $k=K-1$ in the above inequality, we get:
\begin{align}
\mathrm{GAP}(\sigma^K) \leq \left((\mu + L)\cdot \mathrm{diam}(\mathcal{X}) + \zeta\right) \cdot \|\pi^{\mu,\sigma^{K-1}} - \sigma^{K-1}\|.
\label{eq:gap_fn_last_slingshot_strategy}
\end{align}

Next, we derive the following upper bound on $\|\pi^{\mu, \sigma^{K-1}} - \sigma^{K-1}\|$ from Lemma \ref{lem:formal_slingshot_diff_full}:
\begin{align}
    \|\pi^{\mu, \sigma^{K-1}} - \sigma^{K-1}\| \leq \frac{2\sqrt{2}}{\sqrt{K}}\sqrt{\mathrm{diam}(\mathcal{X})\left(8 \cdot \mathrm{diam}(\mathcal{X}) + \frac{\zeta}{\mu}\right)}.
\label{eq:slingshot_diff_full}
\end{align}

By combining \eqref{eq:gap_fn_last_slingshot_strategy} and \eqref{eq:slingshot_diff_full}, we get:
\begin{align*}
    \mathrm{GAP}(\sigma^K) &\leq \frac{2\sqrt{2}\left((\mu + L) \cdot \mathrm{diam}(\mathcal{X}) + \zeta\right)}{\sqrt{K}}\sqrt{\mathrm{diam}(\mathcal{X})\left(8 \cdot \mathrm{diam}(\mathcal{X}) + \frac{\zeta}{\mu}\right)}.
\end{align*}

Finally, since $\pi^T = \sigma^K$, $K=\lfloor T / T_{\sigma}\rfloor$, and $T_{\sigma}=c \cdot \max(\frac{6}{\ln 2 - \ln (2 - \eta \mu) }\ln T + \frac{2 \ln 64}{\ln 2 - \ln (2 - \eta \mu)}, 1)$, we have:
\begin{align*}
    &\mathrm{GAP}(\pi^T) \\
    &\leq \frac{2\sqrt{2}\left((\mu + L) \cdot \mathrm{diam}(\mathcal{X}) + \zeta\right)}{\sqrt{T / T_{\sigma}}}\sqrt{\mathrm{diam}(\mathcal{X})\left(8 \cdot \mathrm{diam}(\mathcal{X}) + \frac{\zeta}{\mu}\right)} \\
    &\leq \frac{2\sqrt{2}c\left((\mu + L) \cdot \mathrm{diam}(\mathcal{X}) + \zeta\right)\cdot  \left(\frac{6}{\ln 2 - \ln (2 - \eta \mu) }\ln T + \frac{2 \ln 64}{\ln 2 - \ln (2 - \eta \mu)} + 1\right)}{\sqrt{T}}\sqrt{\mathrm{diam}(\mathcal{X})\left(8 \cdot \mathrm{diam}(\mathcal{X}) + \frac{\zeta}{\mu}\right)}.
\end{align*}
This concludes the statement of the theorem.
\end{proof}

\subsection{Proof of Lemma \ref{lem:formal_gd_sp_full} (Formal Version of Lemma \ref{lem:gd_sp_full})}
\begin{proof}[Proof of Lemma \ref{lem:formal_gd_sp_full}]
From the definition of the Bregman divergence, we have for all $\pi_i, \pi_i' \in \mathcal{X}_i$:
\begin{align*}
    &D_{\psi}(\pi_i', \sigma_i) - D_{\psi}(\pi_i, \sigma_i) - \langle \nabla_{\pi_i}D_{\psi}(\pi_i, \sigma_i), \pi_i' - \pi_i\rangle \\
    &= \psi(\pi_i') - \psi(\sigma_i) - \langle \nabla \psi(\sigma_i), \pi_i' - \sigma_i\rangle -\psi(\pi_i) + \psi(\sigma_i) + \langle \nabla \psi(\sigma_i), \pi_i - \sigma_i\rangle  - \langle \nabla\psi(\pi_i) - \nabla\psi(\sigma_i), \pi_i' - \pi_i\rangle \\
    &= \psi(\pi_i') -\psi(\pi_i) - \langle \nabla\psi(\pi_i), \pi_i' - \pi_i\rangle \\
    &= D_{\psi}(\pi_i', \pi_i).
\end{align*}
Hence, assuming that $G$ is identical to $D_{\psi}$, Assumption \ref{asm:rel_smooth} is satisfied with $\beta=\gamma=1$.
Furthermore, since $\psi(x)=\frac{1}{2}\left\|x\right\|^2$, both $\rho=1$ and $\mathrm{int}(\mathrm{dom}~\psi)=\mathbb{R}^{d_i}$ hold.
Therefore, Assumption \ref{asm:well_defined} is also satisfied.
Consequently, we can obtain the following convergence result from Lemma \ref{lem:formal_md_sp_full} with $\beta=\gamma=\rho=1$:
\begin{align*}
    \|\pi^{\mu,\sigma^k} -  \sigma^{k+1}\|^2 \leq \|\pi^{\mu,\sigma^k} - \sigma^k\|^2\left(1 - \frac{\eta \mu}{2}\right)^{T_{\sigma}}.
\end{align*}
\end{proof}

\subsection{Proof of Lemma \ref{lem:formal_gap_fn_slingshot_strategy} (Formal Version of Lemma \ref{lem:gap_fn_slingshot_strategy})}
\begin{proof}[Proof of Lemma \ref{lem:formal_gap_fn_slingshot_strategy}]
First, we have for any $\pi\in \mathcal{X}$:
\begin{align}
    \mathrm{GAP}(\pi) &= \max_{\tilde{\pi}_i\in \mathcal{X}_i}\sum_{i=1}^N\langle \nabla_{\pi_i}v_i(\pi), \tilde{\pi}_i - \pi_i\rangle \nonumber\\
    &= \max_{\tilde{\pi}\in \mathcal{X}} \sum_{i=1}^N \left(\langle \nabla_{\pi_i}v_i(\pi'), \tilde{\pi}_i - \pi_i'\rangle - \langle \nabla_{\pi_i}v_i(\pi'), \pi_i - \pi_i'\rangle + \langle \nabla_{\pi_i}v_i(\pi) - \nabla_{\pi_i}v_i(\pi'), \tilde{\pi}_i - \pi_i\rangle\right).
\label{eq:decompose_exploitability}
\end{align}

Here, we introduce the following lemma from \citet{cai2022finite}:
\begin{lemma}[Lemma 2 of \citet{cai2022finite}]
\label{lem:exploit_by_tangent}
For any $\pi \in \mathcal{X}$, we have:
\begin{align*}
    \max_{\tilde{\pi}\in \mathcal{X}}\sum_{i=1}^N \langle \nabla_{\pi_i} v_i(\pi), \tilde{\pi}_i - \pi_i\rangle \leq \mathrm{diam}(\mathcal{X}) \cdot \min_{(a_i)\in N_{\mathcal{X}}(\pi)}\sqrt{\sum_{i=1}^N\|-\nabla_{\pi_i}v_i(\pi) + a_i\|^2},
\end{align*}
where $N_{\mathcal{X}}(\pi) = \{(a_i)_{i\in [N]}\in \prod_{i=1}^N\mathbb{R}^{d_i} ~|~ \sum_{i=1}^N\langle a_i, \pi_i' - \pi_i\rangle \leq 0, ~\forall \pi'\in \mathcal{X}\}$.
\end{lemma}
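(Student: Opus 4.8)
The plan is to prove the bound by exploiting the defining property of the normal cone $N_{\mathcal{X}}(\pi)$: every element $(a_i)_{i \in [N]} \in N_{\mathcal{X}}(\pi)$ satisfies $\sum_{i=1}^N \langle a_i, \tilde{\pi}_i - \pi_i\rangle \leq 0$ for all $\tilde{\pi} \in \mathcal{X}$, so these vectors can be subtracted off the gradient field inside the linear functional without ever increasing its value over the feasible set. This reduces the whole statement to a single application of Cauchy--Schwarz together with the definition of the diameter.

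Concretely, I would fix an arbitrary $(a_i)_{i \in [N]} \in N_{\mathcal{X}}(\pi)$ and an arbitrary $\tilde{\pi} \in \mathcal{X}$, and write $\sum_{i=1}^N \langle \nabla_{\pi_i} v_i(\pi), \tilde{\pi}_i - \pi_i\rangle = \sum_{i=1}^N \langle \nabla_{\pi_i} v_i(\pi) - a_i, \tilde{\pi}_i - \pi_i\rangle + \sum_{i=1}^N \langle a_i, \tilde{\pi}_i - \pi_i\rangle$. The second sum is nonpositive by the normal-cone property, so the whole expression is at most $\sum_{i=1}^N \langle \nabla_{\pi_i} v_i(\pi) - a_i, \tilde{\pi}_i - \pi_i\rangle$. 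Viewing $(\nabla_{\pi_i} v_i(\pi) - a_i)_{i}$ and $(\tilde{\pi}_i - \pi_i)_{i}$ as concatenated vectors in $\prod_{i} \mathbb{R}^{d_i}$, Cauchy--Schwarz bounds this by $\sqrt{\sum_{i=1}^N \|\nabla_{\pi_i} v_i(\pi) - a_i\|^2} \cdot \|\tilde{\pi} - \pi\|$, and $\|\tilde{\pi} - \pi\| \leq \mathrm{diam}(\mathcal{X})$. Since $\|\nabla_{\pi_i} v_i(\pi) - a_i\| = \|-\nabla_{\pi_i} v_i(\pi) + a_i\|$, this matches the norm appearing on the right-hand side.

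The final step is to observe that the left-hand side $\max_{\tilde{\pi} \in \mathcal{X}} \sum_{i} \langle \nabla_{\pi_i} v_i(\pi), \tilde{\pi}_i - \pi_i\rangle$ does not depend on the choice of $(a_i)$. Taking the maximum over $\tilde{\pi} \in \mathcal{X}$ in the bound just established yields $\max_{\tilde{\pi}} \sum_{i} \langle \nabla_{\pi_i} v_i(\pi), \tilde{\pi}_i - \pi_i\rangle \leq \mathrm{diam}(\mathcal{X}) \sqrt{\sum_{i} \|-\nabla_{\pi_i} v_i(\pi) + a_i\|^2}$ for \emph{every} admissible $(a_i)$, and hence the left-hand side is at most the infimum of the right-hand side over $(a_i) \in N_{\mathcal{X}}(\pi)$, which is the claimed minimum.

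The argument is essentially a one-line Cauchy--Schwarz estimate, so there is no serious analytic obstacle; the only point requiring care is the order of quantifiers. The decomposition and the diameter bound must be carried out for a fixed but arbitrary normal vector $(a_i)$, and the minimization over $N_{\mathcal{X}}(\pi)$ can be taken only at the very end, after noting that the left-hand side is independent of $(a_i)$. Finally, the infimum is actually attained (so that ``$\min$'' is justified) because $\sqrt{\sum_{i} \|-\nabla_{\pi_i} v_i(\pi) + a_i\|^2}$ is the Euclidean distance from the fixed point $(\nabla_{\pi_i} v_i(\pi))_{i}$ to the closed convex cone $N_{\mathcal{X}}(\pi)$, which is realized by the metric projection; since this is Lemma 2 of \citet{cai2022finite}, one may alternatively cite it directly.
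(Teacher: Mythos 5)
Your proof is correct. Note, however, that the paper itself does not prove this statement at all: it is imported verbatim as Lemma 2 of \citet{cai2022finite} and used as a black box, so there is no in-paper argument to compare against. Your derivation supplies the missing elementary proof, and it is the natural one: split the gradient as $(\nabla_{\pi_i} v_i(\pi) - a_i) + a_i$, kill the $a_i$ term via the defining inequality of $N_{\mathcal{X}}(\pi)$, apply Cauchy--Schwarz on the concatenated vectors, and bound $\|\tilde{\pi} - \pi\|$ by $\mathrm{diam}(\mathcal{X})$. You also handle the two points that are easy to get wrong: the quantifier order (fix $(a_i)$, maximize over $\tilde{\pi}$, and only then pass to the infimum over the normal cone, which is legitimate because the left-hand side does not depend on $(a_i)$), and the justification that the infimum is attained (it is the Euclidean distance from $(\nabla_{\pi_i} v_i(\pi))_i$ to the nonempty closed convex cone $N_{\mathcal{X}}(\pi)$, hence realized by the metric projection). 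Nothing is missing.
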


From Lemma \ref{lem:exploit_by_tangent}, the first term of \eqref{eq:decompose_exploitability} can be upper bounded as:
\begin{align}
    \max_{\tilde{\pi}\in \mathcal{X}} \sum_{i=1}^N \langle \nabla_{\pi_i}v_i(\pi'), \tilde{\pi}_i - \pi_i'\rangle \leq \mathrm{diam}(\mathcal{X})\cdot \min_{(a_i)\in N_{\mathcal{X}}(\pi')} \sqrt{\sum_{i=1}^N \|-\nabla_{\pi_i} v_i(\pi') + a_i\|^2}
    \label{eq:source_exploitability}
\end{align}

Next, from Cauchy-Schwarz inequality, the second term of \eqref{eq:decompose_exploitability} can be upper bounded as:
\begin{align}
    - \sum_{i=1}^N \langle \nabla_{\pi_i}v_i(\pi'), \pi_i - \pi_i'\rangle &\leq \|\pi - \pi'\| \sqrt{\sum_{i=1}^N \|\nabla_{\pi_i}v_i(\pi')\|^2} \nonumber\\
    &\leq \zeta\|\pi - \pi'\|.
    \label{eq:strategy_diff}
\end{align}

Again from Cauchy-Schwarz inequality, the third term of \eqref{eq:decompose_exploitability} can be upper bounded as:
\begin{align}
    \sum_{i=1}^N \langle \nabla_{\pi_i}v_i(\pi) - \nabla_{\pi_i}v_i(\pi'), \tilde{\pi}_i - \pi_i\rangle &\leq \|\tilde{\pi} - \pi\| \sqrt{\sum_{i=1}^N \|\nabla_{\pi_i}v_i(\pi) - \nabla_{\pi_i}v_i(\pi')\|^2} \nonumber\\
    &\leq \mathrm{diam}(\mathcal{X}) \sqrt{\sum_{i=1}^N \|\nabla_{\pi_i}v_i(\pi) - \nabla_{\pi_i}v_i(\pi')\|^2} \nonumber\\
    &\leq L \cdot \mathrm{diam}(\mathcal{X}) \cdot \|\pi - \pi'\|.
    \label{eq:gradient_diff}
\end{align}

By combining \eqref{eq:decompose_exploitability}, \eqref{eq:source_exploitability}, \eqref{eq:strategy_diff}, and \eqref{eq:gradient_diff}, we get for any $\pi, \pi'\in \mathcal{X}$:
\begin{align*}
    \mathrm{GAP}(\pi) &\leq \mathrm{diam}(\mathcal{X})\cdot \min_{(a_i)\in N_{\mathcal{X}}(\pi')} \sqrt{\sum_{i=1}^N \|-\nabla_{\pi_i} v_i(\pi') + a_i\|^2} + \left(L\cdot \mathrm{diam}(\mathcal{X}) + \zeta\right)\|\pi - \pi'\|.
\end{align*}
Thus, letting $\pi = \sigma^{k+1}$ and $\pi' = \pi^{\mu, \sigma^k}$, we have:
\begin{align}
    \mathrm{GAP}(\sigma^{k+1}) \leq \mathrm{diam}(\mathcal{X})\cdot \min_{(a_i)\in N_{\mathcal{X}}(\pi^{\mu, \sigma^k})} \sqrt{\sum_{i=1}^N \|-\nabla_{\pi_i} v_i(\pi^{\mu, \sigma^k}) + a_i\|^2} + \left(L\cdot \mathrm{diam}(\mathcal{X}) + \zeta\right)\|\pi^{\mu, \sigma^k} - \sigma^{k+1}\|.
\label{eq:decompose_exploitability_constrained}
\end{align}

On the other hand, from the first-order optimality condition for $\pi^{\mu, \sigma^k}$, we have for any $\pi \in \mathcal{X}$:
\begin{align*}
    \sum_{i=1}^N \langle \nabla_{\pi_i}v_i(\pi^{\mu, \sigma^k}) - \mu \left(\pi_i^{\mu, \sigma^k} - \sigma_i^k\right), \pi_i - \pi_i^{\mu, \sigma^k}\rangle \leq 0,
\end{align*}
and then $\left(\nabla_{\pi_i}v_i(\pi^{\mu, \sigma^k}) - \mu \left(\pi_i^{\mu, \sigma^k} - \sigma_i^k\right)\right)_{i\in [N]}\in N_{\mathcal{X}}(\pi^{\mu, \sigma^k})$.
Thus, the first term of \eqref{eq:decompose_exploitability_constrained} can be bounded as:
\begin{align}
    &\min_{(a_i)\in N_{\mathcal{X}}(\pi^{\mu, \sigma^k})} \sqrt{\sum_{i=1}^N \|-\nabla_{\pi_i} v_i(\pi^{\mu, \sigma^k}) + a_i\|^2} \nonumber\\
    &\leq \sqrt{\sum_{i=1}^N \|-\nabla_{\pi_i} v_i(\pi^{\mu, \sigma^k}) + \nabla_{\pi_i}v_i(\pi^{\mu, \sigma^k}) - \mu \left(\pi_i^{\mu, \sigma^k} - \sigma_i^k\right)\|^2} \nonumber\\
    &= \mu \|\pi^{\mu, \sigma^k} - \sigma^k\|.
\label{eq:upper_bound_on_tangent_residual}
\end{align}

Combining \eqref{eq:decompose_exploitability_constrained} and \eqref{eq:upper_bound_on_tangent_residual}, we have:
\begin{align*}
    \mathrm{GAP}(\sigma^{k+1}) \leq \mu  \cdot \mathrm{diam}(\mathcal{X})\cdot \|\pi^{\mu, \sigma^k} - \sigma^k\| + \left(L\cdot \mathrm{diam}(\mathcal{X}) + \zeta\right)\|\pi^{\mu, \sigma^k} - \sigma^{k+1}\|
\end{align*}
\end{proof}

\subsection{Proof of Lemma \ref{lem:formal_slingshot_diff_full} (Formal Version of Lemma \ref{lem:slingshot_diff_full})}
\begin{proof}[Proof of Lemma \ref{lem:formal_slingshot_diff_full}]
First, we prove the following lemma:
\begin{lemma}
\label{lem:contraction_constrained}
Assume that $\sqrt{\sum_{i=1}^N \|\nabla_{\pi_i} v_i(\pi)\|^2} \leq \zeta$ for any $\pi \in \mathcal{X}$.
If $G$ is set as the squared $\ell^2$-distance, we have for any $k\in [K]$:
\begin{align*}
    \|\pi^{\mu, \sigma^k} - \sigma^k\|^2 &\leq \|\sigma^k - \sigma^{k-1}\|^2 + \frac{2\zeta}{\mu} \|\pi^{\mu, \sigma^{k-1}} - \sigma^k\|.
\end{align*}
\end{lemma}

From Lemma \ref{lem:contraction_constrained}, we can bound $\|\pi^{\mu, \sigma^k} - \sigma^k\|^2$ as:
\begin{align}
    &\|\pi^{\mu, \sigma^k} - \sigma^k\|^2 \nonumber \\
    &\leq \|\sigma^k - \sigma^{k-1}\|^2 + \frac{2\zeta}{\mu} \|\pi^{\mu, \sigma^{k-1}} - \sigma^k\| \nonumber \\
    &\leq \|\pi^{\mu, \sigma^{k-1}} - \sigma^{k-1}\|^2 + \|\sigma^k - \pi^{\mu, \sigma^{k-1}}\|^2 + 2\|\sigma^k - \pi^{\mu, \sigma^{k-1}}\|\|\pi^{\mu, \sigma^{k-1}} - \sigma^{k-1}\| + \frac{2\zeta}{\mu} \|\pi^{\mu, \sigma^{k-1}} - \sigma^k\|.
\label{eq:diff_p_sigma_by_norm}
\end{align}

Next, we upper bound $\|\pi^{\mu, \sigma^k} - \sigma^k\|^2$ using the following lemma:
\begin{lemma}
\label{lem:diff_p_sigma_sum_constrained}
Assume that $T_{\sigma} \geq \max(\frac{6}{\ln 2 - \ln (2 - \eta \mu) }\ln T + \frac{2 \ln 64}{\ln 2 - \ln (2 - \eta \mu)}, 1)$.
In the same setup of Theorem \ref{thm:lic_rate_full}, we have for any Nash equilibrium $\pi^{\ast} \in \Pi^{\ast}$:
\begin{align*}
    \sum_{k=0}^{K-1}\|\pi^{\mu, \sigma^k} - \sigma^k\|^2 &\leq 16\|\pi^{\ast} - \sigma^0\|^2.
\end{align*}
\end{lemma}

Using Lemma \ref{lem:diff_p_sigma_sum_constrained}, we have:
\begin{align*}
    \|\pi^{\mu, \sigma^k} - \sigma^k\|^2 \leq \sum_{k=0}^{K-1}\|\pi^{\mu, \sigma^k} - \sigma^k\|^2 \leq 16\|\pi^{\ast} - \sigma^0\|^2,
\end{align*}
and then from Lemma \ref{lem:formal_gd_sp_full}, we get:
\begin{align}
    \|\pi^{\mu,\sigma^k} -  \sigma^{k+1}\| \leq \|\pi^{\mu,\sigma^k} - \sigma^k\|\left(1 - \frac{\eta \mu}{2}\right)^{\frac{T_{\sigma}}{2}} \leq 4\|\pi^{\ast} - \sigma^0\|\left(1 - \frac{\eta \mu}{2}\right)^{\frac{T_{\sigma}}{2}}.
\label{eq:diff_p_sigma_exponential}
\end{align}

By combining \eqref{eq:diff_p_sigma_by_norm} and \eqref{eq:diff_p_sigma_exponential}, we have:
\begin{align*}
    &\|\pi^{\mu, \sigma^k} - \sigma^k\|^2 \\
    &\leq \|\pi^{\mu, \sigma^{k-1}} - \sigma^{k-1}\|^2 + 16\|\pi^{\ast} - \sigma^0\|\left(1 - \frac{\eta \mu}{2}\right)^{T_{\sigma}} + 32\|\pi^{\ast} - \sigma^0\|^2\left(1 - \frac{\eta \mu}{2}\right)^{\frac{T_{\sigma}}{2}} + \frac{8\zeta}{\mu} \|\pi^{\ast} - \sigma^0\|\left(1 - \frac{\eta \mu}{2}\right)^{\frac{T_{\sigma}}{2}} \\
    &\leq \|\pi^{\mu, \sigma^{k-1}} - \sigma^{k-1}\|^2 + 48\|\pi^{\ast} - \sigma^0\|^2\left(1 - \frac{\eta \mu}{2}\right)^{\frac{T_{\sigma}}{2}} + \frac{8\zeta}{\mu} \|\pi^{\ast} - \sigma^0\|\left(1 - \frac{\eta \mu}{2}\right)^{\frac{T_{\sigma}}{2}} \\
    &\leq \|\pi^{\mu, \sigma^{k-1}} - \sigma^{k-1}\|^2 + 8\|\pi^{\ast} - \sigma^0\|\left(1 - \frac{\eta \mu}{2}\right)^{\frac{T_{\sigma}}{2}}\left(6\|\pi^{\ast} - \sigma^0\| + \frac{\zeta}{\mu}\right).
\end{align*}
Therefore, we get:
\begin{align}
    \|\pi^{\mu, \sigma^{K-1}} - \sigma^{K-1}\|^2 &\leq \|\pi^{\mu, \sigma^{K-2}} - \sigma^{K-2}\|^2 + 8\|\pi^{\ast} - \sigma^0\|\left(1 - \frac{\eta \mu}{2}\right)^{\frac{T_{\sigma}}{2}}\left(6\|\pi^{\ast} - \sigma^0\| + \frac{\zeta}{\mu}\right) \nonumber \\
    &\leq \|\pi^{\mu, \sigma^k} - \sigma^k\|^2 + 8K\|\pi^{\ast} - \sigma^0\|\left(1 - \frac{\eta \mu}{2}\right)^{\frac{T_{\sigma}}{2}}\left(6\|\pi^{\ast} - \sigma^0\| + \frac{\zeta}{\mu}\right).
\label{eq:diff_p_sigma_constrained}
\end{align}

By combining \eqref{eq:diff_p_sigma_constrained} and Lemma \ref{lem:diff_p_sigma_sum_constrained}, we have:
\begin{align*}
    K\|\pi^{\mu, \sigma^{K-1}} - \sigma^{K-1}\|^2 &\leq \sum_{k=0}^{K-1}\|\pi^{\mu, \sigma^k} - \sigma^k\|^2 + 8K^2\|\pi^{\ast} - \sigma^0\|\left(1 - \frac{\eta \mu}{2}\right)^{\frac{T_{\sigma}}{2}}\left(6\|\pi^{\ast} - \sigma^0\| + \frac{\zeta}{\mu}\right) \\
    &\leq 16\|\pi^{\ast} - \sigma^0\|^2 + 8K^2\|\pi^{\ast} - \sigma^0\|\left(1 - \frac{\eta \mu}{2}\right)^{\frac{T_{\sigma}}{2}}\left(6\|\pi^{\ast} - \sigma^0\| + \frac{\zeta}{\mu}\right) \\
    &\leq 16\|\pi^{\ast} - \sigma^0\|^2 + 8T^2\|\pi^{\ast} - \sigma^0\|\left(1 - \frac{\eta \mu}{2}\right)^{\frac{T_{\sigma}}{2}}\left(6\|\pi^{\ast} - \sigma^0\| + \frac{\zeta}{\mu}\right).
\end{align*}
Under the assumption that $T_{\sigma} \geq \max(\frac{6}{\ln 2 - \ln (2 - \eta \mu) }\ln T + \frac{2 \ln 64}{\ln 2 - \ln (2 - \eta \mu)}, 1)$, we get:
\begin{align*}
    \left(1-\frac{\eta\mu}{2}\right)^{-\frac{T_{\sigma}}{2}} \geq \left(1-\frac{\eta\mu}{2}\right)^{-\frac{3}{\ln 2 - \ln (2 - \eta \mu) }\ln T} \left(1-\frac{\eta\mu}{2}\right)^{\frac{\ln 64}{\ln \left(1 - \frac{\eta\mu}{2}\right)}} = 64\left(1-\frac{\eta\mu}{2}\right)^{\frac{\ln T^3}{\ln (2 - \eta \mu)  - \ln 2}}=64T^3.
\end{align*}
Thus, we have:
\begin{align*}
    K\|\pi^{\mu, \sigma^{K-1}} - \sigma^{K-1}\|^2 &\leq 16\|\pi^{\ast} - \sigma^0\|^2 + 8\|\pi^{\ast} - \sigma^0\|\left(6\|\pi^{\ast} - \sigma^0\| + \frac{\zeta}{\mu}\right) \\
    &= 8\|\pi^{\ast} - \sigma^0\|\left(8\|\pi^{\ast} - \sigma^0\| + \frac{\zeta}{\mu}\right) \\
    &\leq 8\cdot \mathrm{diam}(\mathcal{X})\left(8\cdot \mathrm{diam}(\mathcal{X}) + \frac{\zeta}{\mu}\right).
\end{align*}
\end{proof}

\subsection{Proof of Theorem \ref{thm:formal_lic_rate_noisy} (Formal Version of Theorem \ref{thm:lic_rate_noisy})}
\begin{proof}[Proof of Theorem \ref{thm:formal_lic_rate_noisy}]
First, from Lemma \ref{lem:formal_gap_fn_slingshot_strategy}, we have for any $k\geq 0$:
\begin{align*}
    \mathbb{E}\left[\mathrm{GAP}(\sigma^{k+1})\right] \leq \mu  \cdot \mathrm{diam}(\mathcal{X})\cdot \mathbb{E}\left[\|\pi^{\mu, \sigma^k} - \sigma^k\|\right] + \left(L\cdot \mathrm{diam}(\mathcal{X}) + \zeta\right) \cdot \mathbb{E}\left[\|\pi^{\mu, \sigma^k} - \sigma^{k+1}\|\right].
\end{align*}

Using Lemma \ref{lem:formal_gd_sp_noisy}, we can upper bound the term of $\mathbb{E}\left[\|\pi^{\mu,\sigma^k} - \sigma^{k+1}\|^2\right]$ as follows:
\begin{align*}
\mathbb{E}\left[\|\pi^{\mu,\sigma^k} - \sigma^{k+1}\|^2\right] \le \frac{2 \theta - \kappa}{\kappa (T_{\sigma} - 1) + 2 \theta} \mathbb{E}\left[\|\pi^{\mu,\sigma^k}-\sigma^k\|^2\right] + \frac{N C^2 }{\rho (\kappa (T_{\sigma} - 1) + 2 \theta)} \left( \frac{1}{\kappa }\ln \left(\frac{\kappa}{2 \theta} (T_{\sigma} - 1) + 1\right) +  \frac{1}{2 \theta}\right).
\end{align*}

Combining these inequalities, we have for any $k\geq 0$:
\begin{align*}
    &\mathbb{E}\left[\mathrm{GAP}(\sigma^{k+1})\right] \\
    &\leq \mu  \cdot \mathrm{diam}(\mathcal{X})\cdot \mathbb{E}\left[\|\pi^{\mu, \sigma^k} - \sigma^k\|\right] + \left(L\cdot \mathrm{diam}(\mathcal{X}) + \zeta\right) \cdot \sqrt{\frac{\rho(2 \theta - \kappa)\mathrm{diam}(\mathcal{X})^2 + N C^2\left( \frac{1}{\kappa }\ln \left(\frac{\kappa}{2 \theta} T + 1\right) +  \frac{1}{2 \theta}\right)}{\rho\kappa T^{4/5}}},
\end{align*}
where we use $T_{\sigma} \geq \max(T^{4/5} + 2, 3)$.
Let us denote $K:= \lfloor T/T_{\sigma}\rfloor$ as the total number of the slingshot strategy updates over the entire $T$ iterations.
By letting $k=K-1$ in the above inequality, we get:
\begin{align}
\mathbb{E}\left[\mathrm{GAP}(\sigma^K)\right] &\leq \mu  \cdot \mathrm{diam}(\mathcal{X})\cdot \mathbb{E}\left[\|\pi^{\mu, \sigma^{K-1}} - \sigma^{K-1}\|\right] \nonumber\\
&+ \left(L\cdot \mathrm{diam}(\mathcal{X}) + \zeta\right) \cdot \sqrt{\frac{\rho(2 \theta - \kappa)\mathrm{diam}(\mathcal{X})^2 + N C^2\left( \frac{1}{\kappa }\ln \left(\frac{\kappa}{2 \theta} T + 1\right) +  \frac{1}{2 \theta}\right)}{\rho\kappa T^{4/5}}},
\label{eq:gap_fn_last_slingshot_strategy_noisy}
\end{align}

Next, we derive the following upper bound on $\mathbb{E}\left[\|\pi^{\mu, \sigma^{K-1}} - \sigma^{K-1}\|\right]$ from Lemma \ref{lem:formal_slingshot_diff_noisy}:
\begin{align}
&\mathbb{E}\left[\|\pi^{\mu,\sigma^{K-1}} - \sigma^{K-1}\|\right] \nonumber\\
&\leq \sqrt{\frac{\|\pi^{\ast} - \sigma^0\|^2  + \frac{3}{\rho\kappa}\left(\mathrm{diam}(\mathcal{X}) + \frac{\zeta}{\mu} + 1\right)\left(\rho(2 \theta - \kappa)\mathrm{diam}(\mathcal{X})^2 + N C^2\left( \frac{1}{\kappa }\ln \left(\frac{\kappa}{2 \theta} T + 1\right) +  \frac{1}{2 \theta}\right)\right)}{K}}.
\label{eq:slingshot_diff_noisy}
\end{align}

By combining \eqref{eq:gap_fn_last_slingshot_strategy_noisy} and \eqref{eq:slingshot_diff_noisy}, we get:
\begin{align*}
&\mathbb{E}\left[\mathrm{GAP}(\sigma^K)\right] \\
&\leq \mu  \cdot \mathrm{diam}(\mathcal{X})\cdot \sqrt{\frac{\|\pi^{\ast} - \sigma^0\|^2  + \frac{3}{\rho\kappa}\left(\mathrm{diam}(\mathcal{X}) + \frac{\zeta}{\mu} + 1\right)\left(\rho(2 \theta - \kappa)\mathrm{diam}(\mathcal{X})^2 + N C^2\left( \frac{1}{\kappa }\ln \left(\frac{\kappa}{2 \theta} T + 1\right) +  \frac{1}{2 \theta}\right)\right)}{K}} \nonumber\\
&+ \left(L\cdot \mathrm{diam}(\mathcal{X}) + \zeta\right) \cdot \sqrt{\frac{\rho(2 \theta - \kappa)\mathrm{diam}(\mathcal{X})^2 + N C^2\left( \frac{1}{\kappa }\ln \left(\frac{\kappa}{2 \theta} T + 1\right) +  \frac{1}{2 \theta}\right)}{\rho\kappa T^{4/5}}}.
\end{align*}

Finally, since $\pi^T = \sigma^K$, $K=\lfloor T/T_{\sigma}\rfloor$, and $T_{\sigma} = c \cdot \max(T^{4/5} + 2, 3)$, we have:
\begin{align*}
&\mathbb{E}\left[\mathrm{GAP}(\pi^T)\right] \\
&\leq \mu  \cdot \mathrm{diam}(\mathcal{X})\cdot \sqrt{\frac{\|\pi^{\ast} - \sigma^0\|^2  + \frac{3}{\rho\kappa}\left(\mathrm{diam}(\mathcal{X}) + \frac{\zeta}{\mu} + 1\right)\left(\rho(2 \theta - \kappa)\mathrm{diam}(\mathcal{X})^2 + N C^2\left( \frac{1}{\kappa }\ln \left(\frac{\kappa}{2 \theta} T + 1\right) +  \frac{1}{2 \theta}\right)\right)}{T/T_{\sigma}}} \\
&+ \left(L\cdot \mathrm{diam}(\mathcal{X}) + \zeta\right) \cdot \sqrt{\frac{\rho(2 \theta - \kappa)\mathrm{diam}(\mathcal{X})^2 + N C^2\left( \frac{1}{\kappa }\ln \left(\frac{\kappa}{2 \theta} T + 1\right) +  \frac{1}{2 \theta}\right)}{\rho\kappa T^{4/5}}} \\
&\leq \mu  \cdot \mathrm{diam}(\mathcal{X})\cdot \sqrt{c (T^{4/5} + 5)} \\
&\cdot \sqrt{\frac{\|\pi^{\ast} - \sigma^0\|^2  + \frac{3}{\rho\kappa}\left(\mathrm{diam}(\mathcal{X}) + \frac{\zeta}{\mu} + 1\right)\left(\rho(2 \theta - \kappa)\mathrm{diam}(\mathcal{X})^2 + N C^2\left( \frac{1}{\kappa }\ln \left(\frac{\kappa}{2 \theta} T + 1\right) +  \frac{1}{2 \theta}\right)\right)}{T}} \\
&+ \left(L\cdot \mathrm{diam}(\mathcal{X}) + \zeta\right) \cdot \sqrt{\frac{\rho(2 \theta - \kappa)\mathrm{diam}(\mathcal{X})^2 + N C^2\left( \frac{1}{\kappa }\ln \left(\frac{\kappa}{2 \theta} T + 1\right) +  \frac{1}{2 \theta}\right)}{\rho\kappa T^{4/5}}} \\
&\leq \mu  \cdot \mathrm{diam}(\mathcal{X})\cdot \sqrt{ \frac{6c\|\pi^{\ast} - \sigma^0\|^2  + \frac{18c}{\rho\kappa}\left(\mathrm{diam}(\mathcal{X}) + \frac{\zeta}{\mu} + 1\right)\left(\rho(2 \theta - \kappa)\mathrm{diam}(\mathcal{X})^2 + N C^2\left( \frac{1}{\kappa }\ln \left(\frac{\kappa}{2 \theta} T + 1\right) +  \frac{1}{2 \theta}\right)\right)}{T^{1/5}}} \\
&+ \left(L\cdot \mathrm{diam}(\mathcal{X}) + \zeta\right) \cdot \sqrt{\frac{\rho(2 \theta - \kappa)\mathrm{diam}(\mathcal{X})^2 + N C^2\left( \frac{1}{\kappa }\ln \left(\frac{\kappa}{2 \theta} T + 1\right) +  \frac{1}{2 \theta}\right)}{\rho\kappa T^{4/5}}} \\
&\leq \mu  \cdot \mathrm{diam}(\mathcal{X})^2\cdot \sqrt{ \frac{6c}{T^{1/5}}} \\
&+ \mu  \cdot \mathrm{diam}(\mathcal{X})\cdot \sqrt{ \frac{18c\left(\mathrm{diam}(\mathcal{X}) + \frac{\zeta}{\mu} + 1\right)\left(\rho(2 \theta - \kappa)\mathrm{diam}(\mathcal{X})^2 + N C^2\left( \frac{1}{\kappa }\ln \left(\frac{\kappa}{2 \theta} T + 1\right) +  \frac{1}{2 \theta}\right)\right)}{\rho\kappa T^{1/5}}} \\
&+ \left(L\cdot \mathrm{diam}(\mathcal{X}) + \zeta\right) \cdot \sqrt{\frac{\rho(2 \theta - \kappa)\mathrm{diam}(\mathcal{X})^2 + N C^2\left( \frac{1}{\kappa }\ln \left(\frac{\kappa}{2 \theta} T + 1\right) +  \frac{1}{2 \theta}\right)}{\rho\kappa T^{1/5}}} \\
&\leq \frac{\sqrt{6c} \mu \cdot \mathrm{diam}(\mathcal{X})^2}{T^{1/10}} \\
&+ \frac{L \cdot \mathrm{diam}(\mathcal{X}) + \zeta + \mu \cdot \mathrm{diam}(\mathcal{X})\sqrt{18c\left(\mathrm{diam}(\mathcal{X}) + \frac{\zeta}{\mu} + 1\right)}}{T^{1/10}}\sqrt{\frac{\rho(2 \theta - \kappa)\mathrm{diam}(\mathcal{X})^2 + N C^2\left( \frac{1}{\kappa }\ln \left(\frac{\kappa}{2 \theta} T + 1\right) +  \frac{1}{2 \theta}\right)}{\rho \kappa}}.
\end{align*}
\end{proof}
This concludes the statement of the theorem.

\subsection{Proof of Lemma \ref{lem:formal_gd_sp_noisy} (Formal Version of Lemma \ref{lem:gd_sp_noisy})}
\label{sec:appx_gd_sp_noisy}
\begin{proof}[Proof of Lemma \ref{lem:formal_gd_sp_noisy}]
Assuming that $G$ is identical to $D_{\psi}$, Assumption \ref{asm:rel_smooth} is satisfied with $\beta=\gamma=1$.
Furthermore, since $\psi(x)=\frac{1}{2}\left\|x\right\|^2$, both $\rho=1$ and $\mathrm{int}(\mathrm{dom}~\psi)=\mathbb{R}^{d_i}$ hold.
Therefore, Assumption \ref{asm:well_defined} is also satisfied.
Consequently, we can apply Lemma \ref{lem:formal_md_sp_noisy} with $\beta=\gamma=\rho=1$ and obtain:
\begin{align*}
\mathbb{E}\left[\|\pi^{\mu,\sigma^k} - \sigma^{k+1}\|^2\right] \le \frac{2 \theta - \kappa}{\kappa (T_{\sigma} - 1) + 2 \theta} \mathbb{E}\left[\|\pi^{\mu,\sigma^k}-\sigma^k\|^2\right] + \frac{N C^2 }{\rho (\kappa (T_{\sigma} - 1) + 2 \theta)} \left( \frac{1}{\kappa }\ln \left(\frac{\kappa}{2 \theta} (T_{\sigma} - 1) + 1\right) +  \frac{1}{2 \theta}\right).
\end{align*}
\end{proof}

\subsection{Proof of Lemma \ref{lem:formal_slingshot_diff_noisy} (Formal Version of Lemma \ref{lem:slingshot_diff_noisy})}
\begin{proof}[Proof of Lemma \ref{lem:formal_slingshot_diff_noisy}]
From Lemmas \ref{lem:contraction_constrained} and \ref{lem:formal_gd_sp_noisy}, we have:
\begin{align*}
    &\mathbb{E}\left[\|\pi^{\mu, \sigma^k} - \sigma^k\|^2\right] \\
    &\leq \mathbb{E}\left[\|\sigma^k - \sigma^{k-1}\|^2\right] + \frac{2\zeta}{\mu} \mathbb{E}\left[\|\pi^{\mu, \sigma^{k-1}} - \sigma^k\|\right] \\
    &\leq \mathbb{E}\left[\|\sigma^k - \pi^{\mu, \sigma^{k-1}}\|^2\right] + \mathbb{E}\left[\|\pi^{\mu, \sigma^{k-1}} - \sigma^{k-1}\|^2\right] + \mathbb{E}\left[2\|\sigma^k - \pi^{\mu, \sigma^{k-1}}\|\|\pi^{\mu, \sigma^{k-1}} - \sigma^{k-1}\| + \frac{2\zeta}{\mu} \|\pi^{\mu, \sigma^{k-1}} - \sigma^k\|\right] \\
    &\leq \frac{2 \theta - \kappa}{\kappa (T_{\sigma} - 2) + 2 \theta} \mathbb{E}\left[\|\pi^{\mu,\sigma^k}-\sigma^k\|^2\right] + \frac{N C^2 }{\rho (\kappa (T_{\sigma} - 2) + 2 \theta)} \left( \frac{1}{\kappa }\ln \left(\frac{\kappa}{2 \theta} (T_{\sigma} - 2) + 1\right) +  \frac{1}{2 \theta}\right) \\
    & + 2 \left(\mathrm{diam}(\mathcal{X}) + \frac{\zeta}{\mu} \right)\sqrt{\frac{2 \theta - \kappa}{\kappa (T_{\sigma} - 2) + 2 \theta} \mathbb{E}\left[\|\pi^{\mu,\sigma^{k-1}}-\sigma^{k-1}\|^2\right] + \frac{N C^2 }{\rho (\kappa (T_{\sigma} - 2) + 2 \theta)} \left( \frac{1}{\kappa }\ln \left(\frac{\kappa}{2 \theta} (T_{\sigma} - 2) + 1\right) +  \frac{1}{2 \theta}\right)} \\
    & + \mathbb{E}\left[\|\pi^{\mu, \sigma^{k-1}} - \sigma^{k-1}\|^2\right] \\
    &\leq \mathbb{E}\left[\|\pi^{\mu, \sigma^{k-1}} - \sigma^{k-1}\|^2\right] + \frac{\rho(2 \theta - \kappa)\mathrm{diam}(\mathcal{X})^2 + N C^2\left( \frac{1}{\kappa }\ln \left(\frac{\kappa}{2 \theta} (T_{\sigma} - 2) + 1\right) +  \frac{1}{2 \theta}\right)}{\rho(\kappa (T_{\sigma} - 2) + 2 \theta)} \\
    & + 2 \left(\mathrm{diam}(\mathcal{X}) + \frac{\zeta}{\mu} \right)\sqrt{\frac{\rho(2 \theta - \kappa)\mathrm{diam}(\mathcal{X})^2 + N C^2\left( \frac{1}{\kappa }\ln \left(\frac{\kappa}{2 \theta} (T_{\sigma} - 2) + 1\right) +  \frac{1}{2 \theta}\right)}{\rho(\kappa (T_{\sigma} - 2) + 2 \theta)}}.
\end{align*}
Under the assumption that $T_{\sigma} \geq \max(T^{4/5} + 2, 3)$, we get:
\begin{align*}
    \mathbb{E}\left[\|\pi^{\mu, \sigma^k} - \sigma^k\|^2\right] &\leq \mathbb{E}\left[\|\pi^{\mu, \sigma^{k-1}} - \sigma^{k-1}\|^2\right] + \frac{\rho(2 \theta - \kappa)\mathrm{diam}(\mathcal{X})^2 + N C^2\left( \frac{1}{\kappa }\ln \left(\frac{\kappa}{2 \theta} T + 1\right) +  \frac{1}{2 \theta}\right)}{\rho(\kappa T^{4/5} + 2 \theta)} \\
    & + 2 \left(\mathrm{diam}(\mathcal{X}) + \frac{\zeta}{\mu} \right)\sqrt{\frac{\rho(2 \theta - \kappa)\mathrm{diam}(\mathcal{X})^2 + N C^2\left( \frac{1}{\kappa }\ln \left(\frac{\kappa}{2 \theta} T + 1\right) +  \frac{1}{2 \theta}\right)}{\rho(\kappa T^{4/5} + 2 \theta)}}.
\end{align*}
Therefore, we get:
\begin{align}
    \mathbb{E}[\|\pi^{\mu, \sigma^{K-1}} - \sigma^{K-1}\|^2] &\leq \mathbb{E}[\|\pi^{\mu, \sigma^{K-2}} - \sigma^{K-2}\|^2] + \frac{\rho(2 \theta - \kappa)\mathrm{diam}(\mathcal{X})^2 + N C^2\left( \frac{1}{\kappa }\ln \left(\frac{\kappa}{2 \theta} T + 1\right) +  \frac{1}{2 \theta}\right)}{\rho(\kappa T^{4/5} + 2 \theta)} \nonumber\\
    & + 2 \left(\mathrm{diam}(\mathcal{X}) + \frac{\zeta}{\mu} \right)\sqrt{\frac{\rho(2 \theta - \kappa)\mathrm{diam}(\mathcal{X})^2 + N C^2\left( \frac{1}{\kappa }\ln \left(\frac{\kappa}{2 \theta} T + 1\right) +  \frac{1}{2 \theta}\right)}{\rho(\kappa T^{4/5} + 2 \theta)}} \nonumber \\
    &\leq \mathbb{E}[ \|\pi^{\mu, \sigma^k} - \sigma^k\|^2] + K\frac{\rho(2 \theta - \kappa)\mathrm{diam}(\mathcal{X})^2 + N C^2\left( \frac{1}{\kappa }\ln \left(\frac{\kappa}{2 \theta} T + 1\right) +  \frac{1}{2 \theta}\right)}{\rho(\kappa T^{4/5} + 2 \theta)} \nonumber\\
    & + 2 K\left(\mathrm{diam}(\mathcal{X}) + \frac{\zeta}{\mu} \right)\sqrt{\frac{\rho(2 \theta - \kappa)\mathrm{diam}(\mathcal{X})^2 + N C^2\left( \frac{1}{\kappa }\ln \left(\frac{\kappa}{2 \theta} T + 1\right) +  \frac{1}{2 \theta}\right)}{\rho(\kappa T^{4/5} + 2 \theta)}}.
\label{eq:diff_p_sigma_noisy}
\end{align}

Here, we derive the following upper bound in terms of $\mathbb{E}[ \|\pi^{\mu, \sigma^k} - \sigma^k\|^2]$:
\begin{lemma}
\label{lem:diff_p_sigma_sum_noisy}
Assume that $T_{\sigma} \geq \max(T^{4/5} + 2, 3)$.
In the same setup of Theorem \ref{thm:lic_rate_noisy}, we have for any Nash equilibrium $\pi^{\ast} \in \Pi^{\ast}$:
\begin{align*}
    &\mathbb{E}\left[\sum_{k=0}^{K-1}\|\pi^{\mu, \sigma^k} - \sigma^k\|^2\right] \leq \|\pi^{\ast} - \sigma^0\|^2 + K\cdot \mathrm{diam}(\mathcal{X}) \cdot \sqrt{\frac{\rho(2 \theta - \kappa)\mathrm{diam}(\mathcal{X})^2 + N C^2\left( \frac{1}{\kappa }\ln \left(\frac{\kappa}{2 \theta} T + 1\right) +  \frac{1}{2 \theta}\right)}{\rho(\kappa T^{4/5} + 2 \theta)}}.
\end{align*}
\end{lemma}

By combining \eqref{eq:diff_p_sigma_noisy}, Lemma \ref{lem:diff_p_sigma_sum_noisy}, and the assumption that $T_{\sigma} \geq \max(T^{4/5} + 2, 3)$, we have:
\begin{align*}
    &K \mathbb{E}[\|\pi^{\mu, \sigma^{K-1}} - \sigma^{K-1}\|^2] \\
    &\leq \mathbb{E}\left[\sum_{k=0}^{K-1}\|\pi^{\mu, \sigma^k} - \sigma^k\|^2\right] + K^2\frac{\rho(2 \theta - \kappa)\mathrm{diam}(\mathcal{X})^2 + N C^2\left( \frac{1}{\kappa }\ln \left(\frac{\kappa}{2 \theta} T + 1\right) +  \frac{1}{2 \theta}\right)}{\rho(\kappa T^{4/5} + 2 \theta)} \nonumber\\
    & + 2 K^2\left(\mathrm{diam}(\mathcal{X}) + \frac{\zeta}{\mu} \right)\sqrt{\frac{\rho(2 \theta - \kappa)\mathrm{diam}(\mathcal{X})^2 + N C^2\left( \frac{1}{\kappa }\ln \left(\frac{\kappa}{2 \theta} T + 1\right) +  \frac{1}{2 \theta}\right)}{\rho(\kappa T^{4/5} + 2 \theta)}} \\
    &\leq \|\pi^{\ast} - \sigma^0\|^2 + K\cdot \mathrm{diam}(\mathcal{X}) \cdot \sqrt{\frac{\rho(2 \theta - \kappa)\mathrm{diam}(\mathcal{X})^2 + N C^2\left( \frac{1}{\kappa }\ln \left(\frac{\kappa}{2 \theta} (T - 1) + 1\right) +  \frac{1}{2 \theta}\right)}{\rho(\kappa T^{4/5} + 2 \theta)}} \\
    &+ K^2\frac{\rho(2 \theta - \kappa)\mathrm{diam}(\mathcal{X})^2 + N C^2\left( \frac{1}{\kappa }\ln \left(\frac{\kappa}{2 \theta} T + 1\right) +  \frac{1}{2 \theta}\right)}{\rho(\kappa T^{4/5} + 2 \theta)} \nonumber\\
    & + 2 K^2\left(\mathrm{diam}(\mathcal{X}) + \frac{\zeta}{\mu} \right)\sqrt{\frac{\rho(2 \theta - \kappa)\mathrm{diam}(\mathcal{X})^2 + N C^2\left( \frac{1}{\kappa }\ln \left(\frac{\kappa}{2 \theta} T + 1\right) +  \frac{1}{2 \theta}\right)}{\rho(\kappa T^{4/5} + 2 \theta)}} \\ \\
    &\leq \|\pi^{\ast} - \sigma^0\|^2 + T^{1/5}\cdot \mathrm{diam}(\mathcal{X}) \cdot \sqrt{\frac{\rho(2 \theta - \kappa)\mathrm{diam}(\mathcal{X})^2 + N C^2\left( \frac{1}{\kappa }\ln \left(\frac{\kappa}{2 \theta} (T - 1) + 1\right) +  \frac{1}{2 \theta}\right)}{\rho(\kappa T^{4/5} + 2 \theta)}} \\
    &+ T^{2/5}\frac{\rho(2 \theta - \kappa)\mathrm{diam}(\mathcal{X})^2 + N C^2\left( \frac{1}{\kappa }\ln \left(\frac{\kappa}{2 \theta} T + 1\right) +  \frac{1}{2 \theta}\right)}{\rho(\kappa T^{4/5} + 2 \theta)} \nonumber\\
    & + 2T^{2/5}\left(\mathrm{diam}(\mathcal{X}) + \frac{\zeta}{\mu} \right)\sqrt{\frac{\rho(2 \theta - \kappa)\mathrm{diam}(\mathcal{X})^2 + N C^2\left( \frac{1}{\kappa }\ln \left(\frac{\kappa}{2 \theta} T + 1\right) +  \frac{1}{2 \theta}\right)}{\rho(\kappa T^{4/5} + 2 \theta)}} \\ \\
    &\leq \|\pi^{\ast} - \sigma^0\|^2  + 3\left(\mathrm{diam}(\mathcal{X}) + \frac{\zeta}{\mu} + 1\right)\left(\frac{\rho(2 \theta - \kappa)\mathrm{diam}(\mathcal{X})^2 + N C^2\left( \frac{1}{\kappa }\ln \left(\frac{\kappa}{2 \theta} T + 1\right) +  \frac{1}{2 \theta}\right)}{\rho\kappa}\right).
\end{align*}

\end{proof}

\section{Proofs for Section \ref{sec:discussion}}

\subsection{Proof of Lemma \ref{lem:formal_md_sp_full} (Formal Version of Lemma \ref{lem:md_sp_full})}
\begin{proof}[Proof of Lemma \ref{lem:formal_md_sp_full}]
From the definition of the Bregman divergence, we have for any $t\in \{kT_{\sigma}, kT_{\sigma}+1, \cdots, (k+1)T_{\sigma} - 1\}$:
\begin{align}
    &D_{\psi}(\pi_i^{\mu,\sigma^k}, \pi_i^{t+1}) - D_{\psi}(\pi_i^{\mu,\sigma^k}, \pi_i^t) + D_{\psi}(\pi_i^{t+1}, \pi_i^t) \nonumber\\
    &= \psi(\pi_i^{\mu,\sigma^k}) - \psi(\pi_i^{t+1}) - \langle \nabla \psi(\pi_i^{t+1}), \pi_i^{\mu,\sigma^k} - \pi_i^{t+1}\rangle \nonumber\\
    &- \psi(\pi_i^{\mu,\sigma^k}) + \psi(\pi_i^t) + \langle \nabla\psi(\pi_i^t), \pi_i^{\mu,\sigma^k} - \pi_i^t\rangle \nonumber\\
    &+ \psi(\pi_i^{t+1}) - \psi(\pi_i^t) - \langle \nabla\psi(\pi_i^t), \pi_i^{t+1} - \pi_i^t\rangle \nonumber\\
    &= \langle \nabla\psi(\pi_i^t) - \nabla\psi(\pi_i^{t+1}), \pi_i^{\mu,\sigma^k} - \pi_i^{t+1}\rangle.
    \label{eq:three_point_bregman_div}
\end{align}
From the first-order optimality condition for $\pi_i^{t+1}$, we get for any $t\in \{kT_{\sigma}, kT_{\sigma}+1, \cdots, (k+1)T_{\sigma} - 1\}$:
\begin{align}
    \langle \eta(\nabla_{\pi_i}v_i(\pi_i^t, \pi_{-i}^t) - \mu\nabla_{\pi_i}G(\pi_i^t, \sigma_i^k)) - \nabla \psi(\pi_i^{t+1}) + \nabla \psi(\pi_i^t), \pi_i^{\mu,\sigma^k} - \pi_i^{t+1}\rangle \leq 0.
    \label{eq:opt_cond_t_md}
\end{align}
Note that $\nabla_{\pi_i}G(\pi_i^t, \sigma_i^k)$ and $\nabla_{\pi_i}\psi(\pi_i^t)$ are well-defined because of Assumptions \ref{asm:rel_smooth} and \ref{asm:well_defined}.
By combining \eqref{eq:three_point_bregman_div} and \eqref{eq:opt_cond_t_md}, we have:
\begin{align}
D_{\psi}(\pi_i^{\mu,\sigma^k}, \pi_i^{t+1}) - D_{\psi}(\pi_i^{\mu,\sigma^k}, \pi_i^t) + D_{\psi}(\pi_i^{t+1}, \pi_i^t) \leq \eta\langle \nabla_{\pi_i}v_i(\pi_i^t, \pi_{-i}^t) - \mu\nabla_{\pi_i}G(\pi_i^t, \sigma_i^k), \pi_i^{t+1} - \pi_i^{\mu,\sigma^k}\rangle.
\label{eq:three_point_eq_md_full}
\end{align}

Next, we derive the following convergence result for $\pi^t$:
\begin{lemma}
\label{lem:grad_breg_full}
Suppose that Assumption \ref{asm:rel_smooth} holds with $\beta, \gamma\in (0, \infty)$, and the updated strategy profile $\pi^t$ satisfies the following condition: for any $t\in \{kT_{\sigma}, kT_{\sigma}+1, \cdots, (k+1)T_{\sigma} - 1\}$,
\begin{align*}
D_{\psi}(\pi^{\mu,\sigma^k}, \pi^{t+1}) - D_{\psi}(\pi^{\mu,\sigma^k}, \pi^t) + D_{\psi}(\pi^{t+1}, \pi^t) \leq \eta\sum_{i=1}^N \langle \nabla_{\pi_i}v_i(\pi_i^t, \pi_{-i}^t) - \mu\nabla_{\pi_i}G(\pi_i^t, \sigma_i^k), \pi_i^{t+1} - \pi_i^{\mu,\sigma^k}\rangle.
\end{align*}
Then, for any $t\in \{kT_{\sigma}, kT_{\sigma}+1, \cdots, (k+1)T_{\sigma} - 1\}$:
\begin{align*}
    D_{\psi}(\pi^{\mu,\sigma^k}, \pi^{t+1}) \leq D_{\psi}(\pi^{\mu,\sigma^k}, \pi^{kT_{\sigma}})\left(1 - \frac{\eta \mu\gamma}{2}\right)^{t - kT_{\sigma}+1}.
\end{align*}
\end{lemma}

It is easy to confirm that \eqref{eq:three_point_eq_md_full} satisfies the assumption in Lemma \ref{lem:grad_breg_full}.
Thus, taking $t=(k+1)T_{\sigma}-1$, we have:
\begin{align*}
    D_{\psi}(\pi^{\mu,\sigma^k}, \pi^{(k+1)T_{\sigma}}) \leq D_{\psi}(\pi^{\mu,\sigma^k}, \pi^{kT_{\sigma}})\left(1 - \frac{\eta \mu\gamma}{2}\right)^{T_{\sigma}}.
\end{align*}
Since $\sigma^k = \pi^{kT_{\sigma}}$ and $\sigma^{k+1} = \pi^{(k+1)T_{\sigma}}$, we conclude the statement.
\end{proof}

\subsection{Proof of Lemma~\ref{lem:formal_md_sp_noisy} (Formal Version of Lemma~\ref{lem:md_sp_noisy})}
\label{app:prf_gm_md_noisy}
\begin{proof}[Proof of Lemma~\ref{lem:formal_md_sp_noisy}]
Writing ${g}_i^t = \nabla_{\pi_i}v_i(\pi_i^t, \pi_{-i}^t) - \mu\nabla_{\pi_i}G(\pi_i^t, \sigma_i^k)$, from the first-order optimality condition for $\pi_i^{t+1}$, we get for any $t\in \{kT_{\sigma}, kT_{\sigma}+1, \cdots, (k+1)T_{\sigma} - 1\}$:
\begin{align}
    \langle \eta_t({g}_i^t +\xi_i^t) - \nabla \psi(\pi_i^{t+1}) + \nabla \psi(\pi_i^t), \pi_i^{\mu,\sigma^k} - \pi_i^{t+1}\rangle \leq 0.
    \label{eq:opt_cond_t_md_noise}
\end{align}
Note that $\nabla_{\pi_i}G(\pi_i^t, \sigma_i^k)$ and $\nabla_{\pi_i}\psi(\pi_i^t)$ are well-defined because of Assumptions \ref{asm:rel_smooth} and \ref{asm:well_defined}.
By combining \eqref{eq:three_point_bregman_div} and \eqref{eq:opt_cond_t_md_noise}, we have for any $t\in \{kT_{\sigma}, kT_{\sigma}+1, \cdots, (k+1)T_{\sigma} - 1\}$:
\begin{align}
    D_{\psi}(\pi_i^{\mu,\sigma^k}, \pi_i^{t+1}) - D_{\psi}(\pi_i^{\mu,\sigma^k}, \pi_i^t) + D_{\psi}(\pi_i^{t+1}, \pi_i^t) \leq \eta_t\langle g_i^t + \xi_i^t, \pi_i^{t+1} - \pi_i^{\mu,\sigma^k}\rangle.
\label{eq:three_point_eq_md_noisy}
\end{align}

We have the following Lemma that replaces the gradient with Bregman divergences:
\begin{lemma}
\label{lem:grad_breg_noise}
Under the noisy feedback setting, suppose that Assumption \ref{asm:rel_smooth} holds with $\beta, \gamma\in (0, \infty)$, and the updated strategy profile $\pi^t$ satisfies the following condition: for any $t\in \{kT_{\sigma}, kT_{\sigma}+1, \cdots, (k+1)T_{\sigma} - 1\}$,
\begin{align*}
    & D_{\psi}(\pi^{\mu,\sigma^k}, \pi^{t+1}) - D_{\psi}(\pi^{\mu,\sigma^k}, \pi^t) + D_{\psi}(\pi^{t+1}, \pi^t) \\
    & \quad \le \eta_t \sum_{i=1}^N \langle \nabla_{\pi_i}v_i(\pi_i^t, \pi_{-i}^t) - \mu\nabla_{\pi_i}G(\pi_i^t, \sigma_i^k) + \xi_i^t,  \pi_i^{t+1} - \pi_i^{\mu,\sigma^k}\rangle.
\end{align*}
Then, for any $t\in \{kT_{\sigma}, kT_{\sigma}+1, \cdots, (k+1)T_{\sigma} - 1\}$:
\begin{align*}
& D_{\psi}(\pi^{\mu,\sigma^k}, \pi^{t+1}) - D_{\psi}(\pi^{\mu,\sigma^k}, \pi^t) + D_{\psi}(\pi^{t+1}, \pi^t) \\
& \le \eta_t \left(\left( \frac{\mu^2 \gamma \rho^2 (\gamma + 2\beta) + 8 L^2}{2 \mu \gamma \rho^2} \right) D_{\psi}(\pi^{t+1}, \pi^t)  - \frac{\mu\gamma}{2} D_{\psi}(\pi^{\mu,\sigma^k}, \pi^t) \right)+ \eta_t \sum_{i=1}^N\langle \xi_i^t, \pi_i^{t+1} - \pi_i^{\mu,\sigma^k}\rangle.
\end{align*}
\end{lemma}
It is easy to confirm that \eqref{eq:three_point_eq_md_noisy} satisfies the assumption in Lemma \ref{lem:grad_breg_noise}.
Thus, for any $t\in \{kT_{\sigma}, kT_{\sigma}+1, \cdots, (k+1)T_{\sigma} - 1\}$:
\begin{align*}
& D_{\psi}(\pi^{\mu,r}, \pi^{t+1}) - D_{\psi}(\pi^{\mu,\sigma^k}, \pi^t) + D_{\psi}(\pi^{t+1}, \pi^t) \\
& \le \eta_t \left(\left( \frac{\mu^2 \gamma \rho^2 (\gamma + 2\beta) + 8 L^2}{2 \mu \gamma \rho^2} \right) D_{\psi}(\pi^{t+1}, \pi^t)  - \frac{\mu\gamma}{2} D_{\psi}(\pi^{\mu,\sigma^k}, \pi^t) \right)+ \eta_t \sum_{i=1}^N\langle \xi_i^t, \pi_i^{t+1} - \pi_i^{\mu,\sigma^k}\rangle
\\
& = \eta_t(\theta D_\psi(\pi^{t+1}, \pi^{t}) - \kappa D_\psi(\pi^{\mu,\sigma^k}, \pi^{t})) + \eta_t \sum_{i=1}^N\langle \xi_i^t,  \pi_i^{t+1} - \pi_i^{\mu,\sigma^k}\rangle.
\end{align*}

Then, using this inequality, we can bound the expected value of $D_{\psi}(\pi^{\mu, \sigma^k}, \pi^{t+1})$ as follows:
\begin{lemma}
\label{lem:noisy_tel_sum}
    Suppose that with some constants $\theta > \kappa>0$, for all $t\in \{kT_{\sigma}, kT_{\sigma}+1, \cdots, (k+1)T_{\sigma} - 1\}$, the following inequality holds:
\begin{align*}
    & D_{\psi}(\pi^{\mu,\sigma^k}, \pi^{t+1}) - D_{\psi}(\pi^{\mu,\sigma^k}, \pi^t) + D_{\psi}(\pi^{t+1}, \pi^t)  
    \\
    & \quad \le \eta_t(\theta D_\psi(\pi^{t+1}, \pi^{t}) - \kappa D_\psi(\pi^{\mu,\sigma^k}, \pi^{t})) + \eta_t \sum_{i=1}^N\langle \xi_i^t,  \pi_i^{t+1} - \pi_i^{\mu,\sigma^k}\rangle.
\end{align*}
Then, under Assumption~\ref{asm:noise}, for any $t\in \{kT_{\sigma}, kT_{\sigma}+1, \cdots, (k+1)T_{\sigma} - 1\}$,
\begin{align*}
   &\mathbb{E}[D_{\psi}(\pi^{\mu,\sigma^k}, \pi^{t+1})] \\
    &\le \frac{2 \theta - \kappa}{\kappa (t- kT_{\sigma}) + 2 \theta} \mathbb{E}[D_\psi(\pi^{\mu,\sigma^k}, \pi^{kT_{\sigma}})] + \frac{N C^2 }{\rho (\kappa (t- kT_{\sigma}) + 2 \theta)} \left( \frac{1}{\kappa }\ln \left(\frac{\kappa}{2 \theta} (t- kT_{\sigma}) + 1\right) +  \frac{1}{2 \theta}\right).
\end{align*}
\end{lemma}

Taking $t=(k+1)T_{\sigma}-1$, we have:
\begin{align*}
&\mathbb{E}[D_{\psi}(\pi^{\mu,\sigma^k}, \pi^{(k+1)T_{\sigma}})] \\
&\le \frac{2 \theta - \kappa}{\kappa (T_{\sigma} - 1) + 2 \theta} \mathbb{E}[D_\psi(\pi^{\mu,\sigma^k}, \pi^{kT_{\sigma}})] + \frac{N C^2 }{\rho (\kappa (T_{\sigma} - 1) + 2 \theta)} \left( \frac{1}{\kappa }\ln \left(\frac{\kappa}{2 \theta} (T_{\sigma} - 1) + 1\right) +  \frac{1}{2 \theta}\right).
\end{align*}
Since $\sigma^k = \pi^{kT_{\sigma}}$ and $\sigma^{k+1} = \pi^{(k+1)T_{\sigma}}$, we conclude the statement.
\end{proof}

\subsection{Proof of Theorem \ref{thm:exact_conv_bregman}}
\label{sec:appx_exact_conv_bregman}
\begin{proof}[Proof of Theorem \ref{thm:exact_conv_bregman}]
When $G(\pi_i, \pi_i')=D_{\psi'}(\pi_i, \pi_i')$ for all $i\in [N]$ and $\pi,\pi'\in \mathcal{X}$, we can show that the Bregman divergence from a Nash equilibrium $\pi^{\ast}\in \Pi^{\ast}$ to $\sigma^{k+1}$ monotonically decreases:
\begin{lemma}
\label{lem:exact_conv_bregman}
Assume that $G$ is a Bregman divergence $D_{\psi'}$ for some strongly convex function $\psi'$.
Then, for any Nash equilibrium $\pi^{\ast} \in \Pi^{\ast}$ of the underlying game, we have for any $k\geq 0$:
\begin{align*}
    D_{\psi'}(\pi^{\ast}, \sigma^{k+1}) - D_{\psi'}(\pi^{\ast}, \sigma^k) &\leq -D_{\psi'}(\sigma^{k+1}, \sigma^k).
\end{align*}
\end{lemma}
By summing the inequality in Lemma \ref{lem:exact_conv_bregman} from $k=0$ to $K$, we have:
\begin{align*}
     D_{\psi'}(\pi^{\ast}, \sigma^0) \geq \sum_{k=0}^{K} D_{\psi'}(\sigma^{k+1}, \sigma^k) \geq \frac{\rho}{2}\sum_{k=0}^K \|\sigma^{k+1} - \sigma^k\|^2,
\end{align*}
where the second inequality follows from the strong convexity of $\psi'$.
Therefore, $\sum_{k=0}^{\infty}\|\sigma^{k+1} - \sigma^k\|^2 < \infty$, which implies that $\|\sigma^{k+1} - \sigma^k\| \to 0$ as $k\to \infty$.

By the compactness of $\mathcal{X}$ and Bolzano–Weierstrass theorem, there exists a subsequence $k_n$ and a limit point $\hat{\sigma}\in \mathcal{X}$ such that $\sigma^{k_n}\to \hat{\sigma}$ as $n\to \infty$.
Since $\|\sigma^{k_n+1} - \sigma^{k_n}\| \to 0$ as $n\to \infty$, we have $\sigma^{k_n+1} \to \hat{\sigma}$ as $n\to \infty$.
Thus, the limit point $\hat{\sigma}$ is the fixed point of the updating rule.
From the following lemma, we show that the fixed point $\hat{\sigma}$ is a Nash equilibrium of the underlying game:
\begin{lemma}
\label{lem:stop_condition_bregman}
Assume that $G$ is a Bregman divergence $D_{\psi'}$ for some strongly convex function $\psi'$, and $\sigma^{k+1}=\pi^{\mu, \sigma^k}$ for $k\geq 0$.
If $\sigma^{k+1} = \sigma^k$, then $\sigma^k$ is a Nash equilibrium of the underlying game.
\end{lemma}

On the other hand, by summing the inequality in Lemma \ref{lem:exact_conv_bregman} from $k=k_n$ to $k=K-1$ for $K\geq k_n + 1$, we have:
\begin{align*}
    0 \leq D_{\psi'}(\hat{\sigma}, \sigma^K) &\leq D_{\psi'}(\hat{\sigma}, \sigma^{k_n}).
\end{align*}
Since $\sigma^{k_n}\to \hat{\sigma}$ as $n\to \infty$, we have $\sigma^K \to \hat{\sigma}$ as $K\to \infty$.
Since $\hat{\sigma}$ is a Nash equilibrium of the underlying game, we conclude the first statement of the theorem.

\end{proof}

\subsection{Proof of Theorem \ref{thm:exact_conv_alpha}}
\label{sec:appx_exact_conv_alpha}
\begin{proof}[Proof of Theorem \ref{thm:exact_conv_alpha}]
We first show that the divergence between $\Pi^{\ast}$ and $\sigma^k$ decreases monotonically as $k$ increases:
\begin{lemma}
\label{lem:exact_conv_min_alpha}
Suppose that the same assumptions in Theorem \ref{thm:exact_conv_alpha} hold.
For any $k\geq 0$, if $\sigma^k\in \mathcal{X} \setminus \Pi^{\ast}$, then:
\begin{align*}
    \min_{\pi^{\ast}\in \Pi^{\ast}} \mathrm{KL}(\pi^{\ast}, \sigma^{k+1}) < \min_{\pi^{\ast}\in \Pi^{\ast}} \mathrm{KL}(\pi^{\ast}, \sigma^k).
\end{align*}
Otherwise, if $\sigma^k\in \Pi^{\ast}$, then $\sigma^{k+1}=\sigma^k \in \Pi^{\ast}$.
\end{lemma}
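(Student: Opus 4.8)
The plan is to mirror the argument for the Bregman case (Lemma~\ref{lem:exact_conv_bregman}), but to measure progress in $\mathrm{KL}$ rather than in $G$ itself, since here the perturbation divergence $G$ (the $\alpha$-divergence) no longer coincides with the metric in which convergence is tracked. I would first dispose of the easy case: if $\sigma^k\in\Pi^{\ast}$, then each $\sigma^k_i$ maximizes $v_i(\cdot,\sigma^k_{-i})$ and is simultaneously the unique minimizer of $G(\cdot,\sigma^k_i)$, so $\sigma^k$ maximizes the perturbed payoff and satisfies the optimality condition defining $\pi^{\mu,\sigma^k}$ in \eqref{eq:perturbed_nash}. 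Because the strict convexity of $G$ makes the perturbed game strictly monotone, its equilibrium is unique, giving $\sigma^{k+1}=\pi^{\mu,\sigma^k}=\sigma^k\in\Pi^{\ast}$. This is the analogue of the stopping condition Lemma~\ref{lem:stop_condition_bregman}, and its contrapositive supplies $\sigma^{k+1}\neq\sigma^k$ whenever $\sigma^k\notin\Pi^{\ast}$, which I will need for strictness.

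For the main case $\sigma^k\notin\Pi^{\ast}$, the first step is the variational inequality
\[
\sum_{i=1}^N \big\langle \nabla_{\pi_i} G(\sigma^{k+1}_i, \sigma^k_i),\, \pi^{\ast}_i - \sigma^{k+1}_i\big\rangle \geq 0 \qquad (\star)
\]
for every $\pi^{\ast}\in\Pi^{\ast}$, where $\sigma^{k+1}=\pi^{\mu,\sigma^k}$. This is obtained exactly as in the exploitability proofs: chain together the first-order optimality condition for $\pi^{\mu,\sigma^k}$ tested against $\pi^{\ast}$, the monotonicity \eqref{eq:monotone} applied to the pair $(\pi^{\mu,\sigma^k},\pi^{\ast})$, and the Nash condition for $\pi^{\ast}$ tested against $\pi^{\mu,\sigma^k}$; the three cancel all $\nabla_{\pi_i}v_i$ terms and leave only the perturbation gradient. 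This step is divergence-agnostic and identical to the Bregman derivation.

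The crux is converting $(\star)$ into a decrease of $\mathrm{KL}$. Writing $t_{ij}=\sigma^{k+1}_{ij}/\sigma^k_{ij}$ and inserting the explicit gradient $\nabla_{\pi_{ij}}G(\sigma^{k+1}_i,\sigma^k_i)=-\tfrac{1}{1-\alpha}\,t_{ij}^{\alpha-1}$ of the $\alpha$-divergence (up to its normalization; I take $\alpha\in(0,1)$), $(\star)$ becomes $\sum_{i=1}^N C_i\leq 0$ with $C_i=\sum_j t_{ij}^{\alpha-1}(\pi^{\ast}_{ij}-\sigma^{k+1}_{ij})$. Then, for each player $i$, I would apply Jensen to the concave $\ln$ to get $\sum_j\pi^{\ast}_{ij}\ln t_{ij}\geq -\tfrac{1}{1-\alpha}\ln\big(\sum_j \pi^{\ast}_{ij}\,t_{ij}^{-(1-\alpha)}\big)$, rewrite $\sum_j\pi^{\ast}_{ij}t_{ij}^{-(1-\alpha)}=C_i+\sum_j (\sigma^{k+1}_{ij})^{\alpha}(\sigma^k_{ij})^{1-\alpha}$, and bound the last sum by $1$ via Hölder's inequality on the two simplex vectors. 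Using $\ln(1+x)\leq x$ this yields $\sum_j\pi^{\ast}_{ij}\ln t_{ij}\geq -\tfrac{1}{1-\alpha}C_i$; summing over $i$ and invoking $\sum_i C_i\leq 0$ gives $\mathrm{KL}(\pi^{\ast},\sigma^k)-\mathrm{KL}(\pi^{\ast},\sigma^{k+1})=\sum_{i,j}\pi^{\ast}_{ij}\ln t_{ij}\geq 0$. Choosing $\pi^{\ast}$ to be a minimizer of $\mathrm{KL}(\cdot,\sigma^k)$ over $\Pi^{\ast}$ then gives $\min_{\pi^{\ast}}\mathrm{KL}(\pi^{\ast},\sigma^{k+1})\leq \mathrm{KL}(\pi^{\ast},\sigma^{k+1})\leq \mathrm{KL}(\pi^{\ast},\sigma^k)=\min_{\pi^{\ast}}\mathrm{KL}(\pi^{\ast},\sigma^k)$, and strictness follows since the Hölder step is strict for any player with $\sigma^{k+1}_i\neq\sigma^k_i$, which is guaranteed by $\sigma^{k+1}\neq\sigma^k$ from the first paragraph.

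The main obstacle I anticipate is exactly this conversion. Unlike the Bregman case, where the three-point identity turns $(\star)$ directly into $D_{\psi'}(\pi^{\ast},\sigma^k)-D_{\psi'}(\pi^{\ast},\sigma^{k+1})-D_{\psi'}(\sigma^{k+1},\sigma^k)$, here the perturbation metric and the progress metric differ, so no such identity is available and the Jensen--Hölder combination must substitute for it. A further subtlety is that $(\star)$ constrains only the aggregate $\sum_i C_i$, not each $C_i$ separately, so a naive per-player bound fails; passing through $\ln(1+x)\leq x$ to linearize in $C_i$ before summing is what decouples the players. I would also verify the sign conventions and admissible range of $\alpha$ fixed in Theorem~\ref{thm:exact_conv_alpha} (the sketch uses $\alpha\in(0,1)$, so that $1-\alpha>0$ and the relevant power is convex), as other ranges would reverse several of the inequalities above.
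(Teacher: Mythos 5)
Your main-case computation is sound and is essentially the paper's own argument for its Lemma~\ref{lem:exact_conv_alpha}, reorganized per player: the variational inequality $(\star)$ obtained by chaining the perturbed first-order optimality condition, monotonicity \eqref{eq:monotone}, and the Nash condition is exactly the paper's inequalities \eqref{eq:nabla_g_ineq}--\eqref{eq:nash_optimality_condition}, and your Jensen/H\"older/$\ln(1+x)\le x$ chain correctly yields the weak decrease $\mathrm{KL}(\pi^{\ast},\sigma^{k+1})\le\mathrm{KL}(\pi^{\ast},\sigma^{k})$. The genuine gap is in strictness. You need the implication ``$\sigma^k\notin\Pi^{\ast}\Rightarrow\sigma^{k+1}\neq\sigma^k$,'' i.e.\ the contrapositive of \emph{fixed points of the update are Nash equilibria}. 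But your first paragraph proves the \emph{converse}: Nash equilibria are fixed points ($\sigma^k\in\Pi^{\ast}\Rightarrow\sigma^{k+1}=\sigma^k$). The contrapositive of what you proved is ``$\sigma^{k+1}\neq\sigma^k\Rightarrow\sigma^k\notin\Pi^{\ast}$,'' which is not the implication you invoke; you have conflated a statement with its converse. Without ``fixed point $\Rightarrow$ Nash'' you cannot exclude the possibility that the update stalls at some $\sigma^k\notin\Pi^{\ast}$, in which case the minimum KL divergence does not strictly decrease and the first claim of the lemma fails. Note this direction is not a formality: it is precisely where the simplex structure of the divergence enters, and it is the content of the paper's Lemma~\ref{lem:stop_condition_alpha}.

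The fix is short and uses the gradient formula you already computed: if $\sigma^{k+1}=\sigma^k$, then $t_{ij}=1$ for all $i,j$, so $\nabla_{\pi_i}G(\sigma_i^{k+1},\sigma_i^k)=-\frac{1}{1-\alpha}\mathbf{1}$ is a constant vector, hence orthogonal to every difference $\pi_i-\sigma_i^{k+1}$ of simplex points; the perturbed first-order condition then collapses to $\sum_{i=1}^N\langle\nabla_{\pi_i}v_i(\sigma^k),\pi_i-\sigma_i^k\rangle\le 0$ for all $\pi\in\mathcal{X}$, i.e.\ $\sigma^k\in\Pi^{\ast}$ (the analogous computation works for the R\'enyi and reverse KL cases covered by Theorem~\ref{thm:exact_conv_alpha}). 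With that lemma in hand, your argument goes through: on $\sigma^k\notin\Pi^{\ast}$ you get $\sigma^{k+1}\neq\sigma^k$, your H\"older step is strict for some player, and strictness propagates through the chain; evaluating at the minimizer $\pi^{\star}\in\argmin_{\pi^{\ast}\in\Pi^{\ast}}\mathrm{KL}(\pi^{\ast},\sigma^k)$ finishes the first claim as in the paper. Your direct treatment of the case $\sigma^k\in\Pi^{\ast}$ (via uniqueness of the equilibrium of the strictly monotone perturbed game) is valid and is arguably cleaner than the paper's, which derives that claim by contradiction from its strict-decrease lemma; but it cannot substitute for the converse direction needed above.
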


From Lemma \ref{lem:exact_conv_min_alpha}, the sequence $\{\min_{\pi^{\ast}\in \Pi^{\ast}} \mathrm{KL}(\pi^{\ast}, \sigma^k)\}_{k\geq 0}$ is a monotonically decreasing sequence and is bounded from below by zero.
Thus, $\{\min_{\pi^{\ast}\in \Pi^{\ast}} \mathrm{KL}(\pi^{\ast}, \sigma^k)\}_{k\geq 0}$ converges to some constant $b\geq 0$.
We show that $b=0$ by a contradiction argument.

Suppose $b>0$ and let us define $B=\min_{\pi^{\ast}\in \Pi^{\ast}} \mathrm{KL}(\pi^{\ast}, \sigma^0)$.
Since $\min_{\pi^{\ast}\in \Pi^{\ast}} \mathrm{KL}(\pi^{\ast}, \sigma^k)$ monotonically decreases, $\sigma^k$ is in the set $\Omega_{b, B} = \{\sigma\in \mathcal{X} ~|~ b \leq \min_{\pi^{\ast}\in \Pi^{\ast}} \mathrm{KL}(\pi^{\ast}, \sigma) \leq B\}$ for all $k\geq 0$.
Since $\min_{\pi^{\ast}\in \Pi^{\ast}} \mathrm{KL}(\pi^{\ast}, \cdot)$ is a continuous function on $\mathcal{X}$, the preimage $\Omega_{b, B}$ of the closed set $[b, B]$ is also closed.
Furthermore, since $\mathcal{X}$ is compact and then bounded, $\Omega_{b, B}$ is a bounded set.
Thus, $\Omega_{b, B}$ is a compact set.

Next, we show that the function which maps the slingshot strategies $\sigma$ to the associated stationary point $\pi^{\mu,\sigma}$ is continuous:
\begin{lemma}
\label{lem:map_continuity}
Let $F(\sigma):\mathcal{X}\to \mathcal{X}$ be a function that maps the slingshot strategies $\sigma$ to the stationary point $\pi^{\mu,\sigma}$ defined by \eqref{eq:perturbed_nash}.
In the same setup of Theorem \ref{thm:exact_conv_alpha}, $F(\cdot)$ is a continuous function on $\mathcal{X}$.
\end{lemma}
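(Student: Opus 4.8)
The plan is to characterize $\pi^{\mu,\sigma}=F(\sigma)$ by its first-order optimality (variational inequality) condition and then exploit the strong monotonicity that the perturbation injects into the game, converting closeness of the slingshot arguments into closeness of the stationary points. First I would note that $F$ is genuinely single-valued: since $G(\cdot,\sigma_i)$ is $\gamma$-strongly convex relative to the $\rho$-strongly convex $\psi$ (Assumption~\ref{asm:rel_smooth}), the perturbed operator $\pi\mapsto(\nabla_{\pi_i}v_i(\pi)-\mu\nabla_{\pi_i}G(\pi_i,\sigma_i))_{i\in[N]}$ is strongly monotone, so the solution of \eqref{eq:perturbed_nash} is unique and $F$ is well defined.

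Fix $\sigma,\sigma'\in\mathcal{X}$ and write $\pi=F(\sigma)$, $\pi'=F(\sigma')$. Because each $v_i(\cdot,\pi_{-i})$ is concave and $G(\cdot,\sigma_i)$ convex, the maximizers defining \eqref{eq:perturbed_nash} satisfy, using the other point as the feasible test direction,
\begin{align*}
\sum_{i=1}^N\langle \nabla_{\pi_i}v_i(\pi)-\mu\nabla_{\pi_i}G(\pi_i,\sigma_i),\ \pi_i'-\pi_i\rangle &\leq 0,\\
\sum_{i=1}^N\langle \nabla_{\pi_i}v_i(\pi')-\mu\nabla_{\pi_i}G(\pi_i',\sigma_i'),\ \pi_i-\pi_i'\rangle &\leq 0.
\end{align*}
Adding the two, the payoff term becomes $\sum_i\langle\nabla_{\pi_i}v_i(\pi)-\nabla_{\pi_i}v_i(\pi'),\pi_i'-\pi_i\rangle$, which is nonnegative by the monotonicity \eqref{eq:monotone} of the game. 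Thus the sum reduces to $\mu\sum_i\langle\nabla_{\pi_i}G(\pi_i,\sigma_i)-\nabla_{\pi_i}G(\pi_i',\sigma_i'),\pi_i'-\pi_i\rangle\geq 0$.

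Next I would decompose $\nabla_{\pi_i}G(\pi_i,\sigma_i)-\nabla_{\pi_i}G(\pi_i',\sigma_i')$ by inserting $\pm\nabla_{\pi_i}G(\pi_i',\sigma_i)$. The term with common second argument is controlled by strong convexity: since $G(\cdot,\sigma_i)$ is $\gamma\rho$-strongly convex in the Euclidean sense (combining Assumption~\ref{asm:rel_smooth} with $D_\psi(\cdot,\cdot)\geq\tfrac{\rho}{2}\|\cdot\|^2$), gradient monotonicity gives $\sum_i\langle\nabla_{\pi_i}G(\pi_i,\sigma_i)-\nabla_{\pi_i}G(\pi_i',\sigma_i),\pi_i'-\pi_i\rangle\leq-\gamma\rho\|\pi-\pi'\|^2$. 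The remaining cross term is bounded by Cauchy--Schwarz, and the inequality collapses to
\begin{align*}
\|\pi-\pi'\|\leq \frac{1}{\gamma\rho}\sqrt{\sum_{i=1}^N\big\|\nabla_{\pi_i}G(\pi_i',\sigma_i)-\nabla_{\pi_i}G(\pi_i',\sigma_i')\big\|^2}.
\end{align*}
Continuity then follows provided the right-hand side tends to $0$ as $\sigma'\to\sigma$.

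The main obstacle I expect is precisely establishing this last convergence uniformly. One must show $\sqrt{\sum_i\|\nabla_{\pi_i}G(\pi_i',\sigma_i)-\nabla_{\pi_i}G(\pi_i',\sigma_i')\|^2}\to 0$ even though the first argument $\pi_i'=F(\sigma')$ itself moves; the natural route is joint continuity of the mixed gradient of $G$ together with compactness to upgrade to uniform continuity in the slingshot argument. This is delicate for the $\alpha$-divergence and R\'enyi divergence of Theorem~\ref{thm:exact_conv_alpha}, whose gradients can blow up near the boundary of the simplex, so I would confine the stationary points to a compact interior subset of $\mathcal{X}$ (or invoke the explicit closed form of $\nabla_{\pi_i}G$ guaranteed by the setup of that theorem) to secure the required uniform modulus of continuity and complete the argument.
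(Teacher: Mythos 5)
Your argument is, at bottom, the same energy argument the paper uses: the paper dresses it up as a stationarity computation for a continuous-time FTRL-SP trajectory (computing $\tfrac{d}{dt}\mathrm{KL}(\pi^{\mu,\sigma'},\pi^t)$ and setting $\pi^t\equiv\pi^{\mu,\sigma}$), but once unwound, both proofs combine the first-order optimality conditions of the two stationary points, the monotonicity \eqref{eq:monotone} of the game, strong monotonicity of $\nabla_{\pi_i}G$ in its first argument, and Cauchy--Schwarz on a gradient difference in the second argument. Your chain of inequalities up to $\gamma\rho\,\|\pi-\pi'\|^2 \leq \bigl(\sum_{i=1}^N\|\nabla_{\pi_i}G(\pi_i',\sigma_i)-\nabla_{\pi_i}G(\pi_i',\sigma_i')\|^2\bigr)^{1/2}\|\pi-\pi'\|$ is valid.

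The gap is the one you flag yourself, and it is genuine: because you inserted $\pm\nabla_{\pi_i}G(\pi_i',\sigma_i)$, the residual second-argument difference sits at the \emph{moving} point $\pi'=F(\sigma')$, so closing the proof needs a modulus of continuity in the slingshot argument that is uniform over the unknown locations of $F(\sigma')$. For the divergences of Theorem \ref{thm:exact_conv_alpha} this is not innocuous: for the $\alpha$-divergence, $\nabla_{\pi_{ij}}G(\pi_i',\sigma_i)-\nabla_{\pi_{ij}}G(\pi_i',\sigma_i') = -\tfrac{1}{1-\alpha}(\pi'_{ij})^{\alpha-1}\bigl(\sigma_{ij}^{1-\alpha}-(\sigma'_{ij})^{1-\alpha}\bigr)$, which blows up as $\pi'_{ij}\to 0$, so you would first have to prove that $F(\sigma')$ stays in a fixed compact subset of the interior as $\sigma'\to\sigma$ --- a quantitative interiority estimate you do not supply. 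The paper's proof is arranged precisely so that this issue never arises: it effectively inserts the cross term the other way, $\pm\nabla_{\pi_i}G(\pi_i,\sigma_i')$, applying strong monotonicity to $\nabla_{\pi_i}G(\cdot,\sigma_i')$ at the two stationary points and leaving the second-argument difference evaluated at the \emph{fixed} point $\pi=F(\sigma)$; then only pointwise continuity of $\sigma'\mapsto\nabla_{\pi_i}G(\pi_i^{\mu,\sigma},\sigma_i')$ is required. Swapping your decomposition accordingly closes your proof. One further caveat: Assumption \ref{asm:rel_smooth} with a fixed $\gamma$ is not among the hypotheses of Theorem \ref{thm:exact_conv_alpha}; for the $\alpha$-/R\'enyi divergences the strong-convexity modulus must be read off the explicit Hessian and is roughly $\min_{i,j}\sigma'_{ij}$ (or $\min_{i,j}\sigma_{ij}$ in your orientation), which is positive only because the slingshot is interior, so your constant $\gamma\rho$ should carry this $\sigma$-dependence rather than be treated as universal.
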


From Lemma \ref{lem:map_continuity}, $\min_{\pi^{\ast}\in \Pi^{\ast}} \mathrm{KL}(\pi^{\ast}, F(\sigma)) - \min_{\pi^{\ast}\in \Pi^{\ast}} \mathrm{KL}(\pi^{\ast}, \sigma)$ is also a continuous function.
Since a continuous function has a maximum over a compact set $\Omega_{b, B}$, the maximum $\delta = \max_{\sigma\in \Omega_{b, B}}\left\{\min_{\pi^{\ast}\in \Pi^{\ast}} \mathrm{KL}(\pi^{\ast}, F(\sigma)) - \min_{\pi^{\ast}\in \Pi^{\ast}} \mathrm{KL}(\pi^{\ast}, \sigma)\right\}$ exists.
From Lemma \ref{lem:exact_conv_min_alpha} and the assumption that $b>0$, we have $\delta < 0$.
It follows that:
\begin{align*}
    \min_{\pi^{\ast}\in \Pi^{\ast}} \mathrm{KL}(\pi^{\ast}, \sigma^k) &= \min_{\pi^{\ast}\in \Pi^{\ast}} \mathrm{KL}(\pi^{\ast}, \sigma^0) + \sum_{l=0}^{k-1}\left(\min_{\pi^{\ast}\in \Pi^{\ast}} \mathrm{KL}(\pi^{\ast}, \sigma^{l+1}) - \min_{\pi^{\ast}\in \Pi^{\ast}} \mathrm{KL}(\pi^{\ast}, \sigma^l)\right) \\
    &\leq B + \sum_{l=0}^{k-1}\delta = B + k\delta.
\end{align*}
This implies that $\min_{\pi^{\ast}\in \Pi^{\ast}} \mathrm{KL}(\pi^{\ast}, \sigma^k) < 0$ for $k > \frac{B}{-\delta}$, which contradicts $\min_{\pi^{\ast}\in \Pi^{\ast}} \mathrm{KL}(\pi^{\ast}, \sigma) \geq 0$.
Therefore, the sequence $\{\min_{\pi^{\ast}\in \Pi^{\ast}} \mathrm{KL}(\pi^{\ast}, \sigma^k)\}_{k\geq 0}$ converges to $0$, and $\sigma^k$ converges to $\Pi^{\ast}$.
\end{proof}

\subsection{Proof of Lemma \ref{lem:formal_ftrl_sp_full}}
\label{sec:appx_frtl_sp_full}
\begin{proof}[Proof of Lemma \ref{lem:formal_ftrl_sp_full}]
First, we introduce the following lemma:
\begin{lemma}
\label{lem:legendre}
Let us define $T(y_i) = \argmax_{x\in \mathcal{X}_i}\{\langle y_i, x\rangle -\psi(x)\}$.
Assuming $\psi: \mathcal{X}_i\to \mathbb{R}$ be a convex function of the Legendre type, we have for any $\pi_i\in \mathcal{X}_i$:
\begin{align*}
    D_{\psi}(\pi_i, T(y_i)) = \psi(\pi_i) - \psi(T(y_i)) - \langle y_i, \pi_i - T(y_i)\rangle.
\end{align*}
\end{lemma}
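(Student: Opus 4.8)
The plan is to reduce the claim directly to the definition of the Bregman divergence and then identify the only nontrivial step. By definition, $D_{\psi}(\pi_i, T(y_i)) = \psi(\pi_i) - \psi(T(y_i)) - \langle \nabla\psi(T(y_i)), \pi_i - T(y_i)\rangle$, so the asserted formula is equivalent to the identity
\[
    \langle \nabla\psi(T(y_i)) - y_i, \pi_i - T(y_i)\rangle = 0 \qquad \text{for all } \pi_i \in \mathcal{X}_i.
\]
In other words, everything comes down to showing that $y_i$ and $\nabla\psi(T(y_i))$ act identically on every feasible displacement from the optimizer.

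First I would invoke the Legendre (essentially smooth, strictly convex) property of $\psi$. Essential smoothness forces $\|\nabla\psi\|$ to blow up near the boundary of the domain, which guarantees that the maximizer $T(y_i)$ of the strictly concave objective $x \mapsto \langle y_i, x\rangle - \psi(x)$ lies in the relative interior of $\mathcal{X}_i$; strict convexity makes this maximizer unique. Next I would exploit the standing assumption that $\mathcal{X}_i$ is an affine subset, i.e.\ $Ax = b$ for all $x \in \mathcal{X}_i$. Hence for any $\pi_i \in \mathcal{X}_i$ the displacement $d := \pi_i - T(y_i)$ satisfies $Ad = 0$, so $d$ lies in the tangent space $\ker A$, and because $T(y_i)$ is in the relative interior, both $T(y_i) + \epsilon d$ and $T(y_i) - \epsilon d$ remain feasible for all sufficiently small $\epsilon > 0$.

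With these two facts in place, I would apply the first-order optimality condition for the maximization. The gradient of the objective at $T(y_i)$ is $y_i - \nabla\psi(T(y_i))$, and optimality over the convex set $\mathcal{X}_i$ gives $\langle y_i - \nabla\psi(T(y_i)), \pi_i - T(y_i)\rangle \le 0$. Applying this with the feasible direction $d$ and again with the equally feasible direction $-d$ yields both $\langle y_i - \nabla\psi(T(y_i)), d\rangle \le 0$ and $\langle y_i - \nabla\psi(T(y_i)), d\rangle \ge 0$, forcing equality. Substituting back into the definition of $D_{\psi}$ then gives the claimed expression.

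The main obstacle is precisely the upgrade from the inequality form of the optimality condition to an exact equality, and this is where both hypotheses are indispensable. The Legendre property is what pushes the optimizer into the relative interior so that motion along both $+d$ and $-d$ is admissible, while the affine structure of $\mathcal{X}_i$ is what makes the feasible directions a genuine linear subspace rather than merely a cone. Absent either assumption one would only recover the variational inequality $\langle \nabla\psi(T(y_i)) - y_i, \pi_i - T(y_i)\rangle \ge 0$, which is too weak to yield the stated identity.
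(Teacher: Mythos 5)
Your proof is correct and takes essentially the same route as the paper's: both reduce the claim to showing $\langle y_i - \nabla\psi(T(y_i)), \pi_i - T(y_i)\rangle = 0$, using the Legendre property to place $T(y_i)$ in the relative interior and the affine structure of $\mathcal{X}_i$ so that feasible displacements form a linear subspace. The only cosmetic difference is that the paper phrases stationarity via a Lagrange multiplier ($y_i - \nabla\psi(T(y_i)) = A^{\top}\nu$, whose pairing with any displacement vanishes because $A\pi_i = AT(y_i) = b$), whereas you obtain the same equality by applying the variational inequality to both directions $+d$ and $-d$.
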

Defining $y_i^t=\eta \sum_{s=0}^t\left(\nabla_{\pi_i}v_i(\pi_i^s, \pi_{-i}^s) - \mu \nabla_{\pi_i}G(\pi_i^s, \sigma_i^k)\right)$ and letting $\pi_i=\pi_i^{\mu,\sigma^k}$, $y_i=y_i^t$ in Lemma \ref{lem:legendre}, we have:
\begin{align*}
    D_{\psi}(\pi_i^{\mu,\sigma^k}, \pi_i^{t+1}) = \psi(\pi_i^{\mu, \sigma^k}) - \psi(\pi_i^{t+1}) - \langle y_i^t, \pi_i^{\mu,\sigma^k} - \pi_i^{t+1}\rangle.
\end{align*}
Note that $\nabla_{\pi_i}G(\pi_i^s, \sigma_i^k)$ is well-defined because of Assumptions \ref{asm:rel_smooth} and \ref{asm:well_defined_ftrl}.
Using this equation, we get for any $t\in \{kT_{\sigma}, kT_{\sigma}+1, \cdots, (k+1)T_{\sigma} - 1\}$:
\begin{align}
    &D_{\psi}(\pi_i^{\mu,\sigma^k}, \pi_i^{t+1}) - D_{\psi}(\pi_i^{\mu,\sigma^k}, \pi_i^t) + D_{\psi}(\pi_i^{t+1}, \pi_i^t) \nonumber\\
    &= \psi(\pi_i^{\mu,\sigma^k}) - \psi(\pi_i^{t+1}) - \langle y_i^t, \pi_i^{\mu,\sigma^k} - \pi_i^{t+1}\rangle - \psi(\pi_i^{\mu,\sigma^k}) + \psi(\pi_i^t) + \langle y_i^{t-1}, \pi_i^{\mu,\sigma^k} - \pi_i^t\rangle \nonumber\\
    &+ \psi(\pi_i^{t+1}) - \psi(\pi_i^t) - \langle y_i^{t-1}, \pi_i^{t+1} - \pi_i^t\rangle \nonumber\\
    &= - \langle y_i^t, \pi_i^{\mu,\sigma^k} - \pi_i^{t+1}\rangle + \langle y_i^{t-1}, \pi_i^{\mu,\sigma^k} - \pi_i^{t+1}\rangle \nonumber\\
    &= \langle y_i^t - y_i^{t-1}, \pi_i^{t+1} - \pi_i^{\mu,\sigma^k}\rangle \nonumber\\
    &= \eta \langle \nabla_{\pi_i}v_i(\pi_i^t, \pi_{-i}^t) - \mu \nabla_{\pi_i}G(\pi_i^t, \sigma_i^k), \pi_i^{t+1} - \pi_i^{\mu,\sigma^k}\rangle.
    \label{eq:three_point_eq_ftrl}
\end{align}

It is easy to confirm that \eqref{eq:three_point_eq_ftrl} satisfies the assumption in Lemma \ref{lem:grad_breg_full}.
Thus, taking $t=(k+1)T_{\sigma}-1$ in Lemma \ref{lem:grad_breg_full}, we have:
\begin{align*}
    D_{\psi}(\pi^{\mu,\sigma^k}, \pi^{(k+1)T_{\sigma}}) \leq D_{\psi}(\pi^{\mu,\sigma^k}, \pi^{kT_{\sigma}})\left(1 - \frac{\eta \mu\gamma}{2}\right)^{T_{\sigma}}.
\end{align*}
Since $\sigma^k = \pi^{kT_{\sigma}}$ and $\sigma^{k+1} = \pi^{(k+1)T_{\sigma}}$, we conclude the statement.
\end{proof}

\subsection{Proof of Lemma~\ref{lem:formal_ftrl_sp_noisy}}
\label{sec:appx_frtl_sp_noisy}
\begin{proof}[Proof of Lemma~\ref{lem:formal_ftrl_sp_noisy}]
Writing  $y_i^t= \sum_{s=0}^t\eta_s(\nabla_{\pi_i}v_i(\pi_i^s, \pi_{-i}^s) + \xi^s_i - \mu \nabla_{\pi_i}G(\pi_i^s, r_i))$ and 
using Lemma \ref{lem:legendre} in Appendix~\ref{sec:appx_frtl_sp_full}, we have:
\begin{align*}
    &D_{\psi}(\pi_i^{\mu,\sigma^k}, \pi_i^{t+1}) - D_{\psi}(\pi_i^{\mu,\sigma^k}, \pi_i^t) + D_{\psi}(\pi_i^{t+1}, \pi_i^t) \\
    &= \langle y_i^t - y_i^{t-1}, \pi_i^{t+1} - \pi_i^{\mu,\sigma^k}\rangle \\
    &= \eta_t \langle \nabla_{\pi_i}v_i(\pi_i^t, \pi_{-i}^t) - \mu \nabla_{\pi_i}G(\pi_i^t, \sigma_i^k) + \xi_i^t, \pi_i^{t+1} - \pi_i^{\mu,\sigma^k}\rangle.
\end{align*}
Note that $\nabla_{\pi_i}G(\pi_i^t, \sigma_i^k)$ is well-defined because of Assumptions \ref{asm:rel_smooth} and \ref{asm:well_defined_ftrl}.
Thus, we can apply Lemma~\ref{lem:grad_breg_noise}, and we have for any $t\in \{kT_{\sigma}, kT_{\sigma}+1, \cdots, (k+1)T_{\sigma} - 1\}$:
\begin{align*}
& D_{\psi}(\pi^{\mu,r}, \pi^{t+1}) - D_{\psi}(\pi^{\mu,\sigma^k}, \pi^t) + D_{\psi}(\pi^{t+1}, \pi^t) \\
& \le \eta_t \left(\left( \frac{\mu^2 \gamma \rho^2 (\gamma + 2\beta) + 8 L^2}{2 \mu \gamma \rho^2} \right) D_{\psi}(\pi^{t+1}, \pi^t)  - \frac{\mu\gamma}{2} D_{\psi}(\pi^{\mu,\sigma^k}, \pi^t) \right)+ \eta_t \sum_{i=1}^N\langle \xi_i^t, \pi_i^{t+1} - \pi_i^{\mu,\sigma^k}\rangle
\\
& = \eta_t(\theta D_\psi(\pi^{t+1}, \pi^{t}) - \kappa D_\psi(\pi^{\mu,\sigma^k}, \pi^{t})) + \eta_t \sum_{i=1}^N\langle \xi_i^t,  \pi_i^{t+1} - \pi_i^{\mu,\sigma^k}\rangle.
\end{align*}

Therefore, the assumption in Lemma \ref{lem:noisy_tel_sum} is satisfied, and we get:
\begin{align*}
   &\mathbb{E}[D_{\psi}(\pi^{\mu,\sigma^k}, \pi^{t+1})] \\
    &\le \frac{2 \theta - \kappa}{\kappa (t- kT_{\sigma}) + 2 \theta} \mathbb{E}[D_\psi(\pi^{\mu,\sigma^k}, \pi^{kT_{\sigma}})] + \frac{N C^2 }{\rho (\kappa (t- kT_{\sigma}) + 2 \theta)} \left( \frac{1}{\kappa }\ln \left(\frac{\kappa}{2 \theta} (t- kT_{\sigma}) + 1\right) +  \frac{1}{2 \theta}\right).
\end{align*}
Taking $t=(k+1)T_{\sigma}-1$, we have:
\begin{align*}
&\mathbb{E}[D_{\psi}(\pi^{\mu,\sigma^k}, \pi^{(k+1)T_{\sigma}})] \\
&\le \frac{2 \theta - \kappa}{\kappa (T_{\sigma} - 1) + 2 \theta} \mathbb{E}[D_\psi(\pi^{\mu,\sigma^k}, \pi^{kT_{\sigma}})] + \frac{N C^2 }{\rho (\kappa (T_{\sigma} - 1) + 2 \theta)} \left( \frac{1}{\kappa }\ln \left(\frac{\kappa}{2 \theta} (T_{\sigma} - 1) + 1\right) +  \frac{1}{2 \theta}\right).
\end{align*}
Since $\sigma^k = \pi^{kT_{\sigma}}$ and $\sigma^{k+1} = \pi^{(k+1)T_{\sigma}}$, we conclude the statement.
\end{proof}

\section{Proofs for Section \ref{sec:approximate_nash_conv_full}}
\label{sec:appx_approximate_nash_conv_full}

\subsection{Proof of Theorem \ref{thm:exploitability_upper_bound}}
\begin{proof}[Proof of Theorem \ref{thm:exploitability_upper_bound}]
Since $v_i(\cdot, \pi_{-i}^t)$ is concave, we can upper bound the gap function for $\pi^t$ as:
\begin{align}
&\mathrm{GAP}(\pi^t) \nonumber\\
&= \sum_{i=1}^N\max_{\tilde{\pi}_i\in \mathcal{X}_i}\langle \nabla_{\pi_i}v_i(\pi^t), \tilde{\pi}_i - \pi_i^t\rangle \nonumber\\
&= \max_{\tilde{\pi}\in \mathcal{X}} \sum_{i=1}^N \left(\langle \nabla_{\pi_i}v_i(\pi^{\mu,\sigma}), \tilde{\pi}_i - \pi_i^{\mu,\sigma}\rangle - \langle \nabla_{\pi_i}v_i(\pi^{\mu,\sigma}), \pi_i^t - \pi_i^{\mu,\sigma}\rangle + \langle \nabla_{\pi_i}v_i(\pi^t) - \nabla_{\pi_i}v_i(\pi^{\mu,\sigma}), \tilde{\pi}_i - \pi_i^t\rangle\right).
\label{eq:decompose_exploitability_}
\end{align}

From Lemma \ref{lem:exploit_by_tangent}, the first term of \eqref{eq:decompose_exploitability_} can be upper bounded as:
\begin{align*}
    \max_{\tilde{\pi}\in \mathcal{X}}\sum_{i=1}^N \langle \nabla_{\pi_i}v_i(\pi^{\mu,\sigma}), \tilde{\pi}_i - \pi_i^{\mu,\sigma}\rangle \leq \mathrm{diam}(\mathcal{X}) \cdot \min_{(a_i)\in N_{\mathcal{X}}(\pi^{\mu,\sigma})}\sqrt{\sum_{i=1}^N\|-\nabla_{\pi_i}v_i(\pi^{\mu,\sigma}) + a_i\|^2}.
\end{align*}
From the first-order optimality condition for $\pi^{\mu,\sigma}$, we have for any $\pi\in \mathcal{X}$:
\begin{align*}
    \sum_{i=1}^N\langle \nabla_{\pi_i} v_i(\pi^{\mu,\sigma}) - \mu \nabla_{\pi_i}G(\pi_i^{\mu,\sigma}, \sigma_i), \pi_i - \pi_i^{\mu,\sigma}\rangle \leq 0,
\end{align*}
and then $(\nabla_{\pi_i} v_i(\pi^{\mu,\sigma}) - \mu \nabla_{\pi_i}G(\pi_i^{\mu,\sigma}, \sigma_i))_{i\in [N]} \in N_{\mathcal{X}}(\pi^{\mu,\sigma})$.
Thus,
\begin{align}
    &\max_{\tilde{\pi}\in \mathcal{X}}\sum_{i=1}^N \langle \nabla_{\pi_i}v_i(\pi^{\mu,\sigma}), \tilde{\pi}_i - \pi_i^{\mu,\sigma}\rangle \nonumber\\
    &\leq \mathrm{diam}(\mathcal{X})\sqrt{\sum_{i=1}^N\|-\nabla_{\pi_i}v_i(\pi^{\mu,\sigma}) +  \nabla_{\pi_i} v_i(\pi^{\mu,\sigma}) - \mu \nabla_{\pi_i}G(\pi_i^{\mu,\sigma}, \sigma_i)\|^2} \nonumber\\
    &= \mu \cdot \mathrm{diam}(\mathcal{X})\sqrt{\sum_{i=1}^N\| \nabla_{\pi_i}G(\pi_i^{\mu,\sigma}, \sigma_i)\|^2}.
    \label{eq:exploitability_first_term}
\end{align}

Next, from Cauchy–Schwarz inequality, the second term of \eqref{eq:decompose_exploitability_} can be bounded as:
\begin{align}
    - \sum_{i=1}^N \langle \nabla_{\pi_i}v_i(\pi^{\mu,\sigma}), \pi_i^t - \pi_i^{\mu,\sigma}\rangle \leq \|\pi^t - \pi^{\mu,\sigma}\|\sqrt{\sum_{i=1}^N\|\nabla_{\pi_i}v_i(\pi^{\mu,\sigma})\|^2}.
    \label{eq:exploitability_second_term}
\end{align}

Again from Cauchy–Schwarz inequality, the third term of \eqref{eq:decompose_exploitability_} is bounded by:
\begin{align}
    \sum_{i=1}^N\langle \nabla_{\pi_i}v_i(\pi^t) - \nabla_{\pi_i}v_i(\pi^{\mu,\sigma}), \tilde{\pi}_i - \pi_i^t\rangle &\leq \|\tilde{\pi} - \pi^t\|\sqrt{\sum_{i=1}^N\|\nabla_{\pi_i}v_i(\pi^t) - \nabla_{\pi_i}v_i(\pi^{\mu,\sigma})\|^2} \nonumber\\
    &\leq \mathrm{diam}(\mathcal{X})\sqrt{\sum_{i=1}^N\|\nabla_{\pi_i}v_i(\pi^t) - \nabla_{\pi_i}v_i(\pi^{\mu,\sigma})\|^2} \nonumber\\
    &\leq L\cdot \mathrm{diam}(\mathcal{X})\|\pi^t - \pi^{\mu,\sigma}\|,
    \label{eq:exploitability_third_term}
\end{align}
where the third inequality follows from \eqref{eq:smooth}.

By combining \eqref{eq:decompose_exploitability_}, \eqref{eq:exploitability_first_term}, \eqref{eq:exploitability_second_term}, and \eqref{eq:exploitability_third_term}, we get:
\begin{align*}
&\mathrm{GAP}(\pi^t) \leq \mu \cdot \mathrm{diam}(\mathcal{X})\sqrt{\sum_{i=1}^N\| \nabla_{\pi_i}G(\pi_i^{\mu,\sigma}, \sigma_i)\|^2} +  \left(L\cdot \mathrm{diam}(\mathcal{X}) + \sqrt{\sum_{i=1}^N\|\nabla_{\pi_i}v_i(\pi^{\mu,\sigma})\|^2}\right)\|\pi^t - \pi^{\mu,\sigma}\|.
\end{align*}
Thus, from Lemma \ref{lem:md_sp_full} and the strong convexity of $\psi$, we have:
\begin{align*}
\mathrm{GAP}(\pi^t) &\leq \mu \cdot \mathrm{diam}(\mathcal{X})\sqrt{\sum_{i=1}^N\| \nabla_{\pi_i}G(\pi_i^{\mu,\sigma}, \sigma_i)\|^2} \\
&+ \left(L\cdot \mathrm{diam}(\mathcal{X}) + \sqrt{\sum_{i=1}^N\|\nabla_{\pi_i}v_i(\pi^{\mu,\sigma})\|^2}\right)\sqrt{\frac{2D_{\psi}(\pi^{\mu,\sigma}, \pi^0)}{\rho}\left(1 - \frac{\eta \mu\gamma}{2}\right)^t}.
\end{align*}
\end{proof}

\section{Proofs for Additional Lemmas}
\subsection{Proof of Lemma \ref{lem:contraction_constrained}}
\begin{proof}[Proof of Lemma \ref{lem:contraction_constrained}]
From the first-order optimality condition for $\pi^{\mu, \sigma^k}$ and $\pi^{\mu, \sigma^{k-1}}$, we have for any $k\geq 1$:
\begin{align*}
    &\langle \nabla_{\pi_i}v_i(\pi_i^{\mu, \sigma^k}, \pi_{-i}^{\mu, \sigma^k}) - \mu \left(\pi_i^{\mu, \sigma^k} - \sigma_i^k\right), \sigma_i^k - \pi_i^{\mu, \sigma^k}\rangle \leq 0, \\
    &\langle \nabla_{\pi_i}v_i(\pi_i^{\mu, \sigma^{k-1}}, \pi_{-i}^{\mu, \sigma^{k-1}}) - \mu \left(\pi_i^{\mu, \sigma^{k-1}} - \sigma_i^{k-1}\right), \pi_i^{\mu, \sigma^k} - \pi_i^{\mu, \sigma^{k-1}}\rangle \leq 0.
\end{align*}
Summing up these inequalities yields:
\begin{align*}
    0 &\geq \sum_{i=1}^N \langle \nabla_{\pi_i} v_i(\pi_i^{\mu, \sigma^k}, \pi_{-i}^{\mu, \sigma^k}) - \mu \left(\pi_i^{\mu, \sigma^k} - \sigma_i^k\right), \sigma_i^k - \pi_i^{\mu, \sigma^k}\rangle \\
    &+ \sum_{i=1}^N \langle \nabla_{\pi_i} v_i(\pi_i^{\mu, \sigma^{k-1}}, \pi_{-i}^{\mu, \sigma^{k-1}}) - \mu \left(\pi_i^{\mu, \sigma^{k-1}} - \sigma_i^{k-1}\right), \pi_i^{\mu, \sigma^k} - \pi_i^{\mu, \sigma^{k-1}}\rangle \\
    &= \sum_{i=1}^N \langle \nabla_{\pi_i} v_i(\pi_i^{\mu, \sigma^k}, \pi_{-i}^{\mu, \sigma^k}), \pi_i^{\mu, \sigma^{k-1}} - \pi_i^{\mu, \sigma^k}\rangle + \sum_{i=1}^N \langle \nabla_{\pi_i} v_i(\pi_i^{\mu, \sigma^k}, \pi_{-i}^{\mu, \sigma^k}), \sigma_i^k - \pi_i^{\mu, \sigma^{k-1}}\rangle + \mu \|\pi^{\mu, \sigma^k} - \sigma^k\|^2 \\
    &+ \sum_{i=1}^N \langle \nabla_{\pi_i} v_i(\pi_i^{\mu, \sigma^{k-1}}, \pi_{-i}^{\mu, \sigma^{k-1}}), \pi_i^{\mu, \sigma^k} - \pi_i^{\mu, \sigma^{k-1}}\rangle + \mu \sum_{i=1}^N \langle \sigma_i^{k-1} - \pi_i^{\mu, \sigma^{k-1}}, \pi_i^{\mu, \sigma^k} - \pi_i^{\mu, \sigma^{k-1}}\rangle \\
    &= \sum_{i=1}^N \langle \nabla_{\pi_i} v_i(\pi_i^{\mu, \sigma^k}, \pi_{-i}^{\mu, \sigma^k}) - \nabla_{\pi_i} v_i(\pi_i^{\mu, \sigma^{k-1}}, \pi_{-i}^{\mu, \sigma^{k-1}}), \pi_i^{\mu, \sigma^{k-1}} - \pi_i^{\mu, \sigma^k}\rangle \\
    &+ \sum_{i=1}^N \langle \nabla_{\pi_i} v_i(\pi_i^{\mu, \sigma^k}, \pi_{-i}^{\mu, \sigma^k}), \sigma_i^k - \pi_i^{\mu, \sigma^{k-1}}\rangle + \mu \|\pi^{\mu, \sigma^k} - \sigma^k\|^2 + \mu \sum_{i=1}^N \langle \sigma_i^{k-1} - \pi_i^{\mu, \sigma^{k-1}}, \pi_i^{\mu, \sigma^k} - \pi_i^{\mu, \sigma^{k-1}}\rangle \\
    &\geq \sum_{i=1}^N \langle \nabla_{\pi_i} v_i(\pi_i^{\mu, \sigma^k}, \pi_{-i}^{\mu, \sigma^k}), \sigma_i^k - \pi_i^{\mu, \sigma^{k-1}}\rangle + \mu \|\pi^{\mu, \sigma^k} - \sigma^k\|^2 + \mu \sum_{i=1}^N \langle \sigma_i^{k-1} - \pi_i^{\mu, \sigma^{k-1}}, \pi_i^{\mu, \sigma^k} - \pi_i^{\mu, \sigma^{k-1}}\rangle,
\end{align*}
where the last inequality follows from \eqref{eq:monotone}.
Then, since
\begin{align*}
    \langle \sigma_i^{k-1} - \pi_i^{\mu, \sigma^{k-1}}, \pi_i^{\mu, \sigma^{k-1}} - \pi_i^{\mu, \sigma^k}\rangle = \frac{\|\pi_i^{\mu, \sigma^k} - \sigma_i^{k-1}\|^2}{2} - \frac{\|\pi_i^{\mu, \sigma^k} - \pi_i^{\mu, \sigma^{k-1}}\|^2}{2} - \frac{\|\pi_i^{\mu, \sigma^{k-1}} - \sigma_i^{k-1}\|^2}{2},
\end{align*}
we have:
\begin{align*}
    0 &\geq \sum_{i=1}^N \langle \nabla_{\pi_i} v_i(\pi_i^{\mu, \sigma^k}, \pi_{-i}^{\mu, \sigma^k}), \sigma_i^k - \pi_i^{\mu, \sigma^{k-1}}\rangle + \mu \|\pi^{\mu, \sigma^k} - \sigma^k\|^2 \\
    &+ \frac{\mu}{2} \sum_{i=1}^N \langle \sigma_i^{k-1} - \pi_i^{\mu, \sigma^{k-1}}, \pi_i^{\mu, \sigma^k} - \pi_i^{\mu, \sigma^{k-1}}\rangle + \frac{\mu}{2} \sum_{i=1}^N \langle \sigma_i^{k-1} - \pi_i^{\mu, \sigma^{k-1}}, \pi_i^{\mu, \sigma^k} - \pi_i^{\mu, \sigma^{k-1}}\rangle \\
    &\geq \sum_{i=1}^N \langle \nabla_{\pi_i} v_i(\pi_i^{\mu, \sigma^k}, \pi_{-i}^{\mu, \sigma^k}), \sigma_i^k - \pi_i^{\mu, \sigma^{k-1}}\rangle + \mu \|\pi^{\mu, \sigma^k} - \sigma^k\|^2 \\
    &- \frac{\mu}{4}\left(\|\sigma^{k-1} - \pi^{\mu, \sigma^{k-1}}\|^2 + \|\pi^{\mu, \sigma^k} - \pi^{\mu, \sigma^{k-1}}\|^2 - \|\pi^{\mu, \sigma^k} - \pi^{\mu, \sigma^{k-1}} + \sigma^{k-1} - \pi^{\mu, \sigma^{k-1}}\|^2\right) \\
    &- \frac{\mu}{4}\left(\|\pi^{\mu, \sigma^k} - \sigma^{k-1}\|^2 - \|\pi^{\mu, \sigma^k} - \pi^{\mu, \sigma^{k-1}}\|^2 - \|\pi^{\mu, \sigma^{k-1}} - \sigma^{k-1}\|^2\right) \\
    &\geq \sum_{i=1}^N \langle \nabla_{\pi_i} v_i(\pi_i^{\mu, \sigma^k}, \pi_{-i}^{\mu, \sigma^k}), \sigma_i^k - \pi_i^{\mu, \sigma^{k-1}}\rangle + \mu \|\pi^{\mu, \sigma^k} - \sigma^k\|^2 - \frac{\mu}{4}\|\pi^{\mu, \sigma^k} - \sigma^{k-1}\|^2 \\
    &\geq \sum_{i=1}^N \langle \nabla_{\pi_i} v_i(\pi_i^{\mu, \sigma^k}, \pi_{-i}^{\mu, \sigma^k}), \sigma_i^k - \pi_i^{\mu, \sigma^{k-1}}\rangle + \mu \|\pi^{\mu, \sigma^k} - \sigma^k\|^2 - \frac{\mu}{2}\|\pi^{\mu, \sigma^k} - \sigma^k\|^2 - \frac{\mu}{2}\|\sigma^k - \sigma^{k-1}\|^2 \\
    &= \sum_{i=1}^N \langle \nabla_{\pi_i} v_i(\pi_i^{\mu, \sigma^k}, \pi_{-i}^{\mu, \sigma^k}), \sigma_i^k - \pi_i^{\mu, \sigma^{k-1}}\rangle + \frac{\mu}{2}\|\pi^{\mu, \sigma^k} - \sigma^k\|^2 - \frac{\mu}{2}\|\sigma^k - \sigma^{k-1}\|^2,
\end{align*}
where the third inequality follows from $(a+b)^2\leq 2(a^2 + b^2)$ for $a, b\in \mathbb{R}$.
Thus,
\begin{align*}
    \|\pi^{\mu, \sigma^k} - \sigma^k\|^2 &\leq \|\sigma^k - \sigma^{k-1}\|^2 + \frac{2}{\mu} \sum_{i=1}^N \langle \nabla_{\pi_i} v_i(\pi_i^{\mu, \sigma^k}, \pi_{-i}^{\mu, \sigma^k}), \pi_i^{\mu, \sigma^{k-1}} - \sigma_i^k\rangle \\
    &\leq \|\sigma^k - \sigma^{k-1}\|^2 + \frac{2}{\mu} \|\pi^{\mu, \sigma^{k-1}} - \sigma^k\|\sqrt{\sum_{i=1}^N \|\nabla_{\pi_i} v_i(\pi_i^{\mu, \sigma^k}, \pi_{-i}^{\mu, \sigma^k})\|^2} \\
    &\leq \|\sigma^k - \sigma^{k-1}\|^2 + \frac{2\zeta}{\mu} \|\pi^{\mu, \sigma^{k-1}} - \sigma^k\|.
\end{align*}
\end{proof}

\subsection{Proof of Lemma \ref{lem:diff_p_sigma_sum_constrained}}
\begin{proof}[Proof of Lemma \ref{lem:diff_p_sigma_sum_constrained}]
From the first-order optimality condition for $p^{k+1}$, we have:
\begin{align*}
    &\langle \nabla_{\pi_i}v_i(\pi_i^{\mu, \sigma^k}, \pi_{-i}^{\mu, \sigma^k}) - \mu \left(\pi_i^{\mu, \sigma^k} - \sigma_i^k\right), \pi_i^{\ast} - \pi_i^{\mu, \sigma^k}\rangle \leq 0.
\end{align*}
Thus, from the three-point identity $2 \langle x - y, z - x \rangle = \|y - z\|^2 - \|x - y\|^2 - \|x - z\|^2$ and Young's inequality:
\begin{align*}
    &\sum_{i=1}^N \langle \nabla_{\pi_i}v_i(\pi_i^{\mu, \sigma^k}, \pi_{-i}^{\mu, \sigma^k}), \pi_i^{\ast} - \pi_i^{\mu, \sigma^k}\rangle \leq \mu \sum_{i=1}^N \langle \pi_i^{\mu, \sigma^k} - \sigma_i^k, \pi_i^{\ast} - \pi_i^{\mu, \sigma^k}\rangle \\
    &= \frac{\mu}{2}\|\pi^{\ast} - \sigma^k\|^2  - \frac{\mu}{2}\|\pi^{\mu, \sigma^k} - \sigma^k\|^2 - \frac{\mu}{2}\|\pi^{\ast} - \pi^{\mu, \sigma^k}\|^2 \\
    &= \frac{\mu}{2}\|\pi^{\ast} - \sigma^k\|^2  - \frac{\mu}{2}\|\pi^{\mu, \sigma^k} - \sigma^k\|^2 - \frac{\mu}{2}\|\pi^{\ast} - \sigma^{k+1} + \sigma^{k+1} - \pi^{\mu, \sigma^k}\|^2 \\
    &= \frac{\mu}{2}\|\pi^{\ast} - \sigma^k\|^2  - \frac{\mu}{2}\|\pi^{\mu, \sigma^k} - \sigma^k\|^2 - \frac{\mu}{2}\|\pi^{\ast} - \sigma^{k+1}\|^2 - \frac{\mu}{2}\|\sigma^{k+1} - \pi^{\mu, \sigma^k}\|^2 - \mu \langle \pi^{\ast} - \sigma^{k+1}, \sigma^{k+1} - \pi^{\mu, \sigma^k}\rangle \\
    &\leq \frac{\mu}{2}\|\pi^{\ast} - \sigma^k\|^2  - \frac{\mu}{2}\|\pi^{\mu, \sigma^k} - \sigma^k\|^2 - \frac{\mu}{2}\|\pi^{\ast} - \sigma^{k+1}\|^2 + \mu \|\pi^{\ast} - \sigma^{k+1}\|\| \sigma^{k+1} - \pi^{\mu, \sigma^k}\| \\
    &\leq \frac{\mu}{2}\|\pi^{\ast} - \sigma^k\|^2  - \frac{\mu}{2}\|\pi^{\mu, \sigma^k} - \sigma^k\|^2 - \frac{\mu}{2}\|\pi^{\ast} - \sigma^{k+1}\|^2 + \frac{\mu}{64T^3} \|\pi^{\ast} - \sigma^{k+1}\|^2 + \frac{32\mu T^3}{2} \| \sigma^{k+1} - \pi^{\mu, \sigma^k}\|^2.
\end{align*}
Here, since $T_{\sigma} \geq \max(\frac{6}{\ln 2 - \ln (2 - \eta \mu) }\ln T + \frac{2\ln 64}{\ln 2 - \ln (2 - \eta \mu)}, 1)$, we have:
\begin{align*}
    \left(1-\frac{\eta\mu}{2}\right)^{-T_{\sigma}} \geq \left(1-\frac{\eta\mu}{2}\right)^{-\frac{3}{\ln 2 - \ln (2 - \eta \mu) }\ln T} \left(1-\frac{\eta\mu}{2}\right)^{\frac{\ln 64}{\ln \left(1 - \frac{\eta\mu}{2}\right)}} = 64\left(1-\frac{\eta\mu}{2}\right)^{\frac{\ln T^3}{\ln (2 - \eta \mu)  - \ln 2}}=64T^3.
\end{align*}
Therefore, we get from Lemma \ref{lem:formal_gd_sp_full}:
\begin{align*}
    &\sum_{i=1}^N \langle \nabla_{\pi_i}v_i(\pi_i^{\mu, \sigma^k}, \pi_{-i}^{\mu, \sigma^k}), \pi_i^{\ast} - \pi_i^{\mu, \sigma^k}\rangle \\
    &\leq \frac{\mu}{2}\|\pi^{\ast} - \sigma^k\|^2  - \frac{\mu}{2}\|\pi^{\mu, \sigma^k} - \sigma^k\|^2 - \frac{\mu}{2}\|\pi^{\ast} - \sigma^{k+1}\|^2 + \frac{\mu}{64T^3} \|\pi^{\ast} - \sigma^{k+1}\|^2 + \frac{32\mu T^3}{2} \| \sigma^k - \pi^{\mu, \sigma^k}\|^2\left(1-\frac{\eta\mu}{2}\right)^{T_{\sigma}} \\
    &\leq \frac{\mu}{2}\|\pi^{\ast} - \sigma^k\|^2  - \frac{\mu}{2}\|\pi^{\mu, \sigma^k} - \sigma^k\|^2 - \frac{\mu}{2}\|\pi^{\ast} - \sigma^{k+1}\|^2 + \frac{\mu}{64T^3} \|\pi^{\ast} - \sigma^{k+1}\|^2 + \frac{\mu}{4} \| \sigma^k - \pi^{\mu, \sigma^k}\|^2 \\
    &= \frac{\mu}{2}\|\pi^{\ast} - \sigma^k\|^2  - \frac{\mu}{2}\|\pi^{\ast} - \sigma^{k+1}\|^2 - \frac{\mu}{4}\|\pi^{\mu, \sigma^k} - \sigma^k\|^2 + \frac{\mu}{64T^3} \|\pi^{\ast} - \sigma^{k+1}\|^2.
\end{align*}
Summing up this inequality from $k=0$ to $K-1$ yields:
\begin{align*}
    &\frac{\mu}{2}\|\pi^{\ast} - \sigma^0\|^2 + \frac{\mu}{64T^3} \sum_{k=0}^{K-1}\|\pi^{\ast} - \sigma^{k+1}\|^2 - \frac{\mu}{4}\sum_{k=0}^{K-1}\|\pi^{\mu, \sigma^k} - \sigma^k\|^2 \\
    &\geq \sum_{k=0}^{K-1}\sum_{i=1}^N \langle \nabla_{\pi_i}v_i(\pi_i^{\mu, \sigma^k}, \pi_{-i}^{\mu, \sigma^k}), \pi_i^{\ast} - \pi_i^{\mu, \sigma^k}\rangle \\
    &\geq \sum_{k=0}^{K-1}\sum_{i=1}^N \langle \nabla_{\pi_i}v_i(\pi_i^{\ast}, \pi_{-i}^{\ast}), \pi_i^{\ast} - \pi_i^{\mu, \sigma^k}\rangle \\
    &\geq 0.
\end{align*}
Then, from Cauchy–Schwarz inequality, we have:
\begin{align*}
    \frac{\mu}{4}\sum_{k=0}^{K-1}\|\pi^{\mu, \sigma^k} - \sigma^k\|^2 &\leq \frac{\mu}{2}\|\pi^{\ast} - \sigma^0\|^2 + \frac{\mu}{64T^3} \sum_{k=0}^{K-1}\|\pi^{\ast} - \sigma^{k+1}\|^2 \\
    &\leq \frac{\mu}{2}\|\pi^{\ast} - \sigma^0\|^2 + \frac{\mu}{64T^3} \sum_{k=0}^{K-1}\left(\sum_{\tau=0}^k\|\sigma^{\tau+1} - \sigma^{\tau}\| + \|\pi^{\ast} - \sigma^0\|\right)^2 \\
    &\leq \frac{\mu}{2}\|\pi^{\ast} - \sigma^0\|^2 + \frac{\mu}{32T^3} \sum_{k=0}^{K-1}\left(\left(\sum_{\tau=0}^k\|\sigma^{\tau+1} - \sigma^{\tau}\|\right)^2 + \|\pi^{\ast} - \sigma^0\|^2\right) \\
    &\leq \frac{\mu}{2}\|\pi^{\ast} - \sigma^0\|^2 + \frac{\mu}{32T^3} \sum_{k=0}^{K-1}\left(K\sum_{\tau=0}^k\|\sigma^{\tau+1} - \sigma^{\tau}\|^2 + \|\pi^{\ast} - \sigma^0\|^2\right) \\
    &\leq \frac{\mu}{2}\|\pi^{\ast} - \sigma^0\|^2 + \frac{\mu}{32T^3} \left(K^2\sum_{k=0}^{K-1}\|\sigma^{k+1} - \sigma^k\|^2 + K\|\pi^{\ast} - \sigma^0\|^2\right) \\
    &\leq \frac{\mu}{2}\|\pi^{\ast} - \sigma^0\|^2 + \frac{\mu}{32T^3} \left(K^2\sum_{k=0}^{K-1}\left(\|\sigma^{k+1} - \pi^{\mu, \sigma^k}\| + \|\pi^{\mu, \sigma^k} - \sigma^k\|\right)^2 + K\|\pi^{\ast} - \sigma^0\|^2\right).
\end{align*}
By applying Lemma \ref{lem:formal_gd_sp_full} to the above inequality, we get:
\begin{align*}
    &\frac{\mu}{4}\sum_{k=0}^{K-1}\|\pi^{\mu, \sigma^k} - \sigma^k\|^2 \\
    &\leq \frac{\mu}{2}\|\pi^{\ast} - \sigma^0\|^2 + \frac{\mu}{32T^3} \left(K^2\sum_{k=0}^{K-1}\left(\| \sigma^k - \pi^{\mu, \sigma^k}\|^2\left(1-\frac{\eta\mu}{2}\right)^{\frac{T_{\sigma}}{2}} + \|\pi^{\mu, \sigma^k} - \sigma^k\|\right)^2 + K\|\pi^{\ast} - \sigma^0\|^2\right) \\
    &\leq \frac{\mu}{2}\|\pi^{\ast} - \sigma^0\|^2 + \frac{\mu}{32T^3} \left(K^2\sum_{k=0}^{K-1}\left(2\|\pi^{\mu, \sigma^k} - \sigma^k\|\right)^2 + K\|\pi^{\ast} - \sigma^0\|^2\right) \\
    &= \frac{\mu}{2}\|\pi^{\ast} - \sigma^0\|^2 + \frac{\mu}{32T^3} \left(4K^2\sum_{k=0}^{K-1}\|\pi^{\mu, \sigma^k} - \sigma^k\|^2 + K\|\pi^{\ast} - \sigma^0\|^2\right) \\
    &\leq \frac{\mu}{2}\|\pi^{\ast} - \sigma^0\|^2 + \frac{\mu}{32T^3} \left(4T^2\sum_{k=0}^{K-1}\|\pi^{\mu, \sigma^k} - \sigma^k\|^2 + T\|\pi^{\ast} - \sigma^0\|^2\right) \\
    &= \mu\left(\frac{1}{2} + \frac{1}{T^2}\right)\|\pi^{\ast} - \sigma^0\|^2 + \frac{\mu}{8T} \sum_{k=0}^{K-1}\|\pi^{\mu, \sigma^k} - \sigma^k\|^2 \\
    &\leq 2\mu\|\pi^{\ast} - \sigma^0\|^2 + \frac{\mu}{8} \sum_{k=0}^{K-1}\|\pi^{\mu, \sigma^k} - \sigma^k\|^2.
\end{align*}
Therefore, for $K\geq 1$, we get:
\begin{align*}
    \sum_{k=0}^{K-1}\|\pi^{\mu, \sigma^k} - \sigma^k\|^2 &\leq 16\|\pi^{\ast} - \sigma^0\|^2.
\end{align*}
\end{proof}

\subsection{Proof of Lemma \ref{lem:diff_p_sigma_sum_noisy}}
\begin{proof}[Proof of Lemma \ref{lem:diff_p_sigma_sum_noisy}]
From the first-order optimality condition for $p^{k+1}$, we have:
\begin{align*}
    &\langle \nabla_{\pi_i}v_i(\pi_i^{\mu, \sigma^k}, \pi_{-i}^{\mu, \sigma^k}) - \mu \left(\pi_i^{\mu, \sigma^k} - \sigma_i^k\right), \pi_i^{\ast} - \pi_i^{\mu, \sigma^k}\rangle \leq 0.
\end{align*}
Thus, from the three-point identity $2 \langle a - b, c - a \rangle = \|b - c\|^2 - \|a - b\|^2 - \|a - c\|^2$ and Young's inequality:
\begin{align*}
    &\sum_{i=1}^N \langle \nabla_{\pi_i}v_i(\pi_i^{\mu, \sigma^k}, \pi_{-i}^{\mu, \sigma^k}), \pi_i^{\ast} - \pi_i^{\mu, \sigma^k}\rangle \leq \mu \sum_{i=1}^N \langle \pi_i^{\mu, \sigma^k} - \sigma_i^k, \pi_i^{\ast} - \pi_i^{\mu, \sigma^k}\rangle \\
    &= \frac{\mu}{2}\|\pi^{\ast} - \sigma^k\|^2  - \frac{\mu}{2}\|\pi^{\mu, \sigma^k} - \sigma^k\|^2 - \frac{\mu}{2}\|\pi^{\ast} - \pi^{\mu, \sigma^k}\|^2 \\
    &= \frac{\mu}{2}\|\pi^{\ast} - \sigma^k\|^2  - \frac{\mu}{2}\|\pi^{\mu, \sigma^k} - \sigma^k\|^2 - \frac{\mu}{2}\|\pi^{\ast} - \sigma^{k+1} + \sigma^{k+1} - \pi^{\mu, \sigma^k}\|^2 \\
    &= \frac{\mu}{2}\|\pi^{\ast} - \sigma^k\|^2  - \frac{\mu}{2}\|\pi^{\mu, \sigma^k} - \sigma^k\|^2 - \frac{\mu}{2}\|\pi^{\ast} - \sigma^{k+1}\|^2 - \frac{\mu}{2}\|\sigma^{k+1} - \pi^{\mu, \sigma^k}\|^2 - \mu \langle \pi^{\ast} - \sigma^{k+1}, \sigma^{k+1} - \pi^{\mu, \sigma^k}\rangle \\
    &\leq \frac{\mu}{2}\|\pi^{\ast} - \sigma^k\|^2  - \frac{\mu}{2}\|\pi^{\mu, \sigma^k} - \sigma^k\|^2 - \frac{\mu}{2}\|\pi^{\ast} - \sigma^{k+1}\|^2 + \mu \|\pi^{\ast} - \sigma^{k+1}\|\| \sigma^{k+1} - \pi^{\mu, \sigma^k}\| \\
    &\leq \frac{\mu}{2}\|\pi^{\ast} - \sigma^k\|^2  - \frac{\mu}{2}\|\pi^{\mu, \sigma^k} - \sigma^k\|^2 - \frac{\mu}{2}\|\pi^{\ast} - \sigma^{k+1}\|^2 + \mu \cdot \mathrm{diam}(\mathcal{X})\| \sigma^{k+1} - \pi^{\mu, \sigma^k}\| \\
\end{align*}
Since $T_{\sigma} \geq \max(T^{4/5} + 2, 3)$, we get from Lemma \ref{lem:formal_gd_sp_noisy}:
\begin{align*}
    &\mathbb{E}\left[\sum_{i=1}^N \langle \nabla_{\pi_i}v_i(\pi_i^{\mu, \sigma^k}, \pi_{-i}^{\mu, \sigma^k}), \pi_i^{\ast} - \pi_i^{\mu, \sigma^k}\rangle \right] \\
    &\leq \mathbb{E}\left[\frac{\mu}{2}\|\pi^{\ast} - \sigma^k\|^2  - \frac{\mu}{2}\|\pi^{\mu, \sigma^k} - \sigma^k\|^2 - \frac{\mu}{2}\|\pi^{\ast} - \sigma^{k+1}\|^2 \right] \\
    &+ \mu \cdot \mathrm{diam}(\mathcal{X}) \cdot \sqrt{\frac{\rho(2 \theta - \kappa)\mathrm{diam}(\mathcal{X})^2 + N C^2\left( \frac{1}{\kappa }\ln \left(\frac{\kappa}{2 \theta} (T_{\sigma} - 1) + 1\right) +  \frac{1}{2 \theta}\right)}{\rho(\kappa (T_{\sigma} - 1) + 2 \theta)}} \\
    &\leq \mathbb{E}\left[\frac{\mu}{2}\|\pi^{\ast} - \sigma^k\|^2  - \frac{\mu}{2}\|\pi^{\mu, \sigma^k} - \sigma^k\|^2 - \frac{\mu}{2}\|\pi^{\ast} - \sigma^{k+1}\|^2 \right] \\
    &+ \mu \cdot \mathrm{diam}(\mathcal{X}) \cdot \sqrt{\frac{\rho(2 \theta - \kappa)\mathrm{diam}(\mathcal{X})^2 + N C^2\left( \frac{1}{\kappa }\ln \left(\frac{\kappa}{2 \theta} T + 1\right) +  \frac{1}{2 \theta}\right)}{\rho(\kappa T^{4/5} + 2 \theta)}}.
\end{align*}
Summing up this inequality from $k=0$ to $K-1$ and taking its expectation yields:
\begin{align*}
    &\frac{\mu}{2}\|\pi^{\ast} - \sigma^0\|^2 - \mathbb{E}\left[\frac{\mu}{2}\sum_{k=0}^{K-1}\|\pi^{\mu, \sigma^k} - \sigma^k\|^2\right] \\
    &+ K\mu \cdot \mathrm{diam}(\mathcal{X}) \cdot \sqrt{\frac{\rho(2 \theta - \kappa)\mathrm{diam}(\mathcal{X})^2 + N C^2\left( \frac{1}{\kappa }\ln \left(\frac{\kappa}{2 \theta} T + 1\right) +  \frac{1}{2 \theta}\right)}{\rho(\kappa T^{4/5} + 2 \theta)}} \\
    &\geq \mathbb{E}\left[\sum_{k=0}^{K-1}\sum_{i=1}^N \langle \nabla_{\pi_i}v_i(\pi_i^{\mu, \sigma^k}, \pi_{-i}^{\mu, \sigma^k}), \pi_i^{\ast} - \pi_i^{\mu, \sigma^k}\rangle \right] \\
    &\geq \mathbb{E}\left[\sum_{k=0}^{K-1}\sum_{i=1}^N \langle \nabla_{\pi_i}v_i(\pi_i^{\ast}, \pi_{-i}^{\ast}), \pi_i^{\ast} - \pi_i^{\mu, \sigma^k}\rangle\right] \\
    &\geq 0.
\end{align*}
Therefore, for $K\geq 1$, we get:
\begin{align*}
    &\mathbb{E}\left[\sum_{k=0}^{K-1}\|\pi^{\mu, \sigma^k} - \sigma^k\|^2\right] \leq \|\pi^{\ast} - \sigma^0\|^2 + K\cdot \mathrm{diam}(\mathcal{X}) \cdot \sqrt{\frac{\rho(2 \theta - \kappa)\mathrm{diam}(\mathcal{X})^2 + N C^2\left( \frac{1}{\kappa }\ln \left(\frac{\kappa}{2 \theta} T + 1\right) +  \frac{1}{2 \theta}\right)}{\rho(\kappa T^{4/5} + 2 \theta)}}.
\end{align*}
\end{proof}

\subsection{Proof of Lemma \ref{lem:grad_breg_full}}
\label{sec:appx_convergence_rate_full}
\begin{proof}[Proof of Lemma \ref{lem:grad_breg_full}]
We first decompose the inequality in the assumption as follows:
\begin{align}
    &D_{\psi}(\pi_i^{\mu,\sigma^k}, \pi_i^{t+1}) - D_{\psi}(\pi_i^{\mu,\sigma^k}, \pi_i^t) + D_{\psi}(\pi_i^{t+1}, \pi_i^t) \nonumber\\
    &\leq \eta \langle \nabla_{\pi_i}v_i(\pi_i^t, \pi_{-i}^t) - \mu \nabla_{\pi_i}G(\pi_i^t, \sigma_i^k), \pi_i^{t+1} - \pi_i^{\mu,\sigma^k}\rangle \nonumber
    \\
    &= \eta \langle \nabla_{\pi_i}v_i(\pi_i^t, \pi_{-i}^t), \pi_i^{t+1} - \pi_i^{\mu,\sigma^k}\rangle +  \eta \mu \langle \nabla_{\pi_i}G(\pi_i^t, \sigma_i^k), \pi_i^t - \pi_i^{t+1}\rangle + \eta \mu \langle \nabla_{\pi_i}G(\pi_i^t, \sigma_i^k), \pi_i^{\mu,\sigma^k} - \pi_i^t\rangle.
    \label{eq:three_point_eq}
\end{align}
From the relative smoothness in Assumption \ref{asm:rel_smooth} and the convexity of $G(\cdot, \sigma_i^k)$:
\begin{align}
    &\langle \nabla_{\pi_i}G(\pi_i^t, \sigma_i^k), \pi_i^t - \pi_i^{t+1}\rangle \nonumber\\
    &\leq G(\pi_i^t, \sigma_i^k) - G(\pi_i^{t+1}, \sigma_i^k) + \beta D_{\psi}(\pi_i^{t+1}, \pi_i^t) \nonumber\\
    &\leq G(\pi_i^t, \sigma_i^k) - G(\pi_i^{\mu,\sigma^k}, \sigma_i^k) + \langle \nabla_{\pi_i}G(\pi_i^{\mu,\sigma^k}, \sigma_i^k), \pi_i^{\mu,\sigma^k} - \pi_i^{t+1}\rangle + \beta D_{\psi}(\pi_i^{t+1}, \pi_i^t).
    \label{eq:rel_smooth}
\end{align}
Also, from the relative strong convexity in Assumption \ref{asm:rel_smooth}:
\begin{align}
    G(\pi_i^t, \sigma_i^k) - G(\pi_i^{\mu,\sigma^k}, \sigma_i^k) \leq \langle \nabla_{\pi_i}G(\pi_i^t, \sigma_i^k), \pi_i^t - \pi_i^{\mu,\sigma^k}\rangle - \gamma D_{\psi}(\pi_i^{\mu,\sigma^k}, \pi_i^t).
    \label{eq:rel_strongly_convex}
\end{align}
By combining \eqref{eq:three_point_eq}, \eqref{eq:rel_smooth}, and \eqref{eq:rel_strongly_convex}, we have:
\begin{align*}
    &D_{\psi}(\pi_i^{\mu,\sigma^k}, \pi_i^{t+1}) - D_{\psi}(\pi_i^{\mu,\sigma^k}, \pi_i^t) + D_{\psi}(\pi_i^{t+1}, \pi_i^t) \\
    &\leq \eta \langle \nabla_{\pi_i}v_i(\pi_i^t, \pi_{-i}^t), \pi_i^{t+1} - \pi_i^{\mu,\sigma^k}\rangle + \eta\mu\langle \nabla_{\pi_i}G(\pi_i^{\mu,\sigma^k}, \sigma_i^k), \pi_i^{\mu,\sigma^k} - \pi_i^{t+1}\rangle \\
    &- \eta\mu\gamma D_{\psi}(\pi_i^{\mu,\sigma^k}, \pi_i^t) + \eta\mu\beta D_{\psi}(\pi_i^{t+1}, \pi_i^t),
\end{align*}
and then:
\begin{align*}
    &D_{\psi}(\pi_i^{\mu,\sigma^k}, \pi_i^{t+1}) - (1 - \eta\mu \gamma) D_{\psi}(\pi_i^{\mu,\sigma^k}, \pi_i^t) + (1 - \eta\mu \beta) D_{\psi}(\pi_i^{t+1}, \pi_i^t) \\
    &\leq \eta \langle \nabla_{\pi_i}v_i(\pi_i^t, \pi_{-i}^t), \pi_i^{t+1} - \pi_i^{\mu,\sigma^k}\rangle + \eta\mu\langle \nabla_{\pi_i}G(\pi_i^{\mu,\sigma^k}, \sigma_i^k), \pi_i^{\mu,\sigma^k} - \pi_i^{t+1}\rangle \\
    &= \eta \langle \nabla_{\pi_i}v_i(\pi_i^{t+1}, \pi_{-i}^{t+1}), \pi_i^{t+1} - \pi_i^{\mu,\sigma^k}\rangle + \eta\mu\langle \nabla_{\pi_i}G(\pi_i^{\mu,\sigma^k}, \sigma_i^k), \pi_i^{\mu,\sigma^k} - \pi_i^{t+1}\rangle \\
    &+ \eta \langle \nabla_{\pi_i}v_i(\pi_i^t, \pi_{-i}^t) - \nabla_{\pi_i}v_i(\pi_i^{t+1}, \pi_{-i}^{t+1}), \pi_i^{t+1} - \pi_i^{\mu,\sigma^k}\rangle.
\end{align*}
Summing this inequality from $i=1$ to $N$ implies that:
\begin{align}
    &D_{\psi}(\pi^{\mu,\sigma^k}, \pi^{t+1}) - (1 - \eta\mu \gamma) D_{\psi}(\pi^{\mu,\sigma^k}, \pi^t) + (1 - \eta\mu \beta) D_{\psi}(\pi^{t+1}, \pi^t) \nonumber\\
    &\leq \eta \sum_{i=1}^N\langle \nabla_{\pi_i}v_i(\pi_i^{t+1}, \pi_{-i}^{t+1}), \pi_i^{t+1} - \pi_i^{\mu,\sigma^k}\rangle + \eta\mu \sum_{i=1}^N\langle \nabla_{\pi_i}G(\pi_i^{\mu,\sigma^k}, \sigma_i^k), \pi_i^{\mu,\sigma^k} - \pi_i^{t+1}\rangle \nonumber\\
    &+ \eta \sum_{i=1}^N\langle \nabla_{\pi_i}v_i(\pi_i^t, \pi_{-i}^t) - \nabla_{\pi_i}v_i(\pi_i^{t+1}, \pi_{-i}^{t+1}), \pi_i^{t+1} - \pi_i^{\mu,\sigma^k}\rangle \nonumber\\
    &\leq \eta \sum_{i=1}^N\langle \nabla_{\pi_i}v_i(\pi_i^{\mu,\sigma^k}, \pi_{-i}^{\mu,\sigma^k}) - \mu \nabla_{\pi_i}G(\pi_i^{\mu,\sigma^k}, \sigma_i^k), \pi_i^{t+1} - \pi_i^{\mu,\sigma^k}\rangle \nonumber\\
    &+ \eta \sum_{i=1}^N\langle \nabla_{\pi_i}v_i(\pi_i^t, \pi_{-i}^t) - \nabla_{\pi_i}v_i(\pi_i^{t+1}, \pi_{-i}^{t+1}), \pi_i^{t+1} - \pi_i^{\mu,\sigma^k}\rangle \nonumber\\
    &\leq \eta \sum_{i=1}^N\langle \nabla_{\pi_i}v_i(\pi_i^t, \pi_{-i}^t) - \nabla_{\pi_i}v_i(\pi_i^{t+1}, \pi_{-i}^{t+1}), \pi_i^{t+1} - \pi_i^{\mu,\sigma^k}\rangle,
    \label{eq:three_point_ineq}
\end{align}
where the second inequality follows from \eqref{eq:monotone}, and the third inequality follows from the first-order optimality condition for $\pi^{\mu,\sigma^k}$.

Here, from Young's inequality, we have for any $\lambda>0$:
\begin{align}
    &\sum_{i=1}^N\langle \nabla_{\pi_i}v_i(\pi_i^t, \pi_{-i}^t) - \nabla_{\pi_i}v_i(\pi_i^{t+1}, \pi_{-i}^{t+1}), \pi_i^{t+1} - \pi_i^{\mu,\sigma^k}\rangle \nonumber\\
    &= \sum_{i=1}^N\langle \nabla_{\pi_i}v_i(\pi_i^t, \pi_{-i}^t) - \nabla_{\pi_i}v_i(\pi_i^{t+1}, \pi_{-i}^{t+1}), \pi_i^{t+1} - \pi_i^t\rangle + \sum_{i=1}^N\langle \nabla_{\pi_i}v_i(\pi_i^t, \pi_{-i}^t) - \nabla_{\pi_i}v_i(\pi_i^{t+1}, \pi_{-i}^{t+1}), \pi_i^t - \pi_i^{\mu,\sigma^k}\rangle \nonumber\\
    &\leq \lambda \sum_{i=1}^N\|\nabla_{\pi_i}v_i(\pi_i^{t+1}, \pi_{-i}^{t+1}) - \nabla_{\pi_i}v_i(\pi_i^t, \pi_{-i}^t)\|^2 + \frac{1}{2\lambda}\|\pi^{t+1} - \pi^t\|^2 + \frac{1}{2\lambda}\|\pi^t - \pi^{\mu,\sigma^k}\|^2 \nonumber\\
    &\leq \left(L^2\lambda  + \frac{1}{2\lambda}\right)\|\pi^{t+1} - \pi^t\|^2 + \frac{1}{2\lambda}\|\pi^t - \pi^{\mu,\sigma^k}\|^2 \nonumber\\
    &\leq \frac{1}{\rho}\left(2L^2\lambda  + \frac{1}{\lambda}\right)D_{\psi}(\pi^{t+1}, \pi^t) + \frac{1}{\rho\lambda}D_{\psi}(\pi^{\mu,\sigma^k}, \pi^t).
    \label{eq:young_ineq}
\end{align}
where the second inequality follows from \eqref{eq:smooth}, and the fourth inequality follows from the strong convexity of $\psi$.

By combining \eqref{eq:three_point_ineq} and \eqref{eq:young_ineq}, we get:
\begin{align*}
    &D_{\psi}(\pi^{\mu,\sigma^k}, \pi^{t+1}) \leq \left(1 - \eta \left(\mu \gamma - \frac{1}{\rho\lambda}\right)\right) D_{\psi}(\pi^{\mu,\sigma^k}, \pi^t) - \left(1 - \eta\left(\mu \beta + \frac{2L^2\lambda}{\rho} + \frac{1}{\rho\lambda}\right)\right)D_{\psi}(\pi^{t+1}, \pi^t).
\end{align*}
By setting $\lambda=\frac{2}{\mu\gamma\rho}$,
\begin{align*}
    D_{\psi}(\pi^{\mu,\sigma^k}, \pi^{t+1}) &\leq \left(1 - \frac{\eta\mu\gamma}{2}\right) D_{\psi}(\pi^{\mu,\sigma^k}, \pi^t) - \left(1 - \eta\left(\frac{\mu(\gamma + 2\beta)}{2} + \frac{4L^2}{\mu\gamma\rho^2}\right)\right)D_{\psi}(\pi^{t+1}, \pi^t) \\
    &= \left(1 - \frac{\eta\mu\gamma}{2}\right) D_{\psi}(\pi^{\mu,\sigma^k}, \pi^t) - \left(1 - \eta\left(\frac{\mu^2\gamma\rho^2(\gamma + 2\beta) + 8L^2}{2\mu\gamma\rho^2}\right)\right)D_{\psi}(\pi^{t+1}, \pi^t).
\end{align*}
Thus, when $\eta \leq \frac{2\mu\gamma\rho^2}{\mu^2\gamma\rho^2(\gamma + 2\beta) + 8L^2} < \frac{2}{\mu\gamma}$, we have for any $t\in \{kT_{\sigma}, kT_{\sigma}+1, \cdots, (k+1)T_{\sigma} - 1\}$:
\begin{align*}
    D_{\psi}(\pi^{\mu,\sigma^k}, \pi^{t+1}) \leq D_{\psi}(\pi^{\mu,\sigma^k}, \pi^t)\left(1 - \frac{\eta\mu\gamma}{2}\right) \leq D_{\psi}(\pi^{\mu,\sigma^k}, \pi^{kT_{\sigma}})\left(1 - \frac{\eta \mu\gamma}{2}\right)^{t - kT_{\sigma}+1}.
\end{align*}
\end{proof}

\subsection{Proof of Lemma~\ref{lem:grad_breg_noise}}\label{app:prf_grad_breg_noise}
\begin{proof}[Proof of Lemma~\ref{lem:grad_breg_noise}]
We first decompose the inequality in the assumption as follows:
\begin{align}
    & D_{\psi}(\pi_i^{\mu,\sigma^k}, \pi_i^{t+1}) - D_{\psi}(\pi_i^{\mu,\sigma^k}, \pi_i^t) + D_{\psi}(\pi_i^{t+1}, \pi_i^t)  \nonumber \\
    & \le \eta_t\langle \nabla_{\pi_i}v_i(\pi_i^t, \pi_{-i}^t) - \mu\nabla_{\pi_i}G(\pi_i^t, \sigma_i^k) + \xi_i^t,  \pi_i^{t+1} - \pi_i^{\mu,\sigma^k}\rangle \nonumber \\
    & = \eta_t \langle \nabla_{\pi_i}v_i(\pi_i^t, \pi_{-i}^t), \pi_i^{t+1} - \pi_i^{\mu,\sigma^k}\rangle +  \eta_t \mu \langle \nabla_{\pi_i}G(\pi_i^t, \sigma_i^k), \pi_i^t - \pi_i^{t+1}\rangle \nonumber \\
    & \quad + \eta_t \mu \langle \nabla_{\pi_i}G(\pi_i^t, \sigma_i^k), \pi_i^{\mu,\sigma^k} - \pi_i^t\rangle +\langle \xi_i^t, \pi_i^{t+1} - \pi_i^{\mu,\sigma^k}\rangle. \label{eq:decom_noise}
\end{align}
By combining \eqref{eq:rel_smooth}, \eqref{eq:rel_strongly_convex} in Appendix \ref{sec:appx_convergence_rate_full}, and \eqref{eq:decom_noise},
\begin{align*}
    & D_{\psi}(\pi_i^{\mu,\sigma^k}, \pi_i^{t+1}) - D_{\psi}(\pi_i^{\mu,\sigma^k}, \pi_i^t) + D_{\psi}(\pi_i^{t+1}, \pi_i^t) \\
    & \le \eta_t \langle \nabla_{\pi_i}v_i(\pi_i^t, \pi_{-i}^t), \pi_i^{t+1} - \pi_i^{\mu,\sigma^k}\rangle + \eta_t\mu \langle \nabla_{\pi_i}G(\pi_i^{\mu,\sigma^k}, \sigma_i^k), \pi_i^{\mu,\sigma^k} - \pi_i^{t+1}\rangle + \eta_t\mu\beta D_{\psi}(\pi_i^{t+1}, \pi_i^t)\\
    & \quad  - \eta_t\mu\gamma D_{\psi}(\pi_i^{\mu,\sigma^k}, \pi_i^t) +\langle \xi_i^t, \pi_i^{t+1} - \pi_i^{\mu,\sigma^k}\rangle.
\end{align*}
Summing up these inequalities with respect to the player index,
\begin{align}
    & D_{\psi}(\pi^{\mu,\sigma^k}, \pi^{t+1}) - D_{\psi}(\pi^{\mu,\sigma^k}, \pi^t) + D_{\psi}(\pi^{t+1}, \pi^t) \nonumber
    \\
    & \le \eta_t \sum_{i=1}^N \langle \nabla_{\pi_i}v_i(\pi_i^t, \pi_{-i}^t) - \mu G(\pi_i^{\mu,\sigma^k}, \sigma_i^k), \pi_i^{t+1} - \pi_i^{\mu,\sigma^k}\rangle + \eta_t\mu\beta D_{\psi}(\pi^{t+1}, \pi^t)  \nonumber
    \\
    & \quad  - \eta_t\mu\gamma D_{\psi}(\pi^{\mu,\sigma^k}, \pi^t) + \sum_{i=1}^N \langle \xi_i^t, \pi_i^{t+1} - \pi_i^{\mu,\sigma^k}\rangle \nonumber
    \\
    & = \sum_{i=1}^N\eta_t \langle \nabla_{\pi_i}v_i(\pi_i^{t+1}, \pi_{-i}^{t+1}) - \mu\nabla_{\pi_i}G(\pi_i^{\mu,\sigma^k}, \sigma_i^k), \pi_i^{t+1} - \pi_i^{\mu,\sigma^k}\rangle - \eta_t\mu\gamma D_{\psi}(\pi^{\mu,\sigma^k}, \pi^t) + \eta_t\mu\beta D_{\psi}(\pi^{t+1}, \pi^t) \nonumber
    \\
    &+ \eta_t \sum_{i=1}^N\langle \nabla_{\pi_i}v_i(\pi_i^t, \pi_{-i}^t) - \nabla_{\pi_i}v_i(\pi_i^{t+1}, \pi_{-i}^{t+1}), \pi_i^{t+1} - \pi_i^{\mu,\sigma^k}\rangle + \eta_t \sum_{i=1}^N\langle \xi_i^t, \pi_i^{t+1} - \pi_i^{\mu,\sigma^k}\rangle \nonumber
    \\
    &\leq \sum_{i=1}^N\eta_t \langle \nabla_{\pi_i}v_i(\pi_i^{\mu,\sigma^k}, \pi_{-i}^{\mu,\sigma^k}) - \mu\nabla_{\pi_i}G(\pi_i^{\mu,\sigma^k}, \sigma_i^k), \pi_i^{t+1} - \pi_i^{\mu,\sigma^k}\rangle - \eta_t\mu\gamma D_{\psi}(\pi^{\mu,\sigma^k}, \pi^t) + \eta_t\mu\beta D_{\psi}(\pi^{t+1}, \pi^t) \nonumber\\
    &+ \eta_t \sum_{i=1}^N\langle \nabla_{\pi_i}v_i(\pi_i^t, \pi_{-i}^t) - \nabla_{\pi_i}v_i(\pi_i^{t+1}, \pi_{-i}^{t+1}), \pi_i^{t+1} - \pi_i^{\mu,\sigma^k}\rangle + \eta_t \sum_{i=1}^N\langle \xi_i^t, \pi_i^{t+1} - \pi_i^{\mu,\sigma^k}\rangle \nonumber\\
    &\leq - \eta_t\mu\gamma D_{\psi}(\pi^{\mu,\sigma^k}, \pi^t) + \eta_t\mu\beta D_{\psi}(\pi^{t+1}, \pi^t) \nonumber
    \\
    &+ \eta_t \sum_{i=1}^N\langle \nabla_{\pi_i}v_i(\pi_i^t, \pi_{-i}^t) - \nabla_{\pi_i}v_i(\pi_i^{t+1}, \pi_{-i}^{t+1}), \pi_i^{t+1} - \pi_i^{\mu,\sigma^k}\rangle + \eta_t \sum_{i=1}^N\langle \xi_i^t, \pi_i^{t+1} - \pi_i^{\mu,\sigma^k}\rangle,    \label{eq:three_point_ineq_noise}
\end{align}
where the second inequality follows \eqref{eq:monotone}, and the third inequality follows from the first-order optimality condition for $\pi^{\mu,\sigma^k}$.

By combining \eqref{eq:young_ineq} in Appendix \ref{sec:appx_convergence_rate_full} and \eqref{eq:three_point_ineq_noise}, we have for any $\lambda>0$:
\begin{align*}
    &D_{\psi}(\pi^{\mu,\sigma^k}, \pi^{t+1}) - D_{\psi}(\pi^{\mu,\sigma^k}, \pi^t) + D_{\psi}(\pi^{t+1}, \pi^t) \nonumber
    \\
    &\leq - \eta_t\mu\gamma D_{\psi}(\pi^{\mu,\sigma^k}, \pi^t) + \eta_t\mu\beta D_{\psi}(\pi^{t+1}, \pi^t) \nonumber
    \\
    &+ \frac{\eta_t }{\rho}\left(2L^2\lambda  + \frac{1}{\lambda}\right)D_{\psi}(\pi^{t+1}, \pi^t) + \frac{\eta_t}{\rho\lambda}D_{\psi}(\pi^{\mu,\sigma^k}, \pi^t) + \eta_t \sum_{i=1}^N\langle \xi_i^t, \pi_i^{t+1} - \pi_i^{\mu,\sigma^k}\rangle.
\end{align*}
By setting $\lambda = \frac{2}{\mu \gamma \rho}$,
\begin{align*}
    & D_{\psi}(\pi^{\mu,\sigma^k}, \pi^{t+1}) - D_{\psi}(\pi^{\mu,\sigma^k}, \pi^t) + D_{\psi}(\pi^{t+1}, \pi^t) 
    \\
    & \le - \frac{\eta_t\mu\gamma}{2} D_{\psi}(\pi^{\mu,\sigma^k}, \pi^t) + \eta_t\left( \frac{\mu^2 \gamma \rho^2 (\gamma + 2\beta) + 8 L^2}{2 \mu \gamma \rho^2} \right) D_{\psi}(\pi^{t+1}, \pi^t) + \eta_t \sum_{i=1}^N\langle \xi_i^t, \pi_i^{t+1} - \pi_i^{\mu,\sigma^k}\rangle.
\end{align*}
This concludes the proof.
\end{proof}

\subsection{Proof of Lemma~\ref{lem:noisy_tel_sum}} \label{app:prf_noisy_tel_sum}
\begin{proof}[Proof of Lemma~\ref{lem:noisy_tel_sum}]
Reforming the inequality in the assumption,
\begin{align*}
    D_{\psi}(\pi^{\mu,\sigma^k}, \pi^{t+1}) &\le (1 - \kappa \eta_t) D_\psi(\pi^{\mu,\sigma^k}, \pi^{t}) - (1- \eta_t \theta) D_\psi(\pi^{t+1}, \pi^{t}) + \eta_t \sum_{i=1}^N\langle \xi_i^t,  \pi_i^{t+1} -  \pi_i^{\mu,\sigma^k}\rangle
    \\
    & =  (1 - \kappa \eta_t) D_\psi(\pi^{\mu,\sigma^k}, \pi^{t}) - (1- \eta_t \theta) D_\psi(\pi^{t+1}, \pi^{t}) \\
    &\phantom{=} + \eta_t \sum_{i=1}^N\langle \xi_i^t,   \pi_i^{t} - \pi_i^{\mu,\sigma^k}\rangle + \eta_t \sum_{i=1}^N\langle \xi_i^t,  \pi_i^{t+1} -  \pi_i^{t}\rangle.
\end{align*}
By taking the expectation conditioned on $\mathcal{F}_t$ for both sides and using Assumption \ref{asm:noise} (a),
\begin{align*}
     \mathbb{E}[D_{\psi}(\pi^{\mu,\sigma^k}, \pi^{t+1})| \mathcal{F}_t] & \le  (1 - \kappa \eta_t) D_\psi(\pi^{\mu,\sigma^k}, \pi^{t})  - (1- \eta_t \theta) \mathbb{E}[ D_\psi(\pi^{t+1}, \pi^{t}) | \mathcal{F}_t] + \sum_{i=1}^N \mathbb{E}[  \langle \eta_t \xi_i^t,   \pi_i^{t+1} - \pi_i^{t}\rangle| \mathcal{F}_t] \\
     & =  (1 - \kappa \eta_t) D_\psi(\pi^{\mu,\sigma^k}, \pi^{t})  - (1- \eta_t \theta) \mathbb{E}[ D_\psi(\pi^{t+1}, \pi^{t}) | \mathcal{F}_t] \\
     &\phantom{=} + \sum_{i=1}^N \mathbb{E}\left[  \left\langle \frac{\eta_t \xi_i^t}{\sqrt{\rho(1 - \eta_t \theta)}},  \sqrt{\rho(1 - \eta_t \theta)}( \pi_i^{t+1} -\pi_i^{t})\right\rangle| \mathcal{F}_t\right] \\
     & \le (1 - \kappa \eta_t) D_\psi(\pi^{\mu,\sigma^k}, \pi^{t})  - (1- \eta_t \theta) \mathbb{E}[ D_\psi(\pi^{t+1}, \pi^{t}) | \mathcal{F}_t] \\
     &\phantom{=} + \frac{\eta_t^2}{2\rho(1 -\eta_t\theta) } \sum_{i=1}^N\mathbb{E}[\|\xi_i^t\|^2|\mathcal{F}_t] + \frac{\rho(1 -\eta_t\theta)}{2} \mathbb{E}[\|\pi^{t+1} -\pi^{t}\|^2| \mathcal{F}_t] \\
     & \leq  (1 - \kappa \eta_t) D_\psi(\pi^{\mu,\sigma^k}, \pi^{t}) + \frac{\eta_t^2}{2\rho(1 -\eta_t\theta) } \sum_{i=1}^N \mathbb{E}[\|\xi_i^t\|^2|\mathcal{F}_t] \\
     & \le \left(1 - \frac{1}{t-kT_{\sigma}+2\theta/\kappa} \right)D_\psi(\pi^{\mu,\sigma^k}, \pi^{t}) + \frac{\eta_t^2}{\rho} \sum_{i=1}^N \mathbb{E}[\|\xi_i^t\|^2|\mathcal{F}_t].
\end{align*}
Therefore, rearranging and taking the expectations,
\begin{align*}
    (t - kT_{\sigma} + 2\theta/\kappa)\mathbb{E}[D_{\psi}(\pi^{\mu,\sigma^k}, \pi^{t+1})] \le (t - kT_{\sigma} -1 + 2\theta/\kappa) \mathbb{E}[D_\psi(\pi^{\mu,\sigma^k}, \pi^{t}) ] + \frac{N C^2}{\rho \kappa (\kappa (t - kT_{\sigma}) + 2 \theta)}.
\end{align*}
Telescoping the sum, 
\begin{align*}
    & (t- kT_{\sigma} + 2\theta/\kappa)\mathbb{E}[D_{\psi}(\pi^{\mu,\sigma^k}, \pi^{t+1})] \le (2\theta/\kappa -1) \mathbb{E}[D_\psi(\pi^{\mu,\sigma^k}, \pi^{kT_{\sigma}})] + \frac{N C^2}{\rho \kappa }\sum_{s=kT_{\sigma}}^t \frac{1}{\kappa (s - kT_{\sigma}) + 2 \theta} \\
    \Longleftrightarrow \quad & \mathbb{E}[D_{\psi}(\pi^{\mu,\sigma^k}, \pi^{t+1})] \le \frac{2 \theta - \kappa}{\kappa (t - kT_{\sigma}) + 2 \theta} \mathbb{E}[D_\psi(\pi^{\mu,\sigma^k}, \pi^{kT_{\sigma}})] + \frac{N C^2 }{\rho(\kappa (t - kT_{\sigma}) + 2 \theta)}\sum_{s=0}^{t- kT_{\sigma}} \frac{1}{\kappa s + 2 \theta}.
\end{align*}

Here, we introduce the following lemma, whose proof is given in Appendix~\ref{app:prf_sum_eval}, for the evaluation of the sum.
\begin{lemma}\label{lem:sum_eval}
    For any $\kappa,\theta\geq 0$ and $t\ge0$, 
    \begin{align*}
        \sum_{s=0}^t \frac{1}{\kappa s + 2 \theta} \le \frac{1}{2 \theta} + \frac{1}{\kappa} \ln \left(\frac{\kappa}{2 \theta} t + 1\right).
    \end{align*}
\end{lemma}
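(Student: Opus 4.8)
The plan is to prove this standard sum-to-integral bound by isolating the first term of the sum and comparing the remaining tail to an integral. First I would peel off the $s=0$ term, which contributes exactly $\frac{1}{2\theta}$, leaving the tail $\sum_{s=1}^t \frac{1}{\kappa s + 2\theta}$ to be controlled. Throughout, the intended regime is $\kappa, \theta > 0$, so that each summand and the resulting logarithm are well-defined.

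The key observation is that the function $f(s) = \frac{1}{\kappa s + 2\theta}$ is monotonically decreasing in $s$. Consequently, for each integer $s \geq 1$ we have $f(s) \leq \int_{s-1}^{s} f(u)\,du$, since $f(u) \geq f(s)$ for every $u \in [s-1, s]$. Summing this inequality over $s = 1, \ldots, t$ concatenates the integration intervals and yields $\sum_{s=1}^t f(s) \leq \int_0^t f(u)\,du$.

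It then remains to evaluate the integral explicitly: $\int_0^t \frac{du}{\kappa u + 2\theta} = \frac{1}{\kappa}\big[\log(\kappa u + 2\theta)\big]_0^t = \frac{1}{\kappa}\log\!\left(\frac{\kappa t + 2\theta}{2\theta}\right) = \frac{1}{\kappa}\log\!\left(\frac{\kappa}{2\theta}t + 1\right)$. Combining this with the isolated $s=0$ term gives the claimed bound $\sum_{s=0}^t \frac{1}{\kappa s + 2\theta} \leq \frac{1}{2\theta} + \frac{1}{\kappa}\log\!\left(\frac{\kappa}{2\theta}t + 1\right)$.

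This argument involves no genuine obstacle: it is the elementary integral-comparison technique for bounding partial sums of a decreasing sequence. The only point requiring minor care is the direction of the comparison, namely that bounding the sum \emph{from above} requires integrating over intervals shifted to the left of each summation index (so that the decreasing integrand dominates the summand), which is precisely why peeling off the $s=0$ term first is convenient rather than attempting to absorb it into the integral.
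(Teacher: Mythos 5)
Your proof is correct and follows essentially the same route as the paper's: peel off the $s=0$ term, bound each remaining summand $\frac{1}{\kappa s + 2\theta}$ by $\int_{s-1}^{s}\frac{du}{\kappa u + 2\theta}$ using monotonicity, and evaluate the resulting integral $\int_0^t \frac{du}{\kappa u + 2\theta} = \frac{1}{\kappa}\log\left(\frac{\kappa}{2\theta}t+1\right)$. Your parenthetical observation that the stated hypothesis $\kappa,\theta \geq 0$ should really be $\kappa,\theta > 0$ for the expression to be well-defined is a fair (and correct) refinement of the lemma as written.
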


In summary, we obtain the following inequality:
\begin{align*}
    &\mathbb{E}[D_{\psi}(\pi^{\mu,\sigma^k}, \pi^{t+1})] \\
    &\le \frac{2 \theta - \kappa}{\kappa (t- kT_{\sigma}) + 2 \theta} \mathbb{E}[D_\psi(\pi^{\mu,\sigma^k}, \pi^{kT_{\sigma}})] + \frac{N C^2 }{\rho (\kappa (t- kT_{\sigma}) + 2 \theta)} \left( \frac{1}{\kappa }\ln \left(\frac{\kappa}{2 \theta} (t- kT_{\sigma}) + 1\right) +  \frac{1}{2 \theta}\right).
\end{align*}
This concludes the proof.
\end{proof}

\subsection{Proof of Lemma \ref{lem:exact_conv_bregman}}
\begin{proof}[Proof of Lemma \ref{lem:exact_conv_bregman}]
Recall that $G(\pi_i, \pi_i')=D_{\psi'}(\pi_i, \pi_i')$ for any $i\in [N]$ and $\pi_i, \pi_i'\in \mathcal{X}_i$.
By the first-order optimality condition for $\sigma^{k+1}$, we have for all $\pi^{\ast}\in \Pi^{\ast}$:
\begin{align*}
    \sum_{i=1}^N\langle \nabla_{\pi_i}v_i(\sigma_i^{k+1}, \sigma_{-i}^{k+1}) - \mu(\nabla \psi'(\sigma_i^{k+1}) - \nabla \psi'(\sigma_i^k)), \pi_i^{\ast} - \sigma_i^{k+1}\rangle \leq 0.
\end{align*}
Then,
\begin{align*}
    \sum_{i=1}^N \langle \nabla \psi'(\sigma_i^{k+1}) - \nabla\psi'(\sigma_i^k), \sigma_i^{k+1} - \pi_i^{\ast}\rangle \leq \frac{1}{\mu} \sum_{i=1}^N \langle \sigma_i^{k+1} - \pi_i^{\ast}, \nabla_{\pi_i}v_i(\sigma_i^{k+1}, \sigma_{-i}^{k+1})\rangle.
\end{align*}
Moreover, we have for any $\pi^{\ast}\in \Pi^{\ast}$:
\begin{align*}
    &D_{\psi'}(\pi^{\ast}, \sigma^{k+1}) - D_{\psi'}(\pi^{\ast}, \sigma^k) \\
    &= \sum_{i=1}^N \left(\psi'(\pi_i^{\ast}) - \psi'(\sigma_i^{k+1}) - \langle \nabla \psi'(\sigma_i^{k+1}), \pi_i^{\ast} - \sigma_i^{k+1}\rangle - \psi'(\pi_i^{\ast}) + \psi'(\sigma_i^k) + \langle \nabla \psi'(\sigma_i^k), \pi_i^{\ast} - \sigma_i^k\rangle\right) \\
    &= \sum_{i=1}^N \left(-\psi'(\sigma_i^{k+1}) + \psi'(\sigma_i^k) + \langle \nabla \psi'(\sigma_i^k), \sigma_i^{k+1} - \sigma_i^k\rangle - \langle \nabla\psi'(\sigma_i^{k+1}) - \nabla \psi'(\sigma_i^k), \pi_i^{\ast} - \sigma_i^{k+1}\rangle\right) \\
    &= -D_{\psi'}(\sigma^{k+1}, \sigma^k) + \sum_{i=1}^N \langle \nabla \psi'(\sigma_i^{k+1}) - \nabla \psi'(\sigma_i^k), \sigma_i^{k+1} - \pi_i^{\ast}\rangle.
\end{align*}
By combining these inequalities, we get for any $\pi^{\ast}\in \Pi^{\ast}$:
\begin{align*}
    D_{\psi'}(\pi^{\ast}, \sigma^{k+1}) - D_{\psi'}(\pi^{\ast}, \sigma^k) &\leq -D_{\psi'}(\sigma^{k+1}, \sigma^k) + \frac{1}{\mu}\sum_{i=1}^N \langle \sigma_i^{k+1} - \pi_i^{\ast}, \nabla_{\pi_i}v_i(\sigma_i^{k+1}, \sigma_{-i}^{k+1})\rangle \\
    &\leq -D_{\psi'}(\sigma^{k+1}, \sigma^k) + \frac{1}{\mu}\sum_{i=1}^N \langle \sigma_i^{k+1} - \pi_i^{\ast}, \nabla_{\pi_i}v_i(\pi_i^{\ast}, \pi_{-i}^{\ast})\rangle,
\end{align*}
where the second inequality follows from \eqref{eq:monotone}.
Since $\pi^{\ast}$ is the Nash equilibrium, from the first-order optimality condition, we get:
\begin{align*}
    \sum_{i=1}^N \langle \sigma_i^{k+1} - \pi_i^{\ast}, \nabla_{\pi_i}v_i(\pi_i^{\ast}, \pi_{-i}^{\ast})\rangle \leq 0.
\end{align*}
Thus, we have for $\pi^{\ast}\in \Pi^{\ast}$:
\begin{align*}
    D_{\psi'}(\pi^{\ast}, \sigma^{k+1}) - D_{\psi'}(\pi^{\ast}, \sigma^k) &\leq -D_{\psi'}(\sigma^{k+1}, \sigma^k).
\end{align*}
\end{proof}

\subsection{Proof of Lemma \ref{lem:stop_condition_bregman}}
\begin{proof}[Proof of Lemma \ref{lem:stop_condition_bregman}]
By using the first-order optimality condition for $\sigma_i^{k+1}$, we have for all $\pi \in \mathcal{X}$:
\begin{align*}
    \sum_{i=1}^N \langle \nabla_{\pi_i}v_i(\sigma_i^{k+1}, \sigma_{-i}^{k+1}) - \mu (\nabla_{\pi_i}\psi'(\sigma_i^{k+1}) - \nabla_{\pi_i}\psi'(\sigma_i^k)), \pi_i - \sigma_i^{k+1}\rangle \leq 0,
\end{align*}
and then
\begin{align*}
    \sum_{i=1}^N \langle \nabla_{\pi_i}v_i(\sigma_i^{k+1}, \sigma_{-i}^{k+1}), \pi_i - \sigma_i^{k+1}\rangle \leq \mu \sum_{i=1}^N \langle \nabla_{\pi_i}\psi'(\sigma_i^{k+1}) - \nabla_{\pi_i}\psi'(\sigma_i^k), \pi_i - \sigma_i^{k+1}\rangle.
\end{align*}
Under the assumption that $\sigma^{k+1}=\sigma^k$, we have for all $\pi \in \mathcal{X}$:
\begin{align*}
    \sum_{i=1}^N \langle \nabla_{\pi_i}v_i(\sigma_i^{k+1}, \sigma_{-i}^{k+1}), \pi_i - \sigma_i^{k+1}\rangle \leq 0.
\end{align*}
This is equivalent to the first-order optimality condition for $\pi^{\ast} \in \Pi^{\ast}$.
Therefore, $\sigma^{k+1}=\sigma^k$ is a Nash equilibrium of the underlying game.
\end{proof}

\subsection{Proof of Lemma \ref{lem:exact_conv_min_alpha}}
\begin{proof}[Proof of Lemma \ref{lem:exact_conv_min_alpha}]
First, we prove the first statement of the lemma by using the following lemmas:
\begin{lemma}
\label{lem:stop_condition_alpha}
Assume that $\sigma^{k+1}=\pi^{\mu, \sigma^k}$ for $k\geq 0$, and $G$ is one of the following divergence: 1) $\alpha$-divergence with $\alpha\in (0, 1)$; 2) R\'{e}nyi-divergence with $\alpha\in (0, 1)$; 3) reverse KL divergence.
If $\sigma^{k+1} = \sigma^k$, then $\sigma^k$ is a Nash equilibrium of the underlying game.
\end{lemma}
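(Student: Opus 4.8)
The plan is to mirror the structure of the proof of Lemma~\ref{lem:stop_condition_bregman}, combining the first-order optimality (stationarity) condition for the perturbed best response with the special structure of the three divergences on the simplex $\mathcal{X}_i = \Delta^{d_i}$. Since $\sigma^{k+1} = \pi^{\mu,\sigma^k}$ maximizes the perturbed payoff in \eqref{eq:perturbed_nash}, the first-order optimality condition gives, for all $\pi \in \mathcal{X}$,
$$\sum_{i=1}^N \langle \nabla_{\pi_i}v_i(\sigma^{k+1}) - \mu \nabla_{\pi_i}G(\sigma_i^{k+1}, \sigma_i^k), \pi_i - \sigma_i^{k+1}\rangle \leq 0.$$
Setting $\sigma^{k+1} = \sigma^k$, it then suffices to show that the perturbation term vanishes, i.e.\ $\sum_{i=1}^N \langle \nabla_{\pi_i}G(\sigma_i^k, \sigma_i^k), \pi_i - \sigma_i^k\rangle = 0$ for every $\pi \in \mathcal{X}$; the condition would reduce to $\sum_{i=1}^N \langle \nabla_{\pi_i}v_i(\sigma^k), \pi_i - \sigma_i^k\rangle \leq 0$, which (by concavity of each $v_i(\cdot,\pi_{-i})$ together with the product structure of $\mathcal{X}$) is precisely the first-order characterization of a Nash equilibrium of the original game.

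The crux is therefore a direct evaluation of $\nabla_{\pi_i}G(\sigma_i, \sigma_i)$ for each of the three divergences. For the reverse KL divergence $G(\pi_i,\sigma_i) = \sum_{j}\sigma_{ij}\ln(\sigma_{ij}/\pi_{ij})$ one has $\partial_{\pi_{ij}}G = -\sigma_{ij}/\pi_{ij}$, which equals $-1$ at $\pi_i = \sigma_i$. For the $\alpha$- and R\'enyi divergences, both built from the power sum $\sum_{j}\pi_{ij}^{\alpha}\sigma_{ij}^{1-\alpha}$, differentiating and evaluating at $\pi_i = \sigma_i$ (using $\sum_{j}\sigma_{ij} = 1$) likewise yields a coordinate-independent constant (e.g.\ $-1/(1-\alpha)$ and $\alpha/(\alpha-1)$, respectively). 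In every case $\nabla_{\pi_i}G(\sigma_i,\sigma_i) = c_i \mathbf{1}$ for some scalar $c_i$. Since $\pi_i, \sigma_i \in \Delta^{d_i}$, the difference $\pi_i - \sigma_i$ lies in the tangent space $\{v : \sum_j v_j = 0\}$, so $\langle c_i \mathbf{1}, \pi_i - \sigma_i\rangle = c_i\left(\sum_j \pi_{ij} - \sum_j \sigma_{ij}\right) = 0$, which establishes the required vanishing and completes the reduction.

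The main obstacle is conceptual rather than computational: unlike the Bregman case of Lemma~\ref{lem:stop_condition_bregman}, where $\nabla_{\pi_i}G(\sigma_i,\sigma_i)$ is \emph{literally} zero, here the gradient is a nonzero constant multiple of $\mathbf{1}$. The argument therefore hinges on recognizing that this constant vector lies in the normal cone of the simplex at $\sigma_i$ (equivalently, it is orthogonal to every feasible direction), so the perturbation contributes nothing to the variational inequality. Once this is identified, the three derivative computations are each a one-line evaluation, which I would present explicitly to cover the $\alpha$-divergence, R\'enyi-divergence, and reverse KL cases.
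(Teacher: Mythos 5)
Your proposal is correct and follows essentially the same route as the paper's proof: both start from the first-order optimality condition for $\sigma^{k+1}=\pi^{\mu,\sigma^k}$, show that the perturbation term $\sum_{i=1}^N\langle \nabla_{\pi_i}G(\sigma_i^{k+1},\sigma_i^k), \pi_i-\sigma_i^{k+1}\rangle$ vanishes when $\sigma^{k+1}=\sigma^k$ (the paper does this by the same cancellation you describe, since the ratio terms $\sigma_{ij}^k/\sigma_{ij}^{k+1}$ all become $1$ and the simplex coordinates sum to one), and conclude that the remaining variational inequality is the first-order characterization of a Nash equilibrium. Your explicit observation that $\nabla_{\pi_i}G(\sigma_i,\sigma_i)=c_i\mathbf{1}$ lies orthogonal to the simplex tangent space is just a cleaner packaging of the identical computation, and your constants $-1$, $-1/(1-\alpha)$, and $\alpha/(\alpha-1)$ are all correct.
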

\begin{lemma}
\label{lem:exact_conv_alpha}
Assume that $\sigma^{k+1}=\pi^{\mu, \sigma^k}$ for $k\geq 0$, and $G$ is one of the following divergence: 1) $\alpha$-divergence with $\alpha\in (0, 1)$; 2) R\'{e}nyi-divergence with $\alpha\in (0, 1)$; 3) reverse KL divergence.
Then, if $\sigma^{k+1}\neq \sigma^k$, we have for any $\pi^{\ast} \in \Pi^{\ast}$ and $k\geq 0$:
\begin{align*}
    \mathrm{KL}(\pi^{\ast}, \sigma^{k+1}) - \mathrm{KL}(\pi^{\ast}, \sigma^k) < 0.
\end{align*}
\end{lemma}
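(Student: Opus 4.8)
The plan is to mirror the structure of the Bregman case (Lemma~\ref{lem:exact_conv_bregman}), but to bridge the mismatch between the perturbation divergence $G$ and the progress measure $\mathrm{KL}$ using two elementary scalar inequalities. The starting observation is that on the simplex $\mathcal{X}_i=\Delta^{d_i}$, all three admissible choices of $G$ share a common gradient form: writing $w_{ij}=(\sigma^k_{ij}/\sigma^{k+1}_{ij})^{\beta}$ with $\beta\in(0,1]$, one has $\nabla_{\pi_{ij}}G(\sigma^{k+1}_i,\sigma^k_i)=-c_i w_{ij}$ for some $c_i>0$, where $\beta=1,\ c_i=1$ for reverse KL; $\beta=1-\alpha,\ c_i=\tfrac{1}{1-\alpha}$ for the $\alpha$-divergence; and $\beta=1-\alpha,\ c_i=\tfrac{\alpha}{1-\alpha}Z_i^{-1}$ for R\'{e}nyi, with $Z_i=\sum_l(\sigma^{k+1}_{il})^{\alpha}(\sigma^k_{il})^{1-\alpha}$. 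Any additive term constant in $j$ is irrelevant, since the affine/simplex constraint makes it orthogonal to every feasible direction $\pi_i-\sigma^{k+1}_i$. First I would record this gradient computation for each of the three divergences.

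Next, I would write the first-order optimality condition for $\sigma^{k+1}=\pi^{\mu,\sigma^k}$ defined by \eqref{eq:perturbed_nash}, test it against $\pi_i=\pi^{\ast}_i$, and sum over $i$, obtaining $\sum_i\langle\nabla_{\pi_i}v_i(\sigma^{k+1}),\pi^{\ast}_i-\sigma^{k+1}_i\rangle+\mu\sum_i c_i\sum_j w_{ij}(\pi^{\ast}_{ij}-\sigma^{k+1}_{ij})\le0$. Exactly as in the Bregman case, the payoff term is eliminated by combining monotonicity \eqref{eq:monotone} (evaluated at $\sigma^{k+1}$ and $\pi^{\ast}$) with the Nash first-order condition for $\pi^{\ast}$, which together give $\sum_i\langle\nabla_{\pi_i}v_i(\sigma^{k+1}),\pi^{\ast}_i-\sigma^{k+1}_i\rangle\ge0$. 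This leaves the purely-divergence inequality $\sum_i c_i\sum_j w_{ij}(\pi^{\ast}_{ij}-\sigma^{k+1}_{ij})\le0$.

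Then I would convert this into control of the weighted sum $\sum_{ij}\pi^{\ast}_{ij}w_{ij}$. The key estimate is $\sum_j w_{ij}\sigma^{k+1}_{ij}=\sum_j(\sigma^k_{ij})^{\beta}(\sigma^{k+1}_{ij})^{1-\beta}\le1$ by H\"{o}lder (equality only when $\sigma^k_i=\sigma^{k+1}_i$); for R\'{e}nyi this sum is exactly $Z_i\le1$, and using $Z_i^{-1}\ge1$ one still deduces $\sum_{ij}\pi^{\ast}_{ij}w_{ij}\le N$ after cancelling the common positive factor. Combined with the elementary inequality $\ln x\ge\beta^{-1}(1-x^{-\beta})$ (valid for $x>0$, equality iff $x=1$) applied to $x=\sigma^{k+1}_{ij}/\sigma^k_{ij}$, I obtain $\sum_{ij}\pi^{\ast}_{ij}\ln(\sigma^{k+1}_{ij}/\sigma^k_{ij})\ge\beta^{-1}(N-\sum_{ij}\pi^{\ast}_{ij}w_{ij})\ge0$, and since $\mathrm{KL}(\pi^{\ast},\sigma^{k+1})-\mathrm{KL}(\pi^{\ast},\sigma^k)=-\sum_{ij}\pi^{\ast}_{ij}\ln(\sigma^{k+1}_{ij}/\sigma^k_{ij})$, this yields the non-strict decrease.

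The main obstacle is upgrading $\le$ to the strict $<$ claimed in the lemma. Unlike the Bregman case, where the leftover term $-D_{\psi'}(\sigma^{k+1},\sigma^k)$ is automatically strictly negative whenever $\sigma^{k+1}\neq\sigma^k$, here the slack is $\pi^{\ast}$-weighted, so it degenerates if $\sigma^{k+1}$ and $\sigma^k$ differ only on coordinates outside the support of $\pi^{\ast}$. I would close this using the full support of $\pi^{\ast}$ (an interior equilibrium, as in the setting of Theorem~\ref{thm:exact_conv_alpha}): since $\pi^{\ast}_{ij}>0$ for all $i,j$ and $\sigma^{k+1}\neq\sigma^k$ forces $x=\sigma^{k+1}_{ij}/\sigma^k_{ij}\neq1$ at some coordinate, the equality case of $\ln x\ge\beta^{-1}(1-x^{-\beta})$ fails there, making the bound strict and hence $\mathrm{KL}(\pi^{\ast},\sigma^{k+1})<\mathrm{KL}(\pi^{\ast},\sigma^k)$. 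I expect the careful bookkeeping of the R\'{e}nyi normalization $Z_i$ and the verification that the interiority hypothesis genuinely rules out the degenerate support configuration to be the two places demanding the most care.
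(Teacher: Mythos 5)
Your main chain coincides with the paper's proof of Lemma~\ref{lem:exact_conv_alpha}: the gradient computations, the elimination of the payoff term via the perturbed first-order optimality condition for $\sigma^{k+1}=\pi^{\mu,\sigma^k}$ combined with monotonicity \eqref{eq:monotone} and the Nash first-order condition, and the H\"{o}lder bound $\sum_j w_{ij}\sigma^{k+1}_{ij}\le 1$ all appear there in the same roles. Your pointwise inequality $\ln x\ge\beta^{-1}(1-x^{-\beta})$ is just a coordinatewise repackaging of the paper's Jensen-plus-$\ln(1+x)\le x$ step (substitute $u=x^{-\beta}$ to recover $\ln u\le u-1$), so up to bookkeeping the non-strict decrease is obtained identically.

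The genuine gap is your treatment of strictness. The lemma asserts the strict decrease for \emph{every} $\pi^{\ast}\in\Pi^{\ast}$, and Theorem~\ref{thm:exact_conv_alpha} --- which assumes only that the initial slingshot $\sigma^0$ is interior, never that any equilibrium is --- applies the lemma to $\pi^{\star}=\argmin_{\pi^{\ast}\in\Pi^{\ast}}\mathrm{KL}(\pi^{\ast},\sigma^k)$, a point that may lie on the boundary of $\mathcal{X}$. Importing a full-support hypothesis on $\pi^{\ast}$ therefore proves a weaker lemma that no longer supports the downstream theorem. Moreover, the hypothesis is unnecessary for two of your three cases: you locate all of the slack in the $\pi^{\ast}$-weighted pointwise inequality, but for the $\alpha$- and R\'{e}nyi divergences ($\beta=1-\alpha\in(0,1)$) there is slack in the H\"{o}lder step itself, which is weighted by $\sigma^{k+1}$ --- interior, hence of full support --- rather than by $\pi^{\ast}$. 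Indeed, if $\sigma^{k+1}_i\neq\sigma^k_i$ for some player $i$, then $\sum_j(\sigma^{k+1}_{ij})^{\alpha}(\sigma^k_{ij})^{1-\alpha}<1$ strictly (the equality condition you yourself recorded but never deployed), and this propagates to $\sum_{i,j}\pi^{\ast}_{ij}w_{ij}<N$ --- for R\'{e}nyi via $Z_i^{-1}>1$ together with $\sum_j w_{ij}\pi^{\ast}_{ij}>0$ --- yielding the strict KL decrease with no condition on the support of $\pi^{\ast}$. Only for reverse KL ($\beta=1$) does the H\"{o}lder step degenerate to the identity $\sum_j w_{ij}\sigma^{k+1}_{ij}=1$; there the paper extracts strictness from the strict concavity of $\ln$ in the Jensen step, asserting that equality forces $\sigma^{k+1}=\sigma^k$. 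Your observation that Jensen equality in fact only forces the ratio $\sigma^k_{ij}/\sigma^{k+1}_{ij}$ to be constant on the support of $\pi^{\ast}$ is a fair criticism of that step, but the repair should stay within the lemma's hypotheses (e.g., analyzing the constant-ratio case directly: the optimality chain forces the constant $c\le 1$ and the KL difference equals $N\ln c$), not replace them with an interiority assumption the paper never grants.
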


From Lemma \ref{lem:stop_condition_alpha}, when $\sigma^k\in \mathcal{X} \setminus \Pi^{\ast}$, $\sigma^{k+1}\neq \sigma^k$ always holds.
Let us define $\pi^{\star} = \argmin_{\pi^{\ast}\in \Pi^{\ast}} \mathrm{KL}(\pi^{\ast}, \sigma^k)$.
Since $\sigma^{k+1}\neq \sigma^k$, from Lemma \ref{lem:exact_conv_alpha}, we have:
\begin{align*}
    \min_{\pi^{\ast}\in \Pi^{\ast}} \mathrm{KL}(\pi^{\ast}, \sigma^k) = \mathrm{KL}(\pi^{\star}, \sigma^k) > \mathrm{KL}(\pi^{\star}, \sigma^{k+1}) \geq \min_{\pi^{\ast}\in \Pi^{\ast}} \mathrm{KL}(\pi^{\ast}, \sigma^{k+1}).
\end{align*}
Therefore, if $\sigma^k\in \mathcal{X} \setminus \Pi^{\ast}$ then $\min_{\pi^{\ast}\in \Pi^{\ast}} \mathrm{KL}(\pi^{\ast}, \sigma^{k+1}) < \min_{\pi^{\ast}\in \Pi^{\ast}} \mathrm{KL}(\pi^{\ast}, \sigma^k)$.

Next, we prove the second statement of the lemma.
Assume that there exists $\sigma^k\in \Pi^{\ast}$ such that $\sigma^{k+1}\neq \sigma^k$.
In this case, we can apply Lemma \ref{lem:exact_conv_alpha}, hence we have $\mathrm{KL}(\pi^{\ast}, \sigma^{k+1}) < \mathrm{KL}(\pi^{\ast}, \sigma^k)$ for all $\pi^{\ast}\in \Pi^{\ast}$.
On the other hand, since $\sigma^k\in \Pi^{\ast}$, there exists a Nash equilibrium $\pi^{\star}\in \Pi^{\ast}$ such that $\mathrm{KL}(\pi^{\star}, \sigma^k)=0$.
Therefore, we have $\mathrm{KL}(\pi^{\star}, \sigma^{k+1}) < \mathrm{KL}(\pi^{\star}, \sigma^k) = 0$, which contradicts $\mathrm{KL}(\pi^{\star}, \sigma^{k+1}) \geq 0$.
Thus, if $\sigma^k\in \Pi^{\ast}$ then $\sigma^{k+1} = \sigma^k$.
\end{proof}

\subsection{Proof of Lemma \ref{lem:map_continuity}}
\begin{proof}[Proof of Lemma \ref{lem:map_continuity}]
For a given $\sigma\in \mathcal{X}$, let us consider that $\pi^t$ follows the following continuous-time dynamics:
\begin{align}
\label{eq:continuous_time_ftrl}
    \pi_i^t &= \argmax_{\pi_i\in \mathcal{X}_i}\left\{\langle y_i^t, \pi_i\rangle - \psi(\pi_i)\right\}, \\
    y_{ij}^t &= \int_0^t \left(\frac{\partial}{\pi_{ij}}v_i(\pi_i^s, \pi_{-i}^s) - \mu \frac{\partial}{\pi_{ij}}G(\pi_i^s, \sigma_i)\right) \nonumber.
\end{align}
We assume that $\psi(\pi_i)=\sum_{j=1}^{d_i} \pi_{ij}\ln \pi_{ij}$.
Note that this dynamics is the continuous-time version of APFTRL, so clearly $\pi^{\mu,\sigma}$ defined by \eqref{eq:perturbed_nash} is the stationary point of \eqref{eq:continuous_time_ftrl}.
We have for a given $\sigma'\in \mathcal{X}$ and the associated stationary point $\pi^{\mu,\sigma'}=F(\sigma')$:
\begin{align*}
    \frac{d}{dt}\mathrm{KL}(\pi^{\mu,\sigma'}, \pi^t) &= \frac{d}{dt}D_{\psi}(\pi^{\mu,\sigma'}, \pi^t) \\
    &= \sum_{i=1}^N \frac{d}{dt}\left(\psi(\pi_i^{\mu,\sigma'}) - \psi(\pi_i^t) - \langle y_i^t, \pi_i^{\mu,\sigma'} - \pi_i^t\rangle\right) \\
    &= \sum_{i=1}^N \frac{d}{dt}\left(\langle y_i^t, \pi_i^t\rangle - \psi(\pi_i^t) - \langle y_i^t, \pi_i^{\mu,\sigma'}\rangle + \psi(\pi_i^{\mu,\sigma'}) \right) \\
    &= \sum_{i=1}^N \frac{d}{dt}\left(\psi^{\ast}(y_i^t) - \langle y_i^t, \pi_i^{\mu,\sigma'}\rangle \right) \\
    &= \sum_{i=1}^N \left(\left\langle \frac{d}{dt}y_i^t, \nabla \psi^{\ast}(y_i^t)\right\rangle - \left\langle \frac{d}{dt}y_i^t, \pi_i^{\mu,\sigma'}\right\rangle \right) \\
    &= \sum_{i=1}^N \left\langle \frac{d}{dt}y_i^t, \nabla \psi^{\ast}(y_i^t) - \pi_i^{\mu,\sigma'}\right\rangle,
\end{align*}
where $\psi^{\ast}(y_i^t) = \max_{\pi_i\in \mathcal{X}_i}\left\{\langle y_i^t, \pi_i\rangle - \psi(\pi_i)\right\}$.
When $\psi(\pi_i)=\sum_{j=1}^{d_i} \pi_{ij}\ln \pi_{ij}$, we have
\begin{align*}
    \psi^{\ast}(y_i^t) &= \sum_{j=1}^{d_i}y_{ij}^t \frac{\exp(y_{ij}^t)}{\sum_{j'=1}^{d_i}\exp(y_{ij'}^t)} - \sum_{j=1}^{d_i}\frac{\exp(y_{ij}^t)}{\sum_{j'=1}^{d_i}\exp(y_{ij'}^t)}\ln \frac{\exp(y_{ij}^t)}{\sum_{j'=1}^{d_i}\exp(y_{ij'}^t)} \\
    &= \frac{\ln \sum_{j'=1}^{d_i}\exp(y_{ij'}^t)}{\sum_{j'=1}^{d_i}\exp(y_{ij'}^t)}\sum_{j=1}^{d_i}\exp(y_{ij}^t),
\end{align*}
and then,
\begin{align*}
    \frac{\partial}{\partial y_{ij}}\psi^{\ast}(y_i^t) &= \frac{\exp(y_{ij}^t)}{(\sum_{j'=1}^{d_i}\exp(y_{ij'}^t))^2}\sum_{j'=1}^{d_i}\exp(y_{ij'}^t) - \frac{\ln \sum_{j'=1}^{d_i}\exp(y_{ij'}^t)}{(\sum_{j'=1}^{d_i}\exp(y_{ij'}^t))^2}\exp(y_{ij}^t)\sum_{j'=1}^{d_i}\exp(y_{ij'}^t) \\
    &+ \frac{\ln \sum_{j'=1}^{d_i}\exp(y_{ij'}^t)}{\sum_{j'=1}^{d_i}\exp(y_{ij'}^t)}\exp(y_{ij}^t) \\
    &= \frac{\exp(y_{ij}^t)}{\sum_{j'=1}^{d_i}\exp(y_{ij'}^t)} = \pi_{ij}^t.
\end{align*}
Therefore, we get $\nabla \psi^{\ast}(y_i^t) = \pi_i^t$.
Hence,
\begin{align}
\label{eq:time_derivative_divergence}
    \frac{d}{dt}\mathrm{KL}(\pi^{\mu,\sigma'}, \pi^t) &= \sum_{i=1}^N \left\langle \frac{d}{dt}y_i^t, \pi_i^t - \pi_i^{\mu,\sigma'}\right\rangle \nonumber\\
    &= \sum_{i=1}^N \langle \nabla_{\pi_i}v_i(\pi_i^t, \pi_{-i}^t) - \mu \nabla_{\pi_i}G(\pi_i^t, \sigma_i), \pi_i^t - \pi_i^{\mu, \sigma'}\rangle \nonumber\\
    &= \sum_{i=1}^N \langle \nabla_{\pi_i}v_i(\pi_i^t, \pi_{-i}^t) - \mu \nabla_{\pi_i}G(\pi_i^t, \sigma_i'), \pi_i^t - \pi_i^{\mu, \sigma'}\rangle \nonumber\\
    &+  \mu\sum_{i=1}^N \langle \nabla_{\pi_i}G(\pi_i^t, \sigma_i') - \nabla_{\pi_i}G(\pi_i^t, \sigma_i), \pi_i^t - \pi_i^{\mu, \sigma'}\rangle.
\end{align}

The first term of \eqref{eq:time_derivative_divergence} can be written as:
\begin{align*}
    &\sum_{i=1}^N \langle \nabla_{\pi_i}v_i(\pi_i^t, \pi_{-i}^t) - \mu \nabla_{\pi_i}G(\pi_i^t, \sigma_i'), \pi_i^t - \pi_i^{\mu, \sigma'}\rangle \nonumber\\
    &\leq \sum_{i=1}^N \langle \nabla_{\pi_i}v_i(\pi_i^{\mu,\sigma'}, \pi_{-i}^{\mu,\sigma'}) - \mu \nabla_{\pi_i}G(\pi_i^t, \sigma_i'), \pi_i^t - \pi_i^{\mu, \sigma'}\rangle \nonumber\\
    &= \sum_{i=1}^N \langle \nabla_{\pi_i}v_i(\pi_i^{\mu,\sigma'}, \pi_{-i}^{\mu,\sigma'}), \pi_i^t - \pi_i^{\mu, \sigma'}\rangle - \mu \sum_{i=1}^N \langle \nabla_{\pi_i}G(\pi_i^t, \sigma_i'), \pi_i^t - \pi_i^{\mu, \sigma'}\rangle \nonumber\\
    &= \sum_{i=1}^N \langle \nabla_{\pi_i}v_i(\pi_i^{\mu,\sigma'}, \pi_{-i}^{\mu,\sigma'}) - \mu \nabla_{\pi_i}G(\pi_i^{\mu,\sigma'}, \sigma_i'), \pi_i^t - \pi_i^{\mu, \sigma'}\rangle \nonumber\\
    &- \mu \sum_{i=1}^N \langle \nabla_{\pi_i}G(\pi_i^t, \sigma_i') - \nabla_{\pi_i}G(\pi_i^{\mu,\sigma'}, \sigma_i'), \pi_i^t - \pi_i^{\mu, \sigma'}\rangle \nonumber\\
    &\leq -\mu \sum_{i=1}^N \langle \nabla_{\pi_i}G(\pi_i^t, \sigma_i') - \nabla_{\pi_i}G(\pi_i^{\mu,\sigma'}, \sigma_i'), \pi_i^t - \pi_i^{\mu, \sigma'}\rangle.
\end{align*}
where the first inequality follows from \eqref{eq:monotone}, and the second inequality follows from the first-order optimality condition for $\pi^{\mu,\sigma}$.
When $G$ is $\alpha$-divergence, $G$ has a diagonal Hessian is given as:
\begin{align*}
\nabla^2G(\pi_i, \sigma_i') = 
\begin{bmatrix}
\frac{\sigma_{i1}'}{(\pi_{i1})^{\alpha-2}} & & \\
& \ddots & \\
& & \frac{\sigma_{id_i}'}{(\pi_{id_i})^{\alpha-2}},
\end{bmatrix}
\end{align*}
and thus, its smallest eigenvalue is lower bounded by $\min_{j\in [d_i]}\sigma_{ij}'$.
Therefore,
\begin{align}
\label{eq:time_derivative_divergence_first_term}
    &\sum_{i=1}^N \langle \nabla_{\pi_i}v_i(\pi_i^t, \pi_{-i}^t) - \mu \nabla_{\pi_i}G(\pi_i^t, \sigma_i'), \pi_i^t - \pi_i^{\mu, \sigma'}\rangle \nonumber\\
    &\leq -\mu\sum_{i=1}^N\langle \nabla_{\pi_i}G(\pi_i^t, \sigma_i') - \nabla_{\pi_i}G(\pi_i^{\mu,\sigma'}, \sigma_i'), \pi_i^t - \pi_i^{\mu, \sigma'}\rangle \nonumber\\
    &\leq -\mu \left(\min_{i\in [N], ~j\in [d_i]}\sigma_{ij}'\right)\|\pi^t - \pi^{\mu,\sigma'}\|^2.
\end{align}

On the other hand, by compactness of $\mathcal{X}_i$, the second term of \eqref{eq:time_derivative_divergence} is written as:
\begin{align}
    &\mu\sum_{i=1}^N \langle \nabla_{\pi_i}G(\pi_i^t, \sigma_i') - \nabla_{\pi_i}G(\pi_i^t, \sigma_i), \pi_i^t - \pi_i^{\mu, \sigma'}\rangle \nonumber\\
    &\leq \mu \cdot \mathrm{diam}(\mathcal{X}) \sqrt{\sum_{i=1}^N \|\nabla_{\pi_i}G(\pi_i^t, \sigma_i') - \nabla_{\pi_i}G(\pi_i^t, \sigma_i)\|^2}
\label{eq:time_derivative_divergence_second_term}
\end{align}

By combining \eqref{eq:time_derivative_divergence}, \eqref{eq:time_derivative_divergence_first_term}, and \eqref{eq:time_derivative_divergence_second_term}, we get:
\begin{align*}
    &\frac{d}{dt}\mathrm{KL}(\pi^{\mu,\sigma'}, \pi^t) \\
    &\leq -\mu \left(\min_{i\in [N], ~j\in [d_i]}\sigma_{ij}'\right)\|\pi^t - \pi^{\mu,\sigma'}\|^2 +  \mu \cdot \mathrm{diam}(\mathcal{X}) \sqrt{\sum_{i=1}^N \|\nabla_{\pi_i}G(\pi_i^t, \sigma_i') - \nabla_{\pi_i}G(\pi_i^t, \sigma_i)\|^2}.
\end{align*}
Recall that $\pi^{\mu,\sigma}$ is the stationary point of \eqref{eq:continuous_time_ftrl}.
Therefore, by setting the start point as $\pi^0 = \pi^{\mu,\sigma}$, we have for all $t\geq 0, \pi^t=\pi^{\mu,\sigma}$.
In this case, for all $t\geq 0$, $\frac{d}{dt}\mathrm{KL}(\pi^{\mu,\sigma'}, \pi^t) = 0$ and then:
\begin{align*}
    &\left(\min_{i\in [N], ~j\in [d_i]}\sigma_{ij}'\right)\|\pi^{\mu,\sigma'} - \pi^{\mu,\sigma}\|^2 \leq \mathrm{diam}(\mathcal{X}) \sqrt{\sum_{i=1}^N \|\nabla_{\pi_i}G(\pi_i^t, \sigma_i') - \nabla_{\pi_i}G(\pi_i^t, \sigma_i)\|^2}.
\end{align*}
For a given $\varepsilon > 0$, let us define $\varepsilon' = \frac{\left(\min_{i\in [N], ~j\in [d_i]}\sigma_{ij}'\right)}{\mathrm{diam}(\mathcal{X})}\varepsilon^2$.
Since $\nabla_{\pi_i}G(\pi_i^{\mu,\sigma}, \cdot)$ is continuous on $\mathcal{X}_i$, for $\varepsilon'$, there exists $\delta > 0$ such that $\|\sigma' - \sigma\| < \delta \Rightarrow \sqrt{\sum_{i=1}^N\|\nabla_{\pi_i}G(\pi_i^{\mu,\sigma}, \sigma_i') - \nabla_{\pi_i}G(\pi_i^{\mu,\sigma}, \sigma_i)\|^2} < \varepsilon' = \frac{\left(\min_{i\in [N], ~j\in [d_i]}\sigma_{ij}'\right)}{\mathrm{diam}(\mathcal{X})}\varepsilon^2$.
Thus, for every $\varepsilon > 0$, there exists $\delta>0$ such that
\begin{align*}
    &\|\sigma' - \sigma\| < \delta \\
    &\Rightarrow \|\pi^{\mu,\sigma'} - \pi^{\mu,\sigma}\| \leq \sqrt{\frac{\mathrm{diam}(\mathcal{X})}{\left(\min_{i\in [N], ~j\in [d_i]}\sigma_{ij}'\right)}}\left(\sum_{i=1}^N \|\nabla_{\pi_i}G(\pi_i^{\mu,\sigma}, \sigma_i') - \nabla_{\pi_i}G(\pi_i^{\mu,\sigma}, \sigma_i)\|^2\right)^{\frac{1}{4}} < \varepsilon.
\end{align*}
This implies that $F(\cdot)$ is a continuous function on $\mathcal{X}$ when $G$ is $\alpha$-divergence.
A similar argument can be applied to R\'{e}nyi-divergence and reverse KL divergence.
\end{proof}

\subsection{Proof of Lemma \ref{lem:legendre}}
\begin{proof}[Proof of Lemma \ref{lem:legendre}]
First, from the definition of the Bregman divergence, for any $\pi_i\in \mathcal{X}_i$:
\begin{align}
\label{eq:bregman_div}
    D_{\psi}(\pi, T(y_i)) = \psi(\pi) - \psi(T(y_i)) - \langle \nabla \psi(T(y_i)), \pi_i - T(y_i)\rangle.
\end{align}
Recall that $\mathcal{X}_i$ satisfies $A\pi_i=b$ for all $\pi_i\in \mathcal{X}_i$ for a matrix $A\in \mathbb{R}^{k_i\times d_i}$ and $b\in \mathbb{R}^{k_i}$.
From the assumption for $\psi$ and the first-order optimality condition for the optimization problem of $\argmax_{x\in \mathcal{X}}\{\langle y_i, x\rangle - \psi(x)\}$, there exists $\nu\in \mathbb{R}^{k_i}$ such that
\begin{align*}
    y_i - \nabla\psi(T(y_i)) = A^{\top}\nu.
\end{align*}
Thus, we get:
\begin{align}
    \langle y_i, \pi_i - T(y_i)\rangle &= \langle A^{\top}\nu + \nabla \psi(T(y_i)), \pi_i - T(y_i)\rangle \nonumber\\
    &= \langle \nabla \psi(T(y_i)), \pi_i - T(y_i)\rangle + \nu^{\top}A\pi_i - \nu^{\top}AT(y_i) \nonumber\\
    &= \langle \nabla \psi(T(y_i)), \pi_i - T(y_i)\rangle + \nu^{\top}b - \nu^{\top}b \nonumber\\
    &= \langle \nabla \psi(T(y_i)), \pi_i - T(y_i)\rangle.
\label{eq:kkt}
\end{align}
By combining \eqref{eq:bregman_div} and \eqref{eq:kkt}, we have:
\begin{align*}
    D_{\psi}(\pi, T(y_i)) = \psi(\pi) - \psi(T(y_i)) - \langle y_i, \pi_i - T(y_i)\rangle.
\end{align*}
\end{proof}

\subsection{Proof of Lemma~\ref{lem:sum_eval}}\label{app:prf_sum_eval}
\begin{proof}[Proof of Lemma~\ref{lem:sum_eval}]
Since $\frac{1}{\kappa s + 2\theta} $ is a decreasing function for $s\ge0$, for all $s\ge1$,
$$
\frac{1}{\kappa s + 2\theta} \le \int_{s-1}^{s} \frac{1}{\kappa x + 2\theta} dx.
$$
Using this inequality, we can upper bound the sum as follows. 
\begin{align*}
    \sum_{s=0}^t \frac{1}{\kappa s + 2 \theta} & = \frac{1}{2 \theta} + \sum_{s=1}^t \frac{1}{\kappa s + 2 \theta}
    \\
    & \le \frac{1}{2 \theta} + \sum_{s=1}^t \int_{s-1}^{s} \frac{1}{\kappa x + 2\theta} dx
    \\
    & = \frac{1}{2 \theta} + \int_{0}^{t} \frac{1}{\kappa x + 2\theta} dx
    \\
    & = \frac{1}{2 \theta} + \frac{1}{\kappa} \int_{0}^{t} \frac{1}{ x + \frac{2\theta}{\kappa}} dx
    \\
    & = \frac{1}{2 \theta} + \frac{1}{\kappa} \int_{\frac{2\theta}{\kappa}}^{t + \frac{2\theta}{\kappa}} \frac{1}{ u} du \qquad \left(u = x +  \frac{2 \theta}{\kappa}\right)
    \\
    & = \frac{1}{2 \theta} + \frac{1}{\kappa} \ln \left(\frac{\kappa}{2 \theta} t + 1\right).
\end{align*}
This concludes the proof.
\end{proof}

\subsection{Proof of Lemma \ref{lem:stop_condition_alpha}}
\begin{proof}[Proof of Lemma \ref{lem:stop_condition_alpha}]
By using the first-order optimality condition for $\sigma_i^{k+1}$, we have for all $\pi \in \mathcal{X}$:
\begin{align*}
    \sum_{i=1}^N \langle \nabla_{\pi_i}v_i(\sigma_i^{k+1}, \sigma_{-i}^{k+1}) - \mu \nabla_{\pi_i}G(\sigma_i^{k+1}, \sigma_i^k), \pi_i - \sigma_i^{k+1}\rangle \leq 0,
\end{align*}
and then
\begin{align*}
    \sum_{i=1}^N \langle \nabla_{\pi_i}v_i(\sigma_i^{k+1}, \sigma_{-i}^{k+1}), \pi_i - \sigma_i^{k+1}\rangle \leq \mu \sum_{i=1}^N \langle \nabla_{\pi_i}G(\sigma_i^{k+1}, \sigma_i^k), \pi_i - \sigma_i^{k+1}\rangle.
\end{align*}

When $G$ is $\alpha$-divergence, we have for all $\pi\in \mathcal{X}$:
\begin{align*}
    \sum_{i=1}^N\langle \nabla_{\pi_i}G(\sigma_i^{k+1}, \sigma_i^k), \pi_i - \sigma_i^{k+1}\rangle &= \frac{1}{1-\alpha}\sum_{i=1}^N\sum_{j=1}^{d_i}(\sigma_{ij}^{k+1} - \pi_{ij})\left(\frac{\sigma_{ij}^k}{\sigma_{ij}^{k+1}}\right)^{1-\alpha} \\
    &= \frac{1}{1-\alpha}\sum_{i=1}^N\sum_{j=1}^{d_i}(\sigma_{ij}^{k+1} - \pi_{ij}) = 0,
\end{align*}
where we use the assumption that $\sigma^{k+1} = \sigma^k$ and $\mathcal{X}_i = \Delta^{d_i}$.
Similarly, when $G$ is R\'{e}nyi-divergence, we have for all $\pi\in \mathcal{X}$:
\begin{align*}
    \sum_{i=1}^N\langle \nabla_{\pi_i}G(\sigma_i^{k+1}, \sigma_i^k), \pi_i - \sigma_i^{k+1}\rangle &= \frac{\alpha}{1-\alpha}\sum_{i=1}^N\frac{1}{\sum_{j=1}^{d_i}(\sigma_{ij}^{k+1})^{\alpha}(\sigma_{ij}^k)^{1-\alpha}}\sum_{j=1}^{d_i}(\sigma_{ij}^{k+1} - \pi_{ij})\left(\frac{\sigma_{ij}^k}{\sigma_{ij}^{k+1}}\right)^{1-\alpha} \\
    &= \frac{\alpha}{1-\alpha}\sum_{i=1}^N\frac{1}{\sum_{j=1}^{d_i}(\sigma_{ij}^{k+1})^{\alpha}(\sigma_{ij}^k)^{1-\alpha}}\sum_{j=1}^{d_i}(\sigma_{ij}^{k+1} - \pi_{ij}) = 0.
\end{align*}
Furthermore, if $G$ is reverse KL divergence, we have for all $\pi\in \mathcal{X}$:
\begin{align*}
    \sum_{i=1}^N\langle \nabla_{\pi_i}G(\sigma_i^{k+1}, \sigma_i^k), \pi_i - \sigma_i^{k+1}\rangle &= \sum_{i=1}^N\sum_{j=1}^{d_i}(\sigma_{ij}^{k+1} - \pi_{ij})\frac{\sigma_{ij}^k}{\sigma_{ij}^{k+1}} \\
    &= \sum_{i=1}^N\sum_{j=1}^{d_i}(\sigma_{ij}^{k+1} - \pi_{ij}) = 0,
\end{align*}

Thus, we have for all $\pi \in \mathcal{X}$:
\begin{align*}
    \sum_{i=1}^N \langle \nabla_{\pi_i}v_i(\sigma_i^{k+1}, \sigma_{-i}^{k+1}), \pi_i - \sigma_i^{k+1}\rangle \leq 0.
\end{align*}
This is equivalent to the first-order optimality condition for $\pi^{\ast} \in \Pi^{\ast}$.
Therefore, $\sigma^{k+1}=\sigma^k$ is a Nash equilibrium of the underlying game.
\end{proof}

\subsection{Proof of Lemma \ref{lem:exact_conv_alpha}}
\begin{proof}[Proof of Lemma \ref{lem:exact_conv_alpha}]
First, we prove the statement for $\alpha$-divergence: $G(\sigma_i^{k+1}, \sigma_i^k) = \frac{1}{\alpha(1-\alpha)}\left(1 - \sum_{j=1}^{d_i}(\sigma_{ij}^{k+1})^{\alpha}(\sigma_{ij}^k)^{1-\alpha}\right)$.
From the definition of $\alpha$-divergence, we have for all $\pi^{\ast}\in \Pi^{\ast}$:
\begin{align*}
    \sum_{i=1}^N\langle \nabla_{\pi_i}G(\sigma_i^{k+1}, \sigma_i^k), \sigma_i^{k+1} - \pi_i^{\ast}\rangle &= \frac{1}{1-\alpha}\sum_{i=1}^N\sum_{j=1}^{d_i}(\pi_{ij}^{\ast} - \sigma_{ij}^{k+1})\left(\frac{\sigma_{ij}^k}{\sigma_{ij}^{k+1}}\right)^{1-\alpha} \\
    &= \frac{1}{1-\alpha}\sum_{i=1}^N\sum_{j=1}^{d_i}\pi_{ij}^{\ast}\left(\frac{\sigma_{ij}^k}{\sigma_{ij}^{k+1}}\right)^{1-\alpha} - \frac{1}{1-\alpha}\sum_{i=1}^N\sum_{j=1}^{d_i}(\sigma_{ij}^{k+1})^{\alpha}(\sigma_{ij}^k)^{1-\alpha}.
\end{align*}
Here, when $\alpha\in (0, 1)$, we get $\sum_{j=1}^{d_i}(\sigma_{ij}^{k+1})^{\alpha}(\sigma_{ij}^k)^{1-\alpha} \leq 1$.
Thus, 
\begin{align*}
    \sum_{i=1}^N\langle \nabla_{\pi_i}G(\sigma_i^{k+1}, \sigma_i^k), \sigma_i^{k+1} - \pi_i^{\ast}\rangle &\geq \frac{1}{1-\alpha}\sum_{i=1}^N\sum_{j=1}^{d_i}\pi_{ij}^{\ast}\left(\frac{\sigma_{ij}^k}{\sigma_{ij}^{k+1}}\right)^{1-\alpha} - \frac{N}{1-\alpha} \\
    &= \frac{N}{1-\alpha}\exp\left(\ln\left(\frac{1}{N}\sum_{i=1}^N\sum_{j=1}^{d_i}\pi_{ij}^{\ast}\left(\frac{\sigma_{ij}^k}{\sigma_{ij}^{k+1}}\right)^{1-\alpha}\right)\right) - \frac{N}{1-\alpha} \\
    &\geq \frac{N}{1-\alpha}\exp\left(\frac{1-\alpha}{N}\sum_{i=1}^N\sum_{j=1}^{d_i}\pi_{ij}^{\ast}\ln\frac{\sigma_{ij}^k}{\sigma_{ij}^{k+1}}\right) - \frac{N}{1-\alpha} \\
    &= \frac{N}{1-\alpha}\exp\left(\frac{1-\alpha}{N}\left(\mathrm{KL}(\pi^{\ast}, \sigma^{k+1}) - \mathrm{KL}(\pi^{\ast}, \sigma^k)\right)\right) - \frac{N}{1-\alpha},
\end{align*}
where the second inequality follows from the concavity of the $\ln(\cdot)$ function and Jensen's inequality for concave functions.
Since $\ln(\cdot)$ is strictly concave, the equality holds if and only if $\sigma^{k+1}=\sigma^k$.
Therefore, under the assumption that $\sigma^{k+1}\neq \sigma^k$, we get:
\begin{align}
    \mathrm{KL}(\pi^{\ast}, \sigma^{k+1}) - \mathrm{KL}(\pi^{\ast}, \sigma^k) &< \frac{N}{1-\alpha}\ln \left(1 + \frac{1-\alpha}{N}\sum_{i=1}^N\langle \nabla_{\pi_i}G(\sigma_i^{k+1}, \sigma_i^k), \sigma_i^{k+1} - \pi_i^{\ast}\rangle\right) \nonumber\\
    &\leq \sum_{i=1}^N\langle \nabla_{\pi_i}G(\sigma_i^{k+1}, \sigma_i^k), \sigma_i^{k+1} - \pi_i^{\ast}\rangle,
    \label{eq:kl_diff_alpha}
\end{align}
where the second inequality follows from $\ln(1+x) \leq x$ for $x> -1$.
From the first-order optimality condition for $\sigma_i^{k+1}$, we have for all $\pi^{\ast}\in \Pi^{\ast}$:
\begin{align*}
    \sum_{i=1}^N \langle \nabla_{\pi_i}v_i(\sigma_i^{k+1}, \sigma_{-i}^{k+1}) - \mu \nabla_{\pi_i}G(\sigma_i^{k+1}, \sigma_i^k), \pi_i^{\ast} - \sigma_i^{k+1}\rangle \leq 0.
\end{align*}
Then,
\begin{align}
    \label{eq:nabla_g_ineq}
    \sum_{i=1}^N \langle \nabla_{\pi_i}G(\sigma_i^{k+1}, \sigma_i^k), \sigma_i^{k+1} - \pi_i^{\ast}\rangle &\leq \frac{1}{\mu}\sum_{i=1}^N\langle \nabla_{\pi_i}v_i(\sigma_i^{k+1}, \sigma_{-i}^{k+1}), \sigma_i^{k+1} - \pi_i^{\ast}\rangle \nonumber\\
    &\leq \frac{1}{\mu}\sum_{i=1}^N\langle \nabla_{\pi_i}v_i(\pi_i^{\ast}, \pi_{-i}^{\ast}), \sigma_i^{k+1} - \pi_i^{\ast}\rangle,
\end{align}
where the second inequality follows from \eqref{eq:monotone}.
Moreover, since $\pi^{\ast}$ is the Nash equilibrium, from the first-order optimality condition, we get:
\begin{align}
    \label{eq:nash_optimality_condition}
    \sum_{i=1}^N \langle \sigma_i^{k+1} - \pi_i^{\ast}, \nabla_{\pi_i}v_i(\pi_i^{\ast}, \pi_{-i}^{\ast})\rangle \leq 0.
\end{align}
By combining \eqref{eq:kl_diff_alpha}, \eqref{eq:nabla_g_ineq}, and \eqref{eq:nash_optimality_condition}, if $\sigma^{k+1}\neq \sigma^k$, we have any $\pi^{\ast}\in \Pi^{\ast}$:
\begin{align*}
    \mathrm{KL}(\pi^{\ast}, \sigma^{k+1}) - \mathrm{KL}(\pi^{\ast}, \sigma^k) < 0.
\end{align*}

Next, we prove the statement for R\'{e}nyi-divergence: $G(\sigma_i^{k+1}, \sigma_i^k) = \frac{1}{\alpha - 1}\ln\left(\sum_{j=1}^{d_i}(\sigma_{ij}^{k+1})^{\alpha}(\sigma_{ij}^k)^{1-\alpha}\right)$.
We have for all $\pi^{\ast}\in \Pi^{\ast}$:
\begin{align*}
    \sum_{i=1}^N\langle \nabla_{\pi_i}G(\sigma_i^{k+1}, \sigma_i^k), \sigma_i^{k+1} - \pi_i^{\ast}\rangle &= \frac{\alpha}{1-\alpha}\sum_{i=1}^N\frac{1}{\sum_{j=1}^{d_i}(\sigma_{ij}^{k+1})^{\alpha}(\sigma_{ij}^k)^{1-\alpha}}\sum_{j=1}^{d_i}(\pi_{ij}^{\ast} - \sigma_{ij}^{k+1})\left(\frac{\sigma_{ij}^k}{\sigma_{ij}^{k+1}}\right)^{1-\alpha} \\
    &= \frac{\alpha}{1-\alpha}\sum_{i=1}^N\frac{1}{\sum_{j=1}^{d_i}(\sigma_{ij}^{k+1})^{\alpha}(\sigma_{ij}^k)^{1-\alpha}}\sum_{j=1}^{d_i}\pi_{ij}^{\ast}\left(\frac{\sigma_{ij}^k}{\sigma_{ij}^{k+1}}\right)^{1-\alpha} - \frac{N\alpha}{1-\alpha}.
\end{align*}
Again, by using $\sum_{j=1}^{d_i}(\sigma_{ij}^{k+1})^{\alpha}(\sigma_{ij}^k)^{1-\alpha} \leq 1$ when $\alpha \in (0,1)$, we get:
\begin{align*}
    \sum_{i=1}^N\langle \nabla_{\pi_i}G(\sigma_i^{k+1}, \sigma_i^k), \sigma_i^{k+1} - \pi_i^{\ast}\rangle &\geq \frac{\alpha}{1-\alpha}\sum_{i=1}^N\sum_{j=1}^{d_i}\pi_{ij}^{\ast}\left(\frac{\sigma_{ij}^k}{\sigma_{ij}^{k+1}}\right)^{1-\alpha} - \frac{N\alpha}{1-\alpha} \\
    &= \frac{N\alpha}{1-\alpha}\exp\left(\ln\left(\frac{1}{N}\sum_{i=1}^N\sum_{j=1}^{d_i}\pi_{ij}^{\ast}\left(\frac{\sigma_{ij}^k}{\sigma_{ij}^{k+1}}\right)^{1-\alpha}\right)\right) - \frac{N\alpha}{1-\alpha} \\
    &\geq \frac{N\alpha}{1-\alpha}\exp\left(\frac{1-\alpha}{N}\sum_{i=1}^N\sum_{j=1}^{d_i}\pi_{ij}^{\ast}\ln\frac{\sigma_{ij}^k}{\sigma_{ij}^{k+1}}\right) - \frac{N\alpha}{1-\alpha} \\
    &= \frac{N\alpha}{1-\alpha}\exp\left(\frac{1-\alpha}{N}\left(\mathrm{KL}(\pi^{\ast}, \sigma^{k+1}) - \mathrm{KL}(\pi^{\ast}, \sigma^k)\right)\right) - \frac{N\alpha}{1-\alpha},
\end{align*}
where the second inequality follows from Jensen's inequality for $\ln(\cdot)$ function.
Since $\ln(\cdot)$ is strictly concave, the equality holds if and only if $\sigma^{k+1}=\sigma^k$.
Therefore, under the assumption that $\sigma^{k+1}\neq \sigma^k$, we get:
\begin{align}
    \mathrm{KL}(\pi^{\ast}, \sigma^{k+1}) - \mathrm{KL}(\pi^{\ast}, \sigma^k) &< \frac{N\alpha}{1-\alpha}\ln \left(1 + \frac{1-\alpha}{N\alpha}\sum_{i=1}^N\langle \nabla_{\pi_i}G(\sigma_i^{k+1}, \sigma_i^k), \sigma_i^{k+1} - \pi_i^{\ast}\rangle\right) \nonumber\\
    &\leq \sum_{i=1}^N\langle \nabla_{\pi_i}G(\sigma_i^{k+1}, \sigma_i^k), \sigma_i^{k+1} - \pi_i^{\ast}\rangle,
    \label{eq:kl_diff_renyi}
\end{align}
where the second inequality follows from $\ln(1+x) \leq x$ for $x> -1$.
Thus, by combining \eqref{eq:nabla_g_ineq}, \eqref{eq:nash_optimality_condition}, and \eqref{eq:kl_diff_renyi}, if $\sigma^{k+1}\neq \sigma^k$, we have any $\pi^{\ast}\in \Pi^{\ast}$:
\begin{align*}
    \mathrm{KL}(\pi^{\ast}, \sigma^{k+1}) - \mathrm{KL}(\pi^{\ast}, \sigma^k) < 0.
\end{align*}

Finally, we prove the statement for reverse KL divergence: $G(\sigma_i^{k+1}, \sigma_i^k)=\sum_{j=1}^{d_i}\sigma_{ij}^k\ln \frac{\sigma_{ij}^k}{\sigma_{ij}^{k+1}}$.
We have for all $\pi^{\ast}\in \Pi^{\ast}$:
\begin{align*}
    \sum_{i=1}^N\langle \nabla_{\pi_i}G(\sigma_i^{k+1}, \sigma_i^k), \sigma_i^{k+1} - \pi_i^{\ast}\rangle &= \sum_{i=1}^N\sum_{j=1}^{d_i}(\pi_{ij}^{\ast} - \sigma_{ij}^{k+1})\frac{\sigma_{ij}^k}{\sigma_{ij}^{k+1}} \\
    &= \sum_{i=1}^N\sum_{j=1}^{d_i}\pi_{ij}^{\ast}\frac{\sigma_{ij}^k}{\sigma_{ij}^{k+1}} - N \\
    &= N\exp\left(\ln\left(\frac{1}{N}\sum_{i=1}^N\sum_{j=1}^{d_i}\pi_{ij}^{\ast}\frac{\sigma_{ij}^k}{\sigma_{ij}^{k+1}}\right)\right) - N \\
    &\geq N\exp\left(\frac{1}{N}\sum_{i=1}^N\sum_{j=1}^{d_i}\pi_{ij}^{\ast}\ln\frac{\sigma_{ij}^k}{\sigma_{ij}^{k+1}}\right) - N \\
    &= N\exp\left(\frac{1}{N}\left(\mathrm{KL}(\pi^{\ast}, \sigma^{k+1}) - \mathrm{KL}(\pi^{\ast}, \sigma^k)\right)\right) - N,
\end{align*}
where the inequality follows from Jensen's inequality for $\ln(\cdot)$ function.
Thus, under the assumption that $\sigma^{k+1}\neq \sigma^k$, we get:
\begin{align}
    \mathrm{KL}(\pi^{\ast}, \sigma^{k+1}) - \mathrm{KL}(\pi^{\ast}, \sigma^k) &< N\ln \left(1 + \frac{1}{N}\sum_{i=1}^N\langle \nabla_{\pi_i}G(\sigma_i^{k+1}, \sigma_i^k), \sigma_i^{k+1} - \pi_i^{\ast}\rangle\right) \nonumber\\
    &\leq \sum_{i=1}^N\langle \nabla_{\pi_i}G(\sigma_i^{k+1}, \sigma_i^k), \sigma_i^{k+1} - \pi_i^{\ast}\rangle,
    \label{eq:kl_diff_reverse_kl}
\end{align}
where the second inequality follows from $\ln(1+x) \leq x$ for $x> -1$.
Thus, by combining \eqref{eq:nabla_g_ineq}, \eqref{eq:nash_optimality_condition}, and \eqref{eq:kl_diff_reverse_kl}, if $\sigma^{k+1}\neq \sigma^k$, we have any $\pi^{\ast}\in \Pi^{\ast}$:
\begin{align*}
    \mathrm{KL}(\pi^{\ast}, \sigma^{k+1}) - \mathrm{KL}(\pi^{\ast}, \sigma^k) < 0.
\end{align*}
\end{proof}

\section{Additional Experimental Results and Details}
\label{sec:appx_experimental_detail}

\subsection{Payoff Matrix in Three-Player Biased RPS Game}

\begin{table}[h!]
    \centering
    \caption{Three-Player Biased RPS game matrix.}
    \label{tab:biased-rps}
    \begin{tabular}{cccc}
    \hline
      & R  & P  & S  \\ \hline
    R & $0$  & $-1/3$  & $1$ \\
    P & $1/3$ & $0$  & $-1/3$ \\
    S & $-1$  & $1/3$ & $0$  \\ \hline
    \end{tabular}
\end{table}

\subsection{Experimental Setting for Section \ref{sec:experiments}}
The experiments in Section \ref{sec:experiments} are conducted in Ubuntu 20.04.2 LTS with Intel(R) Core(TM) i9-10850K CPU @ 3.60GHz and 64GB RAM.

In the full feedback setting, we use a constant learning rate $\eta=0.1$ for MWU and OMWU, and APMD in all three games.
For APMD, we set $\mu=0.1$ and $T_{\sigma}=100$ for KL and reverse KL divergence perturbation, and set $\mu=0.1$ and $T_{\sigma}=20$ for squared $\ell^2$-distance perturbation.
As an exception, $\eta=0.01$, $\mu=1.0$, and $T_{\sigma}=200$ are used for APMD with squared $\ell^2$-distance perturbation in the random payoff games with $50$ actions.

For the noisy feedback setting, we use the lower learning rate $\eta=0.01$ for all algorithms, except APMD with squared $\ell^2$-distance perturbation for the random payoff games with $50$ actions.
We update the slingshot strategy profile $\sigma^k$ every $T_{\sigma}=1000$ iterations in APMD with KL and reverse KL divergence perturbation, and update it every $T_{\sigma}=200$ iterations in APMD with squared $\ell^2$-distance perturbation.
For APMD with $\ell^2$-distance perturbation in the random payoff games with $50$ actions, we set $\eta=0.001$ and $T_{\sigma}=2000$.

\subsection{Additional Experiments}
\label{sec:appx_additional_experiments}
In this section, we compare the performance of APMD and APFTRL to MWU, OMWU, and optimistic gradient descent (OGD) \citep{daskalakis2017training,wei2020linear} in the full/noisy feedback setting.
The parameter settings for MWU, OMWU, and APMD are the same as Section \ref{sec:experiments}.
For APFTRL, we use the squared $\ell^2$-distance and the parameter is the same as APMD with squared $\ell^2$-distance perturbation.
For OGD, we use the same learning rate as APMD with squared $\ell^2$-distance perturbation.

Figure \ref{fig:exploitability_full_w_ogda} shows the logarithm of the gap function for $\pi^t$ averaged over $100$ instances with full feedback.
We observe that APMD and APFTRL with squared $\ell^2$-distance perturbation exhibit competitive performance to OGD.
The experimental results in the noisy feedback setting are presented in Figure \ref{fig:exploitability_noisy_w_ogda}.
Surprisingly, in the noisy feedback setting, all APMD-based algorithms and the APFTRL-based algorithm exhibit overwhelmingly superior performance to OGD in all three games.

\begin{figure}[ht!]
    \centering

    \includegraphics[width=1.0\textwidth]{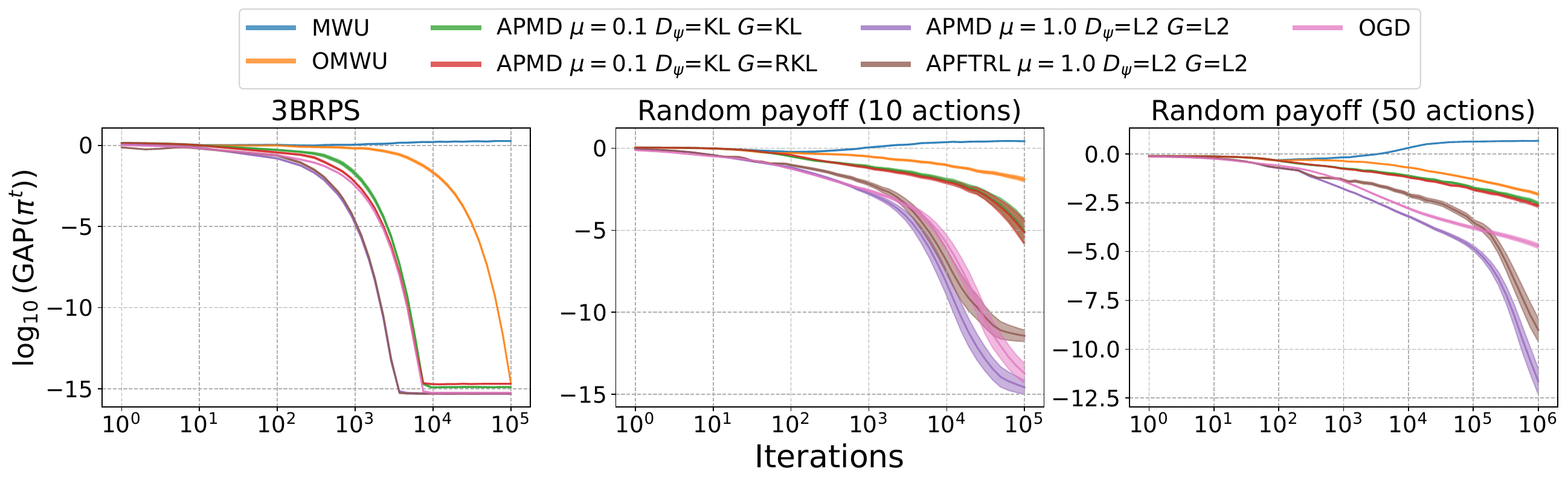}
    \caption{
    The gap function for $\pi^t$ for APMD, APFTRL, MWU, OMWU, and OGD with full feedback.
    The shaded area represents the standard errors. Note that the KL divergence, reverse KL divergence, and squared $\ell^2$-distance are abbreviated to KL, RKL, and L2, respectively.
    }
    \label{fig:exploitability_full_w_ogda}
\end{figure}

\begin{figure}[h!]
    \centering
    \includegraphics[width=1.0\textwidth]{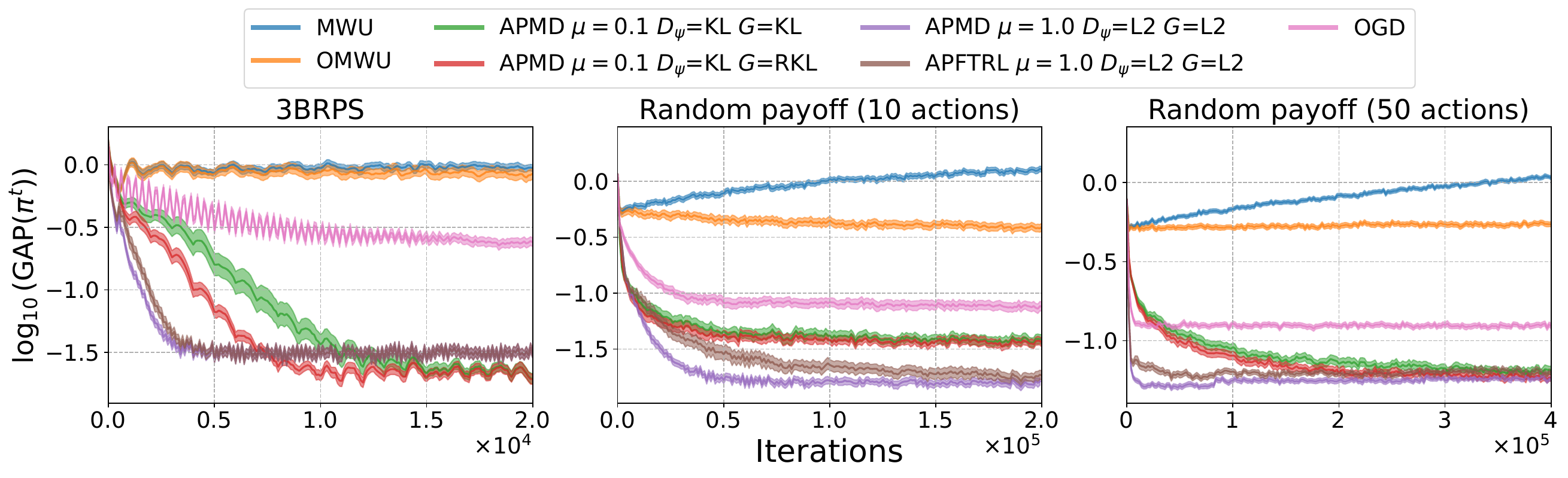}
    \caption{
    The gap function for $\pi^t$ for APMD, APFTRL, MWU, OMWU, and OGD with noisy feedback.
    The shaded area represents the standard errors.
    }
    \label{fig:exploitability_noisy_w_ogda}
\end{figure}

\subsection{Comparison with the Averaged Strategies of No-Regret Learning Algorithms}
This section compares the last-iterate strategies $\pi^t$ of APMD and APFTRL with the average of strategies $\frac{1}{t}\sum_{\tau=1}^t\pi^{\tau}$ of MWU, regret matching (RM) \citep{hart2000simple}, and regret matching plus (RM+) \citep{tammelin2014solving}.
The parameter settings for MWU, APMD, and APFTRL, as used in Section \ref{sec:appx_additional_experiments}, are maintained. 
Figure \ref{fig:exploitability_full_with_average} illustrates the logarithm of the gap function averaged over $100$ instances with full feedback.
The results show that the last-iterate strategies of APMD and APFTRL squared $\ell^2$-distance perturbation exhibit a lower gap than the averaged strategies of MWU, RM, and RM+.

\begin{figure}[h!]
    \centering
    \includegraphics[width=1.0\textwidth]{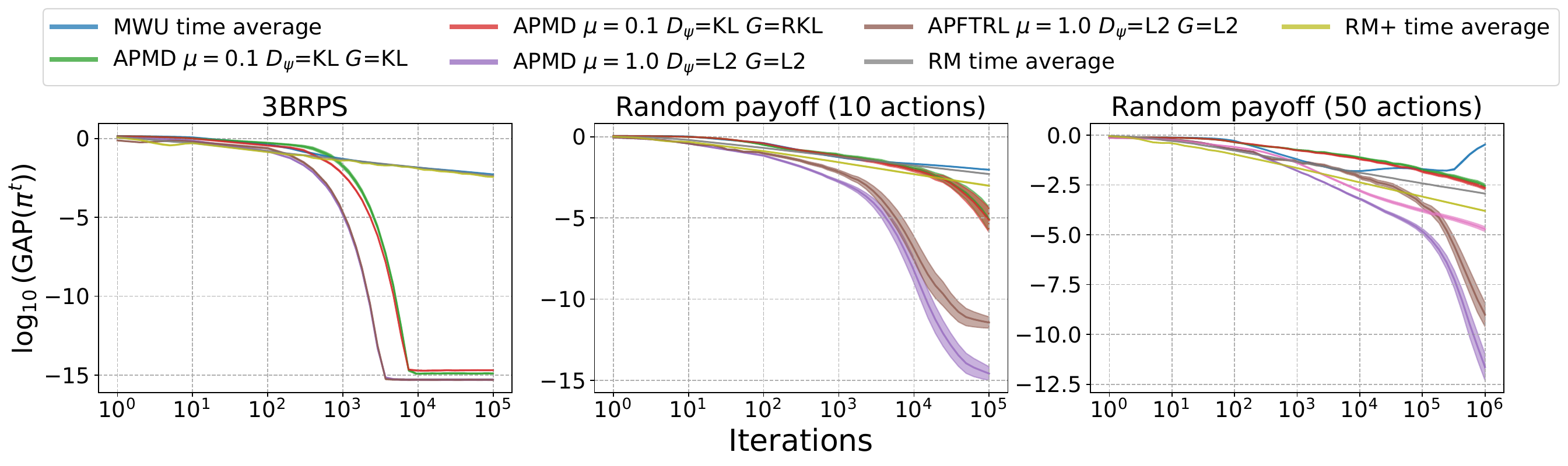}
    \caption{
    Comparison between the gap function of the last-iterate strategy profile of APMD, APFTRL, and the averaged strategy profile of MWU, RM, and RM+ with full feedback.
    The shaded area represents the standard errors.
    }
    \label{fig:exploitability_full_with_average}
\end{figure}

\subsection{Sensitivity Analysis of Update Interval for the Slingshot Strategy}
In this section, we investigate the performance when changing the update interval of the slingshot strategy.
We vary the $T_{\sigma}$ of APMD with L2 perturbation in 3BRPS with full feedback to be $T_{\sigma} \in \{10, 100, 1000, 10000\}$, and with noisy feedback to be $T_{\sigma} \in \{10, 100, 1000, 10000\}$.
All other parameters are the same as in Section ~\ref{sec:experiments}.
Figure ~\ref{fig:exploitability_for_t_sigma} shows the logarithm of the gap function for $\pi^t$ averaged over $100$ instances in 3BRPS with full/noisy feedback.
We observe that the smaller the $T_{\sigma}$, the faster $\pi^t$ converges.
However, if $T_{\sigma}$ is too small, $\pi^t$ does not converge (See $T_{\sigma} = 10$ with full feedback, and $T_{\sigma}=100$ with noisy feedback in Figure ~\ref{fig:exploitability_for_t_sigma}).

\begin{figure*}[h!]
    \centering
    \begin{minipage}[t]{0.49\textwidth}
        \centering
        \includegraphics[width=\linewidth]{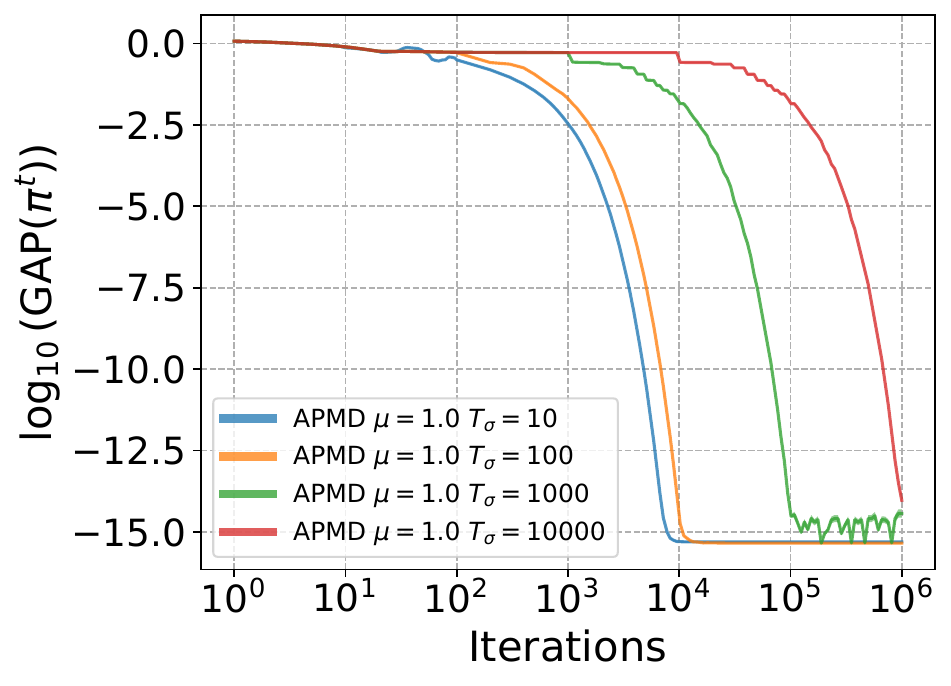}
        \subcaption{Full feedback}
    \end{minipage}
    \begin{minipage}[t]{0.49\textwidth}
        \centering
        \includegraphics[width=\linewidth]{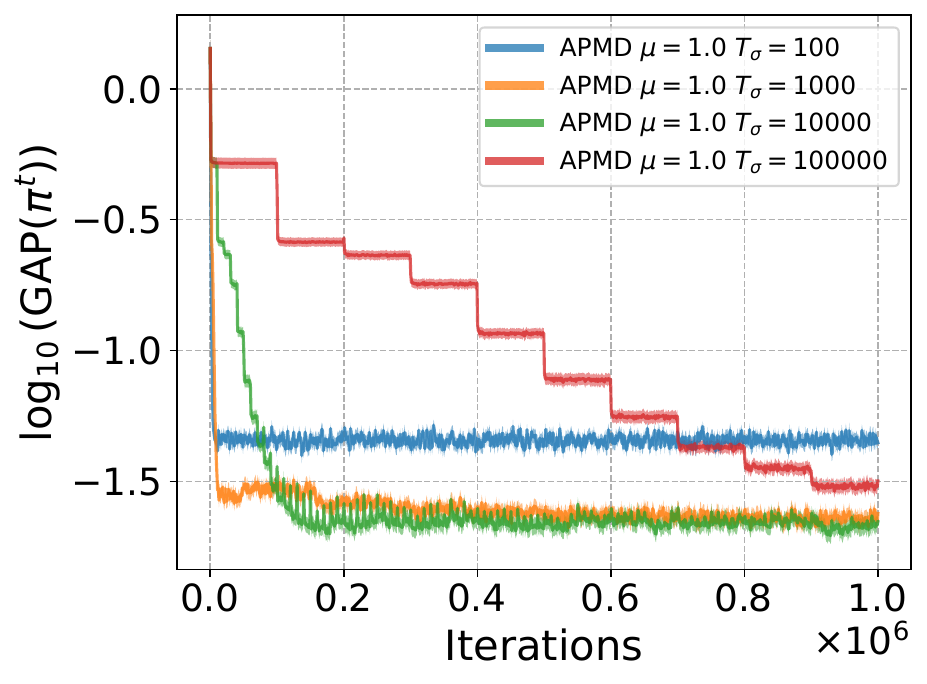}
        \subcaption{Noisy feedback}
    \end{minipage}
    \caption{The gap function for $\pi^t$ for APMD with varying $T_{\sigma}$ in 3BRPS with full/noisy feedback.
    The shaded area represents the standard errors.
    }
    \label{fig:exploitability_for_t_sigma}
\end{figure*}

\end{document}